\newcommand{\ifarx}[2]{\ifthenelse{\boolean{arxiv}}{#1}{#2}}
\renewcommand\footnotetextcopyrightpermission[1]{} %
\newcommand{\by}[1]{\text{/$\mspace{-2mu}$/~#1}}		       %
\providecommand{\catname}{\mathbf} 
\providecommand{\clsname}{\mathcal}
\providecommand{\oname}[1]{{\mathop{\mathsf{#1}}\xspace}}
\def\defcatname#1{\expandafter\def\csname B#1\endcsname{\catname{#1}}}
\def\defcatnames#1{\ifx#1\defcatnames\else\defcatname#1\expandafter\defcatnames\fi}
\def\defclsname#1{\expandafter\def\csname C#1\endcsname{\clsname{#1}}}
\def\defclsnames#1{\ifx#1\defclsnames\else\defclsname#1\expandafter\defclsnames\fi}
\def\defbbname#1{\expandafter\def\csname BB#1\endcsname{{\mathbb{#1}}}}
\def\defbbnames#1{\ifx#1\defbbnames\else\defbbname#1\expandafter\defbbnames\fi}
\def\Set{\catname{Set}}
\providecommand{\argument}{-}
\DeclareOldFontCommand{\bf}{\normalfont\bfseries}{\mathbf}
\providecommand{\id}{\mathsf{id}}
\providecommand{\op}{\mathsf{op}}
\providecommand{\comp}{\mathbin{\circ}}
\providecommand{\xto}[1]{\,\xrightarrow{#1}\,}
\providecommand{\To}{\mathrel{\Rightarrow}}			           %
\providecommand{\dar}{\kern-1.2pt\operatorname{\downarrow}}	
\providecommand{\uar}{\kern-1.2pt\operatorname{\uparrow}}
\providecommand{\bigand}{\bigwedge}
\providecommand{\fst}{\oname{fst}}
\providecommand{\snd}{\oname{snd}}
\providecommand{\brks}[1]{\langle #1\rangle}
\providecommand{\inl}{\oname{inl}}
\providecommand{\inr}{\oname{inr}}
\DeclareSymbolFont{Symbols}{OMS}{cmsy}{m}{n}
\DeclareMathSymbol{\iobj}{\mathord}{Symbols}{"3B}
\providecommand{\by}[1]{\text{/\!\!/~#1}}			             %
\providecommand{\pacman}[1]{}					                     %
\newcommand{\undefine}[1]{\let #1\relax}					                       %
\providecommand{\mone}{{\text{\kern.5pt\rmfamily-}\mathsf{\kern-.5pt1}}}
\def\mfix#1{\oname{#1}\@ifnextchar\bgroup\@mfix{}}	       %
\def\@mfix#1{#1\@ifnextchar\bgroup\mfix{}}			           %
\providecommand{\case}[3]{\mfix{case}{\mathbin{}#1}{of}{#2}{\kern-1pt;}{\mathbin{}#3}}
\DeclareMathSymbol{\mathinvertedexclamationmark}{\mathord}{operators}{'074}
\DeclareMathSymbol{\mathexclamationmark}{\mathord}{operators}{'041}
\newcommand{\raisedmathinvertedexclamationmark}{%
  \mathord{\mathpalette\raised@mathinvertedexclamationmark\relax}%
}
\newcommand{\raised@mathinvertedexclamationmark}[2]{%
  \raisebox{\depth}{$\m@th#1\mathinvertedexclamationmark$}%
}
\newcommand{\wt}{\widetilde}
\newcommand{\Pt}{V}
\newcommand{\R}{\mathcal{R}}
\newcommand{\st}{\mathsf{st}}
\newcommand{\Pow}{\mathcal{P}}
\newcommand{\xTo}{\xRightarrow}
\newcommand{\smc}{\mathbin{;}}
\newcommand{\qand}{\quad\text{and}\quad}
\newcommand{\dleq}[1]{{\rotatebox{#1}{$\preceq$}}}
\newcommand{\dgeq}[1]{{\rotatebox{#1}{$\succeq$}}}
\newcommand{\under}[1]{\lvert#1\rvert}
\renewcommand{\L}{\mathcal{L}}
\newcommand{\Sigmas}{\Sigma^{\star}}
\newcommand{\ar}{\mathsf{ar}}
\newcommand{\seq}{\subseteq}
\newcommand{\ol}{\overline}
\newcommand{\outl}{\mathsf{l}}
\newcommand{\outr}{\mathsf{r}}
\newcommand{\dbtilde}[1]{\widetilde{\raisebox{0pt}[0.85\height]{$\widetilde{#1}$}}}
\providecommand{\C}{}
\providecommand{\D}{}
\renewcommand{\C}{{\mathbb{C}}}
\renewcommand{\D}{{\mathbb{D}}}
\renewcommand{\id}{{\mathsf{id}}}
\newcommand{\Rel}{\mathbf{Rel}}
\newcommand{\f}{\oname{f}}
\newcommand{\takeout}[1]{\empty}
\newcommand{\ini}{\iota}
\renewcommand{\rho}{\varrho}
\newcommand{\opp}{\mathsf{op}}
\newcommand{\pullbackangle}[2][]{\arrow[phantom,to path={
                     -- ($ (\tikztostart)!1cm!#2:([xshift=8cm]\tikztostart) $)
                        node[anchor=west,pos=0.0,rotate=#2,
                        inner xsep = 0]
                        {\begin{tikzpicture}[minimum
                        height=1mm,baseline=0,#1]
    \draw[-] (0,0) -- (.5em,.5em) -- (0,1em);
                        \end{tikzpicture}}}]{}}
\newsavebox{\@brx}
\newcommand{\llangle}[1][]{\savebox{\@brx}{\(\m@th{#1\langle}\)}%
  \mathopen{\copy\@brx\kern-0.5\wd\@brx\usebox{\@brx}}}
\newcommand{\rrangle}[1][]{\savebox{\@brx}{\(\m@th{#1\rangle}\)}%
  \mathclose{\copy\@brx\kern-0.5\wd\@brx\usebox{\@brx}}}
\renewcommand{\comp}{\cdot}
\renewcommand{\c}{\colon}
\newcommand{\xra}[1]{\mathrel{\raisebox{-1.15pt}{$\xrightarrow{\;\smash{\raisebox{2.5pt}{\makebox(3,0)[t]{\scriptsize $#1$}}\;}}$}}}
\renewcommand{\xto}{\xra}
\newcommand{\monoto}{\rightarrowtail}
\newcommand{\subto}{\hookrightarrow}
\newcommand{\fset}{{\mathbb{F}}}
\newcommand{\gcat}{\mathbb{C}}
\newcommand{\mS}{{\mu\Sigma}}
\renewcommand{\epsilon}{\varepsilon}
\newcommand*\xbar[1]{%
  \kern.2em\hbox{%
    \vbox{%
      \hrule height 0.5pt %
      \kern0.5ex%
      \hbox{%
        \kern-0.2em%
        \ensuremath{#1}%
        \kern-0.4em%
      }%
    }%
  }\kern.4em %
} 
\newcommand{\monto}{\@ifstar{\@mtolifted}{\@mto}}
\newcommand{\@mto}{\multimapdot}
\newcommand{\@mtolifted}{\mathbin{\xbar{\multimapdot}}}
\newcommand{\bool}{\mathsf{bool}}
\newcommand{\app}[2]{\,}
	\newcommand{\pushright}[1]{\ifmeasuring@#1\else\omit\hfill$\displaystyle#1$\fi\ignorespaces}
	\newcommand{\pushleft}[1]{\ifmeasuring@#1\else\omit$\displaystyle#1$\hfill\fi\ignorespaces}
\setlist[enumerate,1]{label=(\arabic*),font=\normalfont,align=left,leftmargin=0pt,labelindent=0pt,listparindent=\parindent,labelwidth=0pt,itemindent=!,topsep=2pt,parsep=0pt,itemsep=2pt,start=1}
\setlist[enumerate,2]{label=(\alph*),font=\normalfont,labelindent=*,leftmargin=*,start=1}
\setlist[itemize]{labelindent=*,leftmargin=*}
\setlist[description]{labelindent=*,leftmargin=*,itemindent=-1 em}
\renewcommand{\comp}{\cdot}
\renewcommand{\c}{\colon}
\newcommand{\cprd}{\lesssim^{\raisebox{-1pt}{\scriptsize $O$}}}
\tikzstyle{shiftarr}=[
\tikzset{
    commutative diagrams/.cd,
    arrow style=tikz,
    diagrams={>={Straight Barb[length=1.75pt,width=3.85pt,inset=1.95pt]}}, %
    row sep=large,
    column sep = huge
}
\tikzset{cong/.style={draw=none,edge node={node [sloped, allow upside down, auto=false]{$\cong$}}},
         iso/.style={draw=none,every to/.append style={edge node={node [sloped, allow upside down, auto=false]{$\cong$}}}}}
\newcommand\tsup[2][2]{%
 \def\useanchorwidth{T}%
  \ifnum#1>1%
    \stackon[-1.3ex]{\tsup[\numexpr#1-1\relax]{#2}}{\scalebox{2}[1]{$\mathchar"307E$}\kern-.5pt}%
  \else%
    \stackon[-1ex]{#2}{\scalebox{2}[1]{$\mathchar"307E$}\kern-.5pt}%
  \fi%
}
\newcommand{\ST}[1]{\textcolor{purple}{ST: #1}}
\theoremstyle{definition}
\newtheorem{defn}[theorem]{Definition} %
\newtheorem{construction}[theorem]{Construction} %
\newtheorem{rem}[theorem]{Remark} %
\newtheorem{assumption}[theorem]{Assumption}
\newcommand{\arP}{\overrightarrow{\Pow}}
\newcommand{\Tr}{\mathsf{Tr}} %
\newcommand{\stscr}{\ensuremath{\boldsymbol\mu}\textbf{TCL}\xspace}
\newcommand{\fpc}{\textbf{FPC}\xspace}
\newcommand{\Ty}{\mathsf{Ty}}
\newcommand{\Tyl}{\mathsf{Ty}}
\newcommand{\arty}[2]{#1 \rightarrowtriangle #2} %
\newcommand{\cty}[2]{#1 \rightsquigarrow #2} %
\newcommand{\Pred}[2][]{\mathbf{Pred}_{#1}(#2)}
\newcommand{\RelCat}[2][]{\mathbf{Rel}_{#1}(#2)}
\newcommand{\RelAt}[1]{\mathbf{Rel}_{#1}}
\newcommand{\pred}[1]{\overset{#1}{\rightarrowtail}}
\newcommand{\utype}{\mathsf{unit}}
\newcommand{\booltype}{\mathsf{bool}}
\newcommand{\voidtype}{\mathsf{void}}
\newcommand{\nattype}{\mathsf{nat}}
\let\oldcheckmark\checkmark
\renewcommand{\checkmark}{\raisebox{-4pt}{\scalebox{1.2}[.65]{$\oldcheckmark$}}}
\newcommand{\iimg}[2]{#1{}^\star[#2]}
\newcommand{\fimg}[2]{#1{}_\star[#2]}
\DeclareRobustCommand\widecheck[1]{{\mathpalette\@widecheck{#1}}}
\def\@widecheck#1#2{%
  \setbox\z@\hbox{\m@th$#1#2$}%
  \setbox\tw@\hbox{\m@th$#1%
    \widehat{%
      \vrule\@width\z@\@height\ht\z@
      \vrule\@height\z@\@width\wd\z@}$}%
  \dp\tw@-\ht\z@ppp
  \@tempdima\ht\z@ \advance\@tempdima2\ht\tw@ \divide\@tempdima\thr@@
  \setbox\tw@\hbox{%
    \raise\@tempdima\hbox{\scalebox{1}[-1]{\lower\@tempdima\box
        \tw@}}}%
  {\ooalign{\box\tw@ \cr \box\z@}}}
\newcommand{\superimpose}[2]{{%
  \ooalign{%
    \hfil$\m@th#1\@firstoftwo#2$\hfil\cr
    \hfil$\m@th#1\@secondoftwo#2$\hfil\cr
  }%
}}
\let\oldhat=\hat
\renewcommand{\hat}[1]{\widehat{#1}}
\newcommand{\logp}{\square}
\newcommand{\invp}{\square}
\newcommand{\logrel}{\mathcal{L}}
\newcommand{\logreltwo}{\mathcal{M}}
\theoremstyle{definition}
\newtheorem{notation}[theorem]{Notation}
\newtheorem{assumptions}[theorem]{Assumptions}
\newcommand\myarr[1]{\stackrel{\mathclap{#1}}{\Mapsto}}
\def\star{\ast} %
\begin{document}
\allowdisplaybreaks

\title[Bialgebraic Reasoning on Higher-Order Program Equivalence]{Bialgebraic
  Reasoning on Higher-Order Program Equivalence}

\author{Sergey Goncharov}\authornote{Funded by the Deutsche Forschungsgemeinschaft (DFG, German
  Research Foundation) -- project number 527481841}
\orcid{0000-0001-6924-8766}             %
\affiliation{
  \institution{Friedrich-Alexander-Universität Erlangen-Nürnberg}            %
  \country{Germany}                    %
}
\email{sergey.goncharov@fau.de}          %

\author{Stefan Milius}\authornote{Funded by the Deutsche Forschungsgemeinschaft (DFG, German
  Research Foundation) -- project number 517924115} %
\orcid{0000-0002-2021-1644}             %
\affiliation{
  \institution{Friedrich-Alexander-Universität Erlangen-Nürnberg}            %
  \country{Germany}                    %
}
\email{stefan.milius@fau.de}          %

\author{Stelios Tsampas}
\authornote{Funded by the Deutsche Forschungsgemeinschaft (DFG, German
  Research Foundation) -- project number 419850228 and 527481841}  %
\orcid{0000-0001-8981-2328}             %
\affiliation{
  \institution{Friedrich-Alexander-Universität Erlangen-Nürnberg}            %
  \country{Germany}                    %
}
\email{stelios.tsampas@fau.de}          %

\author{Henning Urbat}
\authornote{Funded by the Deutsche Forschungsgemeinschaft (DFG, German
  Research Foundation) -- project number 470467389}  %
\orcid{0000-0002-3265-7168}             %
\affiliation{
  \institution{Friedrich-Alexander-Universität Erlangen-Nürnberg}            %
  \country{Germany}                    %
}
\email{henning.urbat@fau.de}          %

\sloppy

\begin{abstract}
Logical relations constitute a key method for reasoning about contextual
equivalence of programs in higher-order languages. They are usually developed on
a per-case basis, with a new theory required for each variation of the
language or of
the desired notion of equivalence. In the present paper we introduce a general
construction of (step-indexed) logical relations at the level of \emph{Higher-Order
Mathematical Operational Semantics}, a highly parametric categorical framework
for modeling the operational semantics of higher-order languages. Our main
result states that for languages whose weak operational model forms a lax
bialgebra, the logical relation is automatically sound for contextual
equivalence. Our abstract theory is shown to instantiate to combinatory logics
and $\lambda$-calculi with recursive types, and to different flavours of
contextual equivalence.
\end{abstract}

\begin{CCSXML}
  <ccs2012>
  <concept>
  <concept_id>10003752.10010124.10010131.10010137</concept_id>
  <concept_desc>Theory of computation~Categorical semantics</concept_desc>
  <concept_significance>500</concept_significance>
  </concept>
  <concept>
  <concept_id>10003752.10010124.10010131.10010134</concept_id>
  <concept_desc>Theory of computation~Operational semantics</concept_desc>
  <concept_significance>500</concept_significance>
  </concept>
  </ccs2012>
\end{CCSXML}

\maketitle

\section{Introduction}\label{sec:intro}
Reasoning about program equivalence is one of the primary goals of the theory of programming languages, and it is known to be particularly challenging in the presence of higher-order features. The definite operational notion of program equivalence for higher-order languages is given by \emph{contextual equivalence}~\cite{morris}: two programs $p$ and $q$ are contextually equivalent if, whichever context~$C[\cdot]$ they are executed in, they co-terminate:
\[ p\simeq q \quad\text{iff}\quad \forall C.\, (C[p] \text{ terminates} \,\Leftrightarrow\, C[q] \text{ terminates}).\]
Informally, a context $C[\cdot]$ runs tests on its input `$\cdot$', and $p\simeq q$ means that no such test leads to observations distinguishing $p$ from $q$.

While the definition of contextual equivalence is simple and natural, proving $p\simeq q$ directly can be very difficult due to the quantification over all possible contexts.
Therefore, a substantial strain of research has been devoted to developing sound (and ideally complete) proof techniques for contextual equivalence. The most widely used approach are \emph{logical relations}. Originally introduced in the context of denotational semantics~\cite{MILNER1978348,DBLP:journals/iandc/OHearnR95,ongthesis,PITTS199666,plotkin1973lambda,sieber_1992,STATMAN198585,DBLP:journals/jsyml/Tait67}, logical relations have evolved into a robust and ubiquitous operational technique~\cite{DBLP:journals/toplas/AppelM01,5230591,DBLP:conf/popl/HurDNV12,DBLP:conf/lics/Pitts96,DBLP:journals/mscs/Pitts00,10.5555/645722.666533}. Besides reasoning about contextual equivalence~\cite{10.1007/11693024_6,10.1145/3158152,DBLP:conf/popl/DevriesePP16,DBLP:conf/fossacs/BizjakB15,DBLP:journals/corr/abs-1103-0510,DBLP:conf/popl/HurDNV12,Pitts:1999:ORF:309656.309671,10.5555/1076265,DBLP:conf/icalp/Pitts98,DBLP:journals/jfp/DreyerNB12,DBLP:conf/popl/AhmedDR09}, they have been applied to proofs of strong
normalization~\cite{altenkirch_et_al:LIPIcs:2016:5972}, safety properties~\cite{10.1145/3623510,10.1145/3408996}, and formally verified
compilation~\cite{10.1145/1596550.1596567,DBLP:conf/popl/HurD11,DBLP:conf/icfp/NewBA16,10.1145/3434302}, in a variety of settings such as as effectful~\cite{5571705},
probabilistic~\cite{DBLP:journals/pacmpl/AguirreB23,10.1007/978-3-662-46678-0_18,10.1145/3236782}, and
differential programming~\cite{10.1016/j.tcs.2021.09.027,DBLP:journals/pacmpl/LagoG22,dallago_et_al:LIPIcs:2019:10687}.

A logical relation is, roughly, a type-indexed relation on
program terms that respects their operational behaviour, with the distinctive requirement that related terms
of function type should send related inputs to related outputs. In practice, logical relations
are introduced in a largely empirical fashion. Every higher-order language
comes with its own tailor-made construction of a logical relation (or
several of those), involving careful design choices to match the
features and idiosyncrasies\sgnote{Maybe, soften.} of the language, and accompanied with a
sequence of technical lemmas establishing its compatibility and
soundness properties. The required proofs are typically not inherently
difficult but notoriously long, tedious, and error-prone. Moreover,
every variation of the underlying language (e.g.\ its syntax, its type
system, its computational effects) or of the targeted form of
contextual equivalence (e.g.\ restricting the admissible contexts or
observations) requires a careful adaptation of definitions and proof
details. As a result, logical relations tend to scale rather poorly with the size and complexity of the language.  

The contribution of the present paper is a {generic}, \emph{language-independent}, theory of logical relations that aims to address these issues. It is developed at the level of generality of \emph{Higher-Order Mathematical Operational Semantics}, a.k.a.\ \emph{higher-order abstract GSOS}, a categorical framework for the operational semantics of higher-order languages recently introduced by~\citet{gmstu23} that builds on the seminal work of \citet{DBLP:conf/lics/TuriP97}. In higher-order abstract GSOS, the (small-step) operational semantics of a language is modeled abstractly as a \emph{higher-order GSOS law}, a dinatural transformation that distributes the \emph{syntax} of a language (given by an endofunctor $\Sigma$ on a category $\C$) over its \emph{behaviour} (given by a mixed variance bifunctor $B\colon \C^\op\times \C\to \C$). Every higher-order GSOS law comes equipped with a canonical operational model, which forms a \emph{(higher-order) bialgebra} 
\begin{equation}\label{eq:op-model} \Sigma(\mS)\xto{\cong} \mS \xto{\gamma} B(\mS,\mS) \end{equation}
on the object $\mS$ of program terms. Intuitively, the coalgebra $\gamma$ is the transition system that runs programs according to the operational rules encoded by the underlying higher-order GSOS law.

At this level of abstraction, we show that every higher-order GSOS law induces a
family of \emph{generic} logical relations on $\mS$ (\autoref{def:stepindexed}).
They are constructed as \emph{step-indexed logical
  relations}~\cite{DBLP:journals/toplas/AppelM01}, a general and robust kind of
logical relations that is suitable for languages with recursive types. In addition, we introduce a notion of  \emph{contextual preorder} $\cprd$ on $\mS$
that is parametric in a preorder $O$ of admissible \emph{observations} $O$ on program
contexts. We regard this notion and its properties established in \autoref{sec:ctxeq} to be of independent interest, as to our knowledge it provides the first treatment of contextual preorders as an abstract categorical notion.

Our main result (\autoref{thm:main} and its corollaries) states that
for languages modeled in higher-order abstract GSOS, the soundness of generic
logical relations often comes for free. We may state this result informally as
follows.

\smallskip
\noindent\textbf{Generic Soundness Theorem.} If the small-step operational rules of a language remain sound for weak transitions, and a generic logical relation is adequate for $O$, then it is sound for $\cprd$.
\smallskip 

In technical terms, the condition on weak transitions amounts to requiring that the weak operational model associated to \eqref{eq:op-model} forms a \emph{lax bialgebra}. This condition isolates the language-specific core of soundness proofs for logical relations, and it is usually easily verified by inspecting the operational rules of the language. Hence, our generic soundness theorem can significantly reduce the proof burden associated with soundness results for logical relations. 

Thanks to their highly parametric nature, our categorical results instantiate to a wide variety of different settings and languages. In our presentation we focus on the functional language \fpc~\cite{Gunter92,FioreP94} (a $\lambda$-calculus with recursive types), and a corresponding combinatory logic called \stscr. Let us note that the previous work on higher-order abstract GSOS has only considered untyped~\cite{gmstu23,UrbatTsampasEtAl23} and simply typed languages~\cite{gmstu24}. The insight that complex type systems featuring recursive types can be implemented in the abstract framework is thus a novel contribution in itself.

\paragraph{Related work} The majority of the literature dedicated to the
abstract study of logical relations has so far focused on
denotational~\cite{DBLP:journals/mscs/Goubault-LarrecqLN08,DBLP:journals/entcs/HermidaRR14,hermida1993fibrations,DBLP:phd/ethos/Katsumata05}
rather than operational logical relations. In recent
work~\cite{DBLP:conf/fscd/DagninoG22,DBLP:journals/corr/abs-2303-03271},
Dagnino and Gavazzo introduce a categorical notion of operational
logical relations that is largely orthogonal to ours, in particular
regarding the parametrization of the framework: In \emph{op.\ cit.},
the authors work with a fixed \emph{fine-grain call-by-value} language~\cite{LevyPowerEtAl03} parametrized by a signature
of generic effects, while the notion of logical relation is kept
variable and in fact is parametrized over a fibration; in contrast, we adhere to
conventional logical relations and parametrize over the syntax and semantics of the language. 
Unlike our present paper, Dagnino and Gavazzo have not incorporated {step-indexing}
and recursive types.

\citet{gmstu24} devise a generic construction of unary, non-step-indexed logical predicates in the framework of higher-order
abstract GSOS, and provide induction up-to techniques for reasoning on them, with \emph{strong normalization}
predicates as a key application. 
First-order lax bialgebras are initially proposed by~\citet{DBLP:conf/concur/BonchiPPR15} 
to capture coalgebraic weak bisimilarity and are subsequently generalized to higher-order lax bialgebras 
by~Urbat et al.~\cite{UrbatTsampasEtAl23}, where the authors gave an abstract proof of congruence of applicative (bi)similarity~\cite{Abramsky:lazylambda} in higher-order abstract GSOS, using a categorical generalization of Howe's method~\cite{DBLP:conf/lics/Howe89, DBLP:journals/iandc/Howe96}. The relation between the present paper and the work of Urbat et al.\ is elaborated on in \autoref{rem:logrel-vs-howe}.

While the standard version of step-indexing involves only natural numbers,
it turns out that indexing by arbitrary ordinals
emerges as a matter of course in our abstract formulation; this relates our notion to the recent
generalizations~\cite{SpiesKrishnaswamiEtAl21,DBLP:journals/pacmpl/AguirreB23}.

\section{Preliminaries}

\subsection{Category Theory}\label{sec:category-theory}
\label{sec:categories}

We assume familiarity with basic category theory~\cite{mac2013categories}, e.g.~functors, natural transformations, (co)limits, and monads. Here, we
briefly review some terminology and notation.

\paragraph*{Notation}
For objects
$X_1, X_2$ of a category $\C$, we denote their product by $X_1\times X_2$ and
the pairing of morphisms $f_i\c X\to X_i$, $i=1,2$, by $\langle f_1, f_2\rangle\c
X\to X_1\times X_2$. We write
$X_1+X_2$ for the coproduct, $\inl\c X_1\to X_1+X_2$ and
$\inr\c X_2\to X_1+X_2$ for its injections, $[g_1,g_2]\c X_1+X_2\to X$ for the copairing of morphisms $g_i\colon X_i\to X$,
$i=1,2$, and $\nabla=[\id_X,\id_X]\colon X+X\to X$ for the codiagonal. The \emph{coslice category} $X/\gcat$,
where $X\in \gcat$, has as objects all pairs $(Y,p_Y)$ consisting of an object $Y\in \gcat$
and a morphism $p_Y\c X\to Y$, and a morphism from $(Y,p_Y)$ to
$(Z,p_Z)$ is a morphism $f\c Y\to Z$ of $\gcat$ such that
$p_Z = f\comp p_Y$. The \emph{slice category} $\C/X$ is defined dually.

\paragraph*{Algebras}
Let $F$ be an endofunctor on a category $\gcat$. An \emph{$F$-algebra}
is a pair $(A,a)$ consisting of an object~$A$ (the \emph{carrier} of the
algebra) and a morphism $a\colon FA\to A$ (its \emph{structure}). A
\emph{morphism} from $(A,a)$ to an $F$-algebra $(B,b)$ is a morphism
$h\colon A\to B$ of~$\gcat$ such that $h\comp a = b\comp Fh$. Algebras
for $F$ and their morphisms form a category, whose initial object, if it exists, 
is called an \emph{initial algebra}; we denote its carrier by $\mu F$ 
and its structure by ${\ini\colon F(\mu F) \to \mu F}$.
A \emph{free $F$-algebra} on an object $X$ of $\gcat$ is an
$F$-algebra $(F^{\star}X,\iota_X)$ together with a morphism
$\eta_X\c X\to F^{\star}X$ of~$\gcat$ such that for every algebra $(A,a)$
and every morphism $h\colon X\to A$ in $\gcat$, there exists a unique
$F$-algebra morphism $h^\star\colon (F^{\star}X,\iota_X)\to (A,a)$
such that $h=h^\star\comp \eta_X$; the morphism $h^\star$ is called the \emph{free
  extension} of $h$. If free algebras
exist on every object, their formation gives rise to a monad
$F^{\star}\colon \gcat\to \gcat$, the \emph{free monad} generated by~$F$.
For every $F$-algebra $(A,a)$ we can derive an Eilenberg-Moore algebra
$\hat{a} \colon F^{\star} A \to A$ whose structure is the free extension of $\id_A\c A\to A$.

The most familiar example of functor algebras are algebras for a
signature. Given a set $S$ of \emph{sorts}, an \emph{$S$-sorted algebraic signature} consists of a set~$\Sigma$
of \emph{operation symbols} and a map $\ar\colon \Sigma\to S^{\star}\times S$
associating to every $\f\in \Sigma$ its \emph{arity}. We write $\f\colon s_1\times\cdots\times s_n\to s$ if $\ar(\f)=(s_1,\ldots,s_n,s)$, and $\f\colon s$ if $n=0$ (in which case $\f$ is called a \emph{constant}). Every
signature~$\Sigma$ induces an endofunctor on the category $\Set^S$ of $S$-sorted sets and $S$-sorted functions, denoted by the
same letter $\Sigma$, defined by $(\Sigma X)_s = \coprod_{\f\colon s_1\cdots s_n\to s} \prod_{i=1}^n X_{s_i}$ for $X\in \Set^S$ and $s\in S$. (Endofunctors of this form are called \emph{polynomial endofunctors}.) An algebra for the functor $\Sigma$ is
precisely an algebra for the signature $\Sigma$, viz.~an $S$-sorted set $A=(A_s)_{s\in S}$ equipped with an operation $\f^A\colon \prod_{i=1}^n A_{s_i}\to A_s$ for every $\f\colon s_1\times\cdots \times s_n\to s$ in $\Sigma$. Morphisms of $\Sigma$-algebras are $S$-sorted
maps respecting the algebraic structure. Given an $S$-sorted set $X$ of
variables, the free algebra $\Sigmas X$ is the $\Sigma$-algebra of
$\Sigma$-terms with variables from~$X$; more precisely, $(\Sigmas X)_s$ is inductively defined by $X_s\seq (\Sigmas X)_s$ and $\f(t_1,\ldots,t_n)\in (\Sigmas X)_s$ for all ${\f\colon s_1\times\cdots \times s_n\to s}$ and $t_i\in (\Sigmas X)_{s_i}$.
The free
algebra on the empty set is the initial algebra~$\mu \Sigma$; it is
formed by all \emph{closed terms} of the signature. We write $t\colon s$ for $t\in (\mS)_s$. For every
$\Sigma$-algebra $(A,a)$, the induced Eilenberg-Moore algebra
$\hat{a}\colon \Sigmas A \to A$ is given by the map that evaluates terms in~$A$.

An \emph{($S$-sorted) relation} $R$ on $X\in \Set^S$, denoted $R\seq X\times X$, is a family of relations $(R_s\seq X_s\times X_s)_{s\in S}$.
We write $x\, R_s\,y$ or $R_{s}(x,y)$ for $(x,y) \in
R_{s}$, and sometimes omit the subscript~$s$. Moreover we let $\Rel_{X}$ denote the complete lattice of relations on $X$, ordered by inclusion in every sort. Its top element is the full relation $\top=X\times X$, and meets are given by sortwise intersection.

A \emph{congruence} on a $\Sigma$-algebra $A$ is an $S$-sorted relation ${R\seq A\times A}$ compatible with all operations of $A$: for each $\f\colon s_1\times\cdots \times s_n\to s$ and elements $x_i,y_i\in A_{s_i}$ such that $R_{s_i}(x_i,y_i)$ ($i=1,\ldots,n$), one has $R_s(\f^A(x_1,\ldots,x_n), \f^A(y_1,\ldots,y_n))$. Unlike other authors we do not require congruences to be equivalence relations. However, every congruence on the initial algebra $\mS$ is reflexive in every sort.

\paragraph*{Coalgebras}
Dual to the notion of algebra, a \emph{coalgebra} for an endofunctor $F$ is a pair $(C,c)$ consisting of an object $C$ (the
\emph{state space}) and a morphism $c\colon C\to FC$ (its
\emph{structure}). Coalgebras are abstractions of state-based systems~\cite{DBLP:journals/tcs/Rutten00}. For instance, a coalgebra $c\colon C\to {\Pow(L\times C)}$ for the set functor $\Pow(L\times -)$, where~$\Pow$ is the power set functor and $L$ is a fixed set of labels, is precisely a labelled transition system.

\subsection{Higher-Order Abstract GSOS}
\begin{figure*}
\begin{tikzcd}[column sep=6em]
\Sigma(\mS) \ar{rrr}{\iota} \ar{d}[swap]{\Sigma\langle \id, \gamma\rangle} & & & \mS \ar[dashed]{d}{\gamma} \\
\Sigma(\mS\times B(\mS,\mS)) \ar{r}{\rho_{\mS,\mS}} & B(\mS,\Sigmas(\mS+\mS)) \ar{r}{B(\id,\Sigmas \nabla)} & B(\mS,\Sigmas(\mS)) \ar{r}{B(\id,\oldhat\ini)} & B(\mS,\mS) 
\end{tikzcd}
\caption{Operational model of a higher-order GSOS law}\label{fig:gamma}
\end{figure*}
\label{sec:abstract-gsos}
We review the fundamentals of \emph{higher-order abstract
  GSOS}~\cite{gmstu23}, a categorical framework for the operational
semantics of higher-order languages.  It is parametric in the following data:
\begin{enumerate}
\item a category $\gcat$ with finite products and coproducts;
\item an object $\Pt\in \gcat$ of \emph{variables};
\item two functors $\Sigma\c\gcat \to \gcat$ and $B\c \gcat^\opp\times \gcat\to
  \gcat$, where $\Sigma=\Pt+\Sigma'$ for some
  functor $\Sigma'\colon \gcat \to \gcat$, and free $\Sigma$-algebras exist on
  all ${X \in \gcat}$.
\end{enumerate}
The functors $\Sigma$ and $B$ model the \emph{syntax}
and the \emph{behaviour} of a higher-order language. The mixed variance of $B$ reflects the requirement that programs can occur both contravariantly (as inputs) and covariantly (as outputs) of programs. The initial
algebra~$\mS$ is the object of program terms in the language; since $\Sigma=\Pt+\Sigma'$, variables are terms. An object $(X,p_X)$ of the coslice category $V/\gcat$ (a \emph{$V$-pointed object} for short) is thought of as an abstract set $X$ of programs with an embedding $p_X\colon V\to X$ of the variables. For variable-free languages such at \stscr introduced in \autoref{sec:stscr}, we put $V=0$ (the initial object). In the setting of higher-order abstract GSOS, the operational semantics of a language is specified by a dinatural transformation that distributes syntax over behaviours:

\begin{definition}[Higher-Order GSOS Law]\label{def:ho-gsos-law}
  A \emph{($\Pt$-pointed) higher-order GSOS law} of $\Sigma$ over $B$
  is given by a family of morphisms
  \begin{align}\label{eq:ho-gsos-law}
    \rho_{(X,p_X),Y} \c \Sigma (X \times B(X,Y))\to B(X, \Sigma^\star (X+Y))
  \end{align}
  dinatural in $(X,p_X)\in \Pt/\gcat$ and natural in $Y\in \gcat$.
\end{definition}

\begin{notation}\label{not:rho}
\begin{enumerate}
\item We usually write $\rho_{X,Y}$ for $\rho_{(X,p_X),Y}$, as the
  point $p_X\c V\to X$ will always be clear from the context.
\item For every $\Sigma $-algebra $(A,a)$, we regard $A$ as a {$\Pt$-pointed} object with point
  $p_A = \bigl(\Pt\xra{\inl} \Pt+\Sigma' A = \Sigma  A \xra{a} A\bigr)$. 
\end{enumerate}
\end{notation}
A higher-order GSOS law $\rho$ is thought of as an encoding of the small-step operational rules of a higher-order language: given an operation $\f$ from $\Sigma$ and the one-step behaviours of its operands, the law $\rho$ specifies the one-step behaviour of a program $\f(-,\cdots,-)$, i.e.~the $\Sigma$-terms it transitions into. The \emph{operational model} of $\rho$ is a transition system on $\mS$ that runs programs according to the rules: 

\begin{definition}[Operational Model]\label{def:operational-model}
The \emph{operational model} of a higher-order GSOS law $\rho$ in \eqref{eq:ho-gsos-law} is the $B(\mS,-)$-coalgebra 
\begin{equation*}
\gamma\c \mS\to B(\mS,\mS)
\end{equation*}
obtained via primitive recursion~\cite[Prop.~2.4.7]{DBLP:books/cu/J2016} as the unique morphism making the diagram in
\Cref{fig:gamma} commute. Here we regard the initial algebra $\mu\Sigma$ as $V$-pointed as in \Cref{not:rho},
and $\hat{\ini}\colon \Sigmas(\mS) \to \mS$ is the $\Sigmas$-algebra induced by
$\iota\c \Sigma(\mS)\to \mS$.
\end{definition}
This makes $(\mS,\ini,\gamma)$ an (initial) \emph{$\rho$-bialgebra}; see \autoref{sec:step-indexed} for a more detailed discussion of bialgebras.

\section{Combinatory Logic}\label{sec:stscr}
 In this section we introduce a combinatory logic~\cite{hindley2008lambda} with \mbox{(iso-)}recursive types, called \stscr, that will serve as a running example for the abstract theory of logical relations developed in our paper. It forms a (computationally complete) fragment of the well-known functional language \fpc~\cite{Gunter92,FioreP94} but does not involve variables; this allows us to circumvent the technical overhead associated with binding and substitution. The fully fledged \fpc language is discussed in \autoref{sec:applications}.

\subsection{The \texorpdfstring{$\boldsymbol{\mu}$}{\mu}TCL Language}\label{sec:mu-ctl}
The type expressions of~\stscr{} are given by the grammar
\begin{equation}\label{eq:type-grammar}
  \tau_1,\tau_2,\tau,\ldots \Coloneqq \alpha \mid\tau_1\boxplus\tau_2\mid\tau_1\boxtimes\tau_2\mid \arty{\tau_1}{\tau_2}\mid \mu\alpha.\,\tau,
\end{equation}
where $\alpha$ ranges over a fixed countably infinite set of {type variables}. Free and bound 
type variables and $\alpha$-equivalence are defined in the usual way. We denote by 
$\Ty$ the set of closed type expressions modulo $\alpha$-equivalence, and refer to them simply as \emph{types}. The type constructors $\boxplus$,
$\boxtimes$, $\arty{}{}$ represent binary sums, binary products and function
spaces internal to the language, and are denoted non-standardly for distinction with the 
set constructors $+$, $\times$, $\to$. Using the recursion operator we define the empty type ($\voidtype=\mu\alpha.\,\alpha$), the unit type ($\utype=\arty{\voidtype}{\voidtype}$), and the types of booleans ($\booltype=\utype\boxplus
\utype$) and natural numbers ($\nattype=\mu\alpha.\,\utype\boxplus\alpha$).
We write $\tau_{1}[\tau_{2}/\alpha]$ for the capture-avoiding
substitution of $\tau_2$ for~$\alpha$. The constructor $\arty{}{}$
binds most weakly and is right-associative, which means that
$\arty{\tau_{1}}{\arty{\tau_{2}}{\tau_{3}}}$ is parsed as
$\arty{\tau_{1}}{({\arty{\tau_{2}}{\tau_{3}}})}$. 

The syntax of \stscr is specified by the $\Ty$-sorted signature $\Sigma$ given by the following operation symbols, with
$\tau_1,\tau_2,\tau_3$ ranging over $\Ty$, and $\tau$ over types with at most one free variable $\alpha$:
\begin{flalign*}
  \quad S_{\tau_{1},\tau_{2},\tau_{3}} \c& \arty{(\arty{\tau_{1}}{\arty{\tau_{2}}{\tau_{3}}})}{\arty{(\arty{\tau_{1}}{\tau_{2}})}{\arty{\tau_{1}}{\tau_{3}}}} &\\
  S'_{\tau_1,\tau_2,\tau_3}\c& (\arty{\tau_{1}}{\arty{\tau_{2}}{\tau_{3}}})\to (\arty{{(\arty{\tau_{1}}{\tau_{2})}}}{\arty{\tau_{1}}{\tau_{3}}})  \\
  S''_{\tau_1,\tau_2,\tau_3}\c& (\arty{\tau_{1}}{\arty{\tau_{2}}{\tau_{3}}})\times (\arty{\tau_{1}}{\tau_{2}})\to (\arty{\tau_{1}}{\tau_{3}}) \\
  K_{\tau_{1},\tau_{2}} \c& \arty{\tau_{1}}{\arty{\tau_{2}}{\tau_{1}}} \qquad K'_{\tau_1,\tau_2}\c \tau_1\to (\arty{\tau_2}{\tau_1}) \\
  I_{{\tau_1}} \c& \arty{\tau_1}{\tau_1} \\[-4.5ex]
\end{flalign*}
\begin{flalign*}
  \quad\mathsf{app}_{\tau_1,\tau_2}\c
  &
    (\arty{\tau_{1}}{\tau_{2}})\times\tau_1\to
    \tau_2
  & \mathsf{fst}_{\tau_1,\tau_2}\c
  &
    {\tau_{1}\boxtimes\tau_{1}} \to {\tau_1}
  \\ 
  \mathsf{inl}_{\tau_1,\tau_2}\c& 					{\tau_{1}}\to{\tau_1\boxplus\tau_2}
& \mathsf{snd}_{\tau_1,\tau_2}\c& {\tau_{1}\boxtimes\tau_{2}} \to {\tau_2}  \\
  \mathsf{inr}_{\tau_1,\tau_2}\c& 					{\tau_{2}}\to{\tau_1\boxplus\tau_2} & \mathsf{pair}_{\tau_1,\tau_2}\c& \tau_1\times \tau_2\to{\tau_{1}\boxtimes\tau_2} &\\[-4ex]
\end{flalign*}
\begin{flalign*}
 \quad \mathsf{case}_{\tau_1,\tau_2,\tau_3}\c& 	(\tau_1\boxplus\tau_2)\times
                                          (\arty{\tau_{1}}{\tau_3})\times
                                          (\arty{\tau_{2}}{\tau_3})\to\tau_3&\\[1ex]
  \mathsf{fold}_{\tau} \c& \tau[\mu\alpha.\,\tau/\alpha]\to{\mu\alpha.\tau} \\
  \mathsf{unfold}_{\tau} \c& \mu\alpha.\,\tau\to\tau[\mu\alpha.\tau/\alpha]
\end{flalign*}
We let $\Tr=\mS$ denote the initial algebra for $\Sigma$, carried by the $\Ty$-sorted set of closed
$\Sigma$-terms. Type indices at polymorphic operations are often omitted for the sake of readability. The operation $\mathsf{app}$ represents function application, and we write $s\, t$ for $\mathsf{app}(s,t)$.
The familiar combinators $I$, $K$, $S$ 
represent the $\lambda$-terms $\lambda x.\,x$, $\lambda x.\,\lambda y.\, x$ and $\lambda x.\,\lambda y.\,\lambda z.\, (x\, z)\, (y\, z)$.
Apart from those, we use auxiliary operations $S'$, $S''$,
$K'$; these are needed to provide a small-step semantics, 
which is instrumental for our coalgebraic approach.
\begin{figure*}[t]
  \centering
  \columnwidth=\linewidth
  \begin{gather*}
    \inference{}{S_{\tau_{1},\tau_{2},\tau_{3}}\xto{e}S'_{\tau_{1},\tau_{2},\tau_{3}}(e)}
    \qquad
    \inference{}{S'_{\tau_{1},\tau_{2},\tau_{3}}(t)\xto{e}S''_{\tau_{1},\tau_{2},\tau_{3}}(t,e)}     
    \qquad
    \inference{}{S''_{\tau_{1},\tau_{2},\tau_{3}}(t,s)\xto{e}(t\app{\tau_{1}}{\arty{\tau_{2}}{\tau_{3}}} e)\app{\tau_{2}}{\tau_{3}} (s\app{\tau_{1}}{\tau_{2}} e)}
    \qquad
    \inference{}{K_{\tau_{1},\tau_{2}}\xto{e}K'_{\tau_{1},\tau_{2}}(e)}\\[2ex]
    \inference{}{K'_{\tau_{1},\tau_{2}}(t)\xto{e}t}
    \qquad
    \inference{}{I_{\tau}\xto{e}e}
    \qquad
    \inference{t\to t'}{t \app{\tau_{1}}{\tau_{2}} s\to t' \app{\tau_{1}}{\tau_{2}} s}
    \qquad
    \inference{t\xto{s} t'}{t \app{\tau_{1}}{\tau_{2}} s\to t'}
    \qquad\inference{}{\inl_{\tau_1,\tau_2} (t)\xto{\boxplus_1} t}
    \qquad\inference{}{\inr_{\tau_1,\tau_2} (t)\xto{\boxplus_2} t}
    \\[1ex]
    \inference{t\to t'}{\mathsf{case}_{\tau_1,\tau_2,\tau_3}(t,s,r)\to \mathsf{case}_{\tau_1,\tau_2,\tau_3}(t',s,r)}
    \qquad\inference{t\xto{\boxplus_1} t'}{\mathsf{case}_{\tau_1,\tau_2,\tau_3}(t,s,r)\to s\app{\tau_{1}}{\tau_{2}} t'}
    \qquad\inference{t\xto{\boxplus_2} t'}{\mathsf{case}_{\tau_1,\tau_2,\tau_3}(t,s,r)\to r\app{\tau_{1}}{\tau_{2}} t'}
    \\[1ex]
    \inference{}{\mathsf{pair}_{\tau_1,\tau_2}(t,s)\xto{\boxtimes_1} t}
    \qquad\inference{}{\mathsf{pair}_{\tau_1,\tau_2}(t,s)\xto{\boxtimes_2} s}
    \qquad\inference{t\to t'}{\mathsf{fst}_{\tau_1,\tau_2}(t)\to \mathsf{fst}_{\tau_1,\tau_2}(t')}
    \qquad\inference{t\to t'}{\mathsf{snd}_{\tau_1,\tau_2}(t)\to \mathsf{snd}_{\tau_1,\tau_2}(t')}
    \\[1ex]
    \inference{t\xto{\boxtimes_1} t'}{\mathsf{fst}_{\tau_1,\tau_2}(t)\to t'}
    \qquad\inference{t\xto{\boxtimes_2} t'}{\mathsf{snd}_{\tau_1,\tau_2}(t)\to t'}
    \qquad
    \inference{}{\mathsf{fold}_\tau(t) \xto{\mu} t} 
    \qquad
    \inference{t \to t'}{\mathsf{unfold}_\tau(t) \to \mathsf{unfold}_\tau(t')}
    \qquad
    \inference{t \xto{\mu} t'}{\mathsf{unfold}_\tau(t) \to t'}
    \end{gather*}
  \caption{Call-by-name operational semantics of \stscr.}
  \label{fig:skirules}
\end{figure*}

We equip \stscr with a call-by-name operational semantics whose transition rules are given in
\Cref{fig:skirules}; here $e,s,t,t'$ range over appropriately typed terms in $\Tr$. There are three kinds of transitions: 
\begin{itemize}[wide]
  \item $t\xto{} s$, indicating that $t$ one-step $\beta$-reduces to $s$,
  \item $t\xto{e} s$ where $e\in \Tr$, indicating that $t$ applied to $e$ yields $s$,
  \item $t\xto{l} s$ where $l \in \{\boxtimes_1, \boxtimes_2, \boxplus_1, \boxplus_2, \mu\}$,
  which all identify $t$ as a value and provide information about its structure; for example, a transition
   $t\xto{\boxplus_1} s$ means that $t=\inl_{\tau_1,\tau_2}(s)$, and
  ${t\xto{\boxtimes_1} s}$ means that
  $t=\mathsf{pair}_{\tau_1,\tau_2}(s,e)$ for some $e$.
\end{itemize}

These transitions are deterministic: every
term $t$ either reduces to a unique term $s$, i.e. $t \to s$, or exhibits a unique
labelled transition $t \xto{e} t_{e}$ for each appropriately typed term $e$, or exhibits a unique transition $t\xto{l} s$ where  $l \in \{\boxtimes_1, \boxtimes_2, \boxplus_1, \boxplus_2, \mu\}$.

The semantics of \stscr prominently features labelled transitions
and forms a ``higher-order LTS'', i.e.\ a labelled transition system on terms whose labels may also be terms. This style of semantics draws inspiration from
the work of \citet{Abramsky:lazylambda} and
\citet{DBLP:journals/tcs/Gordon99} on the $\lambda$-calculus.
The incorporation of the auxiliary operators $S'$, $S''$ and $K'$ does not
alter the functional
behavior of programs, except for possibly adding more unlabelled
transitions. For example, the conventional rule $S\, t\, s\,
e\to (t\,e)\,(s\,e)$ for the $S$-combinator~\cite{hindley2008lambda} is rendered as the sequence of transitions
\[
S\,t\,s\,e\to S'(t)\,s\,e\to S''(t,s)\, e\to (t\,e)\,(s\,e).
\]
To model \stscr in higher-order abstract GSOS, we take the base category $\C=\Set^\Ty$ of $\Ty$-sorted sets, the polynomial functor $\Sigma\colon \Set^\Ty\to \Set^\Ty$ corresponding to the signature of \stscr, and the behaviour bifunctor $B \c (\Set^{\Ty})^{\opp} \times \Set^{\Ty}
\to \Set^{\Ty}$ given by
\begin{equation}
  \label{eq:beh}
\begin{aligned}
    B_\tau(X,Y) &\,= Y_\tau + D_\tau(X,Y),\\[.5ex]
    D_{\arty{\tau_{1}}{\tau_{2}}}(X,Y) &\,= Y_{\tau_{2}}^{X_{\tau_{1}}},&\hspace{-1.5em} 
    D_{\tau_1\boxplus\tau_2}(X,Y)  &\,= Y_{\tau_1}+Y_{\tau_2}, \\
    D_{\mu\alpha.\,\tau}(X,Y)
                    &\,= Y_{\tau[\mu\alpha.\tau/\alpha]},&\hspace{-1.5em}
    D_{\tau_1\boxtimes\tau_2}(X,Y) &\,= Y_{\tau_1}\times Y_{\tau_2}.
\end{aligned}
\end{equation}
(We denote the components of a functor $F\colon \D\to \Set^\Ty$ by $F_\tau\colon
\D\to \Set$ for $\tau\in \Ty$.) In $Y_{\tau} + D_{\tau}(X,Y)$, $Y_{\tau}$ models
$\beta$-reduction and $D_{\tau}(X,Y)$ the information carried by values. The
deterministic transition rules of \eqref{fig:skirules} induce a coalgebra
$\gamma\colon \Tr \to B(\Tr,\Tr)$ given by 
\begin{flalign}
    \gamma_\tau(t)=\;& t'
    &&\hspace{-1.2em} \text{if $t \xto{} t'$, for $t,t'\c\tau$,}&&\notag\\*  
    ~\gamma_{\arty{\tau_{1}}{\tau_{2}}}(t) =\;& \lambda e.\,t_{e}
    &&\hspace{-1.2em}\text{if
      $t \xto{e} t_{e}$ for $t \c \arty{\tau_{1}}{\tau_{2}}$, $e\c \tau_1$,}&&\notag\\*
    \gamma_{\tau_1\boxplus\tau_2}(t)=\;& t'
    &&\hspace{-1.2em} \text{if $t \xto{\boxplus_i} t'$ for $t\c\tau_1\boxplus\tau_2, t'\c\tau_i$,}\label{exa:gamma} && \\*    
    \gamma_{\tau_1\boxtimes\tau_2}(t)=\;& (t_1,t_2)
    &&\hspace{-1.2em} \text{if $t \xto{\boxtimes_1} t_1$, $t \xto{\boxtimes_2} t_2$, for $t_i\colon \tau_i$,}\notag && \\*
    \gamma_{\mu \alpha.\,\tau}(t) =\;& t'
    &&\hspace{-1.2em}\text{if
      $t \xto{\mu} t'$ for $t \c \mu \alpha.\,\tau$, \makebox[0pt][l]{$t' \c \tau[\mu \alpha.\,\tau/\alpha]$.}}&&\notag
  \end{flalign}
We omit explicit coproduct injections for better readability, e.g.\ the term $t'$ in the last clause lies in the second summand of the coproduct $B_{\mu \alpha.\tau}(\Tr,\Tr)=\Tr_{\mu \alpha.\tau} + \Tr_{\tau[\mu \alpha.\,\tau/\alpha]}$. The coalgebra $\gamma \c \Tr \to B(\Tr,\Tr)$ is the canonical
operational model of a ($0$-pointed) higher-order GSOS law of the syntax functor
$\Sigma$ over the behaviour bifunctor $B$, i.e.~a family of $\Ty$-sorted maps
\begin{equation}
  \label{eq:rhotcl}
  \rho_{X,Y} \c \Sigma (X \times B(X,Y))\to B(X, \Sigma^\star (X+Y))
\end{equation}
dinatural in $X\in \Set^\Ty$ and natural in $Y\in \Set^\Ty$. The components 
of $\rho$ are given by case distinction over the operations of \stscr{} and simply encode the rules of \autoref{fig:skirules} as functions. We list a few
selected clauses below, again omitting coproduct injections; see \ifarx{the appendix}{\cite[Appendix A]{gmtu24_arxiv}} for a complete definition.
\begin{align*}
\rho_{X,Y}(S_{\tau_{1},\tau_{2},\tau_{3}}) =&\; \lambda e.\,S'_{\tau_{1},\tau_{2},\tau_{3}}(e),\\
\rho_{X,Y}(S'_{\tau_{1},\tau_{2},\tau_{3}}(t,f)) =&\; \lambda e.\,S''_{\tau_{1},\tau_{2},\tau_{3}}(t,e),\\
\rho_{X,Y}(S''_{\tau_{1},\tau_{2},\tau_{3}}((t,f),(s,g))) =&\; \lambda e.\,\mathsf{app}_{\tau_{2},\tau_{3}}(\mathsf{app}_{\tau_1,\arty{\tau_{2}}{\tau_{3}}}(t, e),\\*&\hspace{5.2em}\mathsf{app}_{\tau_1,\tau_2}(s, e)),\\
\rho_{X,Y}(\inl_{\tau_1,\tau_2}(t,f)) 						         =&\; t,\\
\rho_{X,Y}(\mathsf{app}_{\tau_1,\tau_2}((t,f),(s,g)))									 =&\; \begin{cases}
																																							 \makebox[5em][l]{$f(s)$} \text{\quad if $f \in Y_{\tau_{2}}^{X_{\tau_{1}}}$}, \\
																																							 \makebox[5em][l]{$\mathsf{app}_{\tau_1,\tau_2}(f, s)$} \text{\quad if $f \in Y_{\arty{\tau_{1}}{\tau_{2}}}$}.
																																						  \end{cases} 
\end{align*}

\subsection{Logical Relations for \texorpdfstring{$\boldsymbol{\mu}$}{$\mu$}\textbf{TCL}}
\label{sec:tclreason}

A natural operational notion of program equivalence is given by \emph{contextual} or
\emph{observational equivalence}, a relation that identifies programs if they
behave the same in all program contexts. It comes in two different flavours: \emph{Morris-style} contextual equivalence~\cite{morris}, and a more abstract, relational approach by
\citet{10.5555/309656.309660} and \citet{lassenthesis} (see
also~\cite{10.5555/1076265}). For the purposes of \stscr we pick the
former; we elaborate on the connections in~\Cref{sec:ctxeq}.
\begin{notation}\label{not:weak-trans}
\begin{enumerate}
\item  A \emph{program context} $C\colon \cty{\tau_{1}}{\tau_{2}}$ is a $\Sigma$-term of output type $\tau_2$ with a hole of type $\tau_1$, i.e.\ if $\mathds{1}_{\tau_1}$ denotes the $\Ty$-sorted set with a single element `$\cdot$' in sort $\tau_1$ and empty otherwise,~$C$ is a term in $(\Sigmas \mathds{1}_{\tau_1})_{\tau_2}$ with at most one occurrence of the variable~`$\cdot$'. We let $C[t]$ denote the result of substituting $t\in \Tr_{\tau_1}$ for the hole.
\item \label{not:bigstep} 
  We use the following notations for \emph{weak transitions}:
\begin{itemize}
\item  $\To$ for the reflexive, transitive hull of the reduction
  relation $\to$;
\item $t \xTo{l} s$ if $t\To t'\xto{l} s$ for some
  $t'$ and $l \in \Tr \cup \{\boxtimes_1, \boxtimes_2, \boxplus_1,
  \boxplus_2, \mu\}$;
\item $t{\Downarrow}$ if $t$ terminates, i.e.\
  $t\xTo{l} s$ for some $l$ and $s$. 
\end{itemize}
\end{enumerate}
\end{notation}
The \emph{contextual preorder} $\lesssim$ and \emph{contextual equivalence} $\simeq$ are the $\Ty$-sorted relations on $\Tr$ given by
\begin{align}
  t \lesssim_{\tau} s &\text{\quad if \quad}
  \forall \tau'.\forall C \c \cty{\tau}{\tau'}.\, C[t] {\Downarrow}
  \ \implies\, C[s] {\Downarrow}, \label{eq:ctxprox}\\
  t \simeq_{\tau} s&\text{\quad if \quad}
  \forall \tau'.\forall C \c \cty{\tau}{\tau'}.\, C[t] {\Downarrow}
  \iff C[s] {\Downarrow}.  \label{eq:ctxequiv}
\end{align}
Since direct reasoning on $\lesssim$ and $\simeq$ is difficult, we introduce a \emph{(step-indexed) logical relation} for \stscr, which will give rise to a sound proof method for the contextual preorder (\cref{cor:sound}):
\begin{definition}\label{def:logrel} The \emph{step-indexed logical relation} $\logrel$ for \stscr is the family of relations $(\logrel^\alpha \seq \Tr\times \Tr)_{\alpha\leq \omega}$ defined inductively by
\begin{flalign*}
 \logrel^{0} =&\; \top,&\\
 \quad\logrel^{n+1} =&\; \logrel^{n} \cap \mathcal{E}(\logrel^{n}) \cap \mathcal{V}(\logrel^{n},\logrel^{n}),& \logrel^{\omega} =&\; \bigcap_{n < \omega} \logrel^{n},\quad
\end{flalign*}
where $\mathcal{E} \c \RelAt{\Tr} \to \RelAt{\Tr}$ and $\mathcal{V} \c (\RelAt{\Tr})^{\opp} \times \RelAt{\Tr} \to \RelAt{\Tr}$ are the monotone maps given as follows:
  \begin{align*}
    & \mathcal{E}_{\tau}(R) = \{(t,s)
      \mid \text{if $t \to t'$ then $\exists s'.\,s \To s' \land R_{\tau}(t',s')$}\} \\
    & \mathcal{V}_{\tau_1\boxplus\tau_2}(Q,R) = \{(t,s)
      \mid \text{if $t \xto{\boxplus_{1}} t'$ then $\exists s'.\,s \xTo{\boxplus_{1}} s'$}
      \land R_{\tau_{1}}(t',s'), \\
    & \qquad \text{if $t \xto{\boxplus_{2}} t'$ then $\exists s'.\,s \xTo{\boxplus_{2}} s'$}
      \land R_{\tau_{2}}(t',s') \}\\
    & \mathcal{V}_{\tau_1\boxtimes\tau_2}(Q,R) = \{(t,s)
      \mid \text{if $t \xto{\boxtimes_{1}} t_{1} \land t \xto{\boxtimes_{2}} t_{2}$ then} \\
    & \qquad \exists s_{1},s_{2}.\,s \xTo{\boxtimes_{1}} s_{1} \land s \xTo{\boxtimes_{2}} s_{2}
      \land R_{\tau_{1}}(t_{1},s_{1}) \land R_{\tau_{2}}(t_{2},s_{2}) \}\\
    & \mathcal{V}_{\arty{\tau_{1}}{\tau_{2}}}(Q,R) = \{(t,s)
      \mid \text{for all $e_{1},e_{2} \c \tau_{1},\,Q_{\tau_{1}}(e_{1},e_{2}),$} \\
    & \qquad \text{if $t \xto{e_{1}} t'$ then $\exists s'.\,s \xTo{e_{2}} s' \land R_{\tau_{2}}(t',s')$} \}\\
    & \mathcal{V}_{\mu\alpha.\tau}(Q,R) = \{(t,s)
      \mid \text{if $t \xto{\mu} t'$ then} \\*
    & \qquad \exists s'.\,s \xTo{\mu} s' \land R_{\tau[\mu\alpha.\tau/\alpha]}(t',s') \}
    \end{align*}
\end{definition} 

\begin{remark}\label{rem:logrel-props}
\begin{enumerate}\item\label{rem:logrel-props-1} While details can vary greatly in the construction of logical relations in the literature, their common key feature is that related terms of function type send related inputs to related outputs. In our case, this is reflected by the definition of $\mathcal{V}_{\arty{\tau_{1}}{\tau_{2}}}$.
\item\label{rem:logrel-props-2} One may think of extending $(\logrel^\alpha)_{\alpha\leq \omega}$ beyond $\omega$ by putting 
\[ \logrel^{\omega+1} = \logrel^{\omega}
      \cap \mathcal{E}(\logrel^{\omega})
      \cap \mathcal{V}(\logrel^{\omega},\logrel^{\omega}) \]
etc. It is, however, easy to verify that $\logrel^{\omega+1}=\logrel^\omega$. In contrast, in effectful (e.g.\ non-deterministic or probabilistic) settings, a transfinite extension may be required~\cite{SpiesKrishnaswamiEtAl21,DBLP:journals/pacmpl/AguirreB23}. Our generic logical relations introduced in~\autoref{sec:logrel} thus use indexing by arbitrary ordinals.
\item\label{rem:logrel-props-3} 
In simply typed languages, such as the fragment of \stscr without recursive
types (and with an explicit $\utype$ type), it is possible to construct an alternative
logical relation $\ol{\logrel}\seq \Tr\times \Tr$ more directly via structural induction over types. For instance, assuming that $\ol{\logrel}_{\tau_1}$ and $\ol{\logrel}_{\tau_2}$ have already been defined,  $\ol{\logrel}_{\arty{\tau_1}{\tau_2}}$ is given by all $(t,s)\in \Tr_{\arty{\tau_1}{\tau_2}}\times \Tr_{\arty{\tau_1}{\tau_2}}$ such that 
\[ \text{if~~ $t\xTo{e_1} t'\,\wedge\,\ol{\logrel}_{\tau_1}(e_1,e_2)$ ~~then~~ $\exists s'.\, s\xTo{e_2} s'\,\wedge \,\ol{\logrel}_{\tau_2}(t',s')$}.\]
This approach does not extend to untyped languages, or languages with unrestricted recursive types like \stscr, which motivates step-indexing.
From a more abstract perspective emphasized by \citet{gmstu24}, the above inductive
definition of $\ol{\logrel}$ is possible because behaviour functors $B$ for simply
typed languages are contractive w.r.t.\ a suitable
{ultrametric}~\cite{235f3d3a0dee4f339a8f59beff18cae3} on the subobject
lattices of $\C=\Set^\Ty$. This fails in presence of recursive types.
\end{enumerate}
\end{remark}
The key property of $\logrel$ is its compatibility with all operations of~$\Tr$, the initial algebra of closed \stscr-terms:

\begin{theorem}
  \label{th:l-is-cong}
 For all $\alpha\leq\omega$ the relation $\logrel^{\alpha}$ is a congruence on $\Tr$.
\end{theorem}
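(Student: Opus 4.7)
The plan is to proceed by induction on $\alpha\leq\omega$. The base case $\alpha=0$ is immediate since $\logrel^0=\top$ is trivially compatible with every operation (using the fact that $\Tr$ is reflexive in $\top$ in each sort). The limit case $\alpha=\omega$ is also easy: since $\logrel^\omega=\bigcap_{n<\omega}\logrel^n$ and an intersection of congruences is a congruence, compatibility transfers from the finite stages.

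For the successor case, assume $\logrel^n$ is a congruence and prove that $\logrel^{n+1}=\logrel^n\cap\mathcal{E}(\logrel^n)\cap\mathcal{V}(\logrel^n,\logrel^n)$ is one too. Fix an operation symbol $\f\colon\tau_1\times\cdots\times\tau_k\to\tau$ and argument pairs $(t_i,s_i)\in\logrel^{n+1}_{\tau_i}$. Membership of $(\f(\vec t),\f(\vec s))$ in $\logrel^n$ is immediate from the inductive hypothesis. The bulk of the work is checking the two other components, and it splits naturally according to whether $\f$ is a value-producing constructor or an elimination/auxiliary operator:

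\textbf{Value constructors} ($S$, $S'$, $S''$, $K$, $K'$, $I$, $\inl$, $\inr$, $\mathsf{pair}$, $\mathsf{fold}$). Here the term $\f(\vec t)$ performs no $\to$-step, so $\mathcal{E}(\logrel^n)$ is vacuous, and we just verify $\mathcal{V}(\logrel^n,\logrel^n)$ by reading off the single labelled transition from \autoref{fig:skirules}. For example, if $S_{\tau_1,\tau_2,\tau_3}\xto{e_1}S'(e_1)$ and $Q(e_1,e_2)$ with $Q=\logrel^n$, we pick the matching weak transition $S\xTo{e_2} S'(e_2)$ and note $\logrel^n(S'(e_1),S'(e_2))$ by the inductive hypothesis (congruence of $\logrel^n$). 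The function-type case $\mathcal{V}_{\arty{\tau_1}{\tau_2}}$ is the most characteristic: for $K'(t)$, if $K'(t)\xto{e}t$ then we pair this with $K'(s)\xTo{e} s$ for any $e$, using $\logrel^n(t,s)$.

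\textbf{Elimination and auxiliary operators} ($\mathsf{app}$, $\mathsf{case}$, $\mathsf{fst}$, $\mathsf{snd}$, $\mathsf{unfold}$). Here we must check $\mathcal{E}(\logrel^n)$, which requires matching a single $\to$-step of $\f(\vec t)$ by a weak sequence from $\f(\vec s)$. This is done by case distinction on which rule applies. For $\mathsf{app}(t_1,t_2)$ the two cases are: (i) a congruence step $t_1\to t_1'$, matched using $\mathcal{E}(\logrel^n)(t_1,s_1)$ and congruence of $\logrel^n$ under $\mathsf{app}(\argument,s_2)$; (ii) a $\beta$-step $t_1\xto{t_2}t_1'$, matched using $\mathcal{V}_{\arty{\tau_1}{\tau_2}}(\logrel^n,\logrel^n)(t_1,s_1)$ with instance $\logrel^n(t_2,s_2)$ to obtain a weak transition $s_1\xTo{s_2}s_1'$ with $\logrel^n(t_1',s_1')$, then prefixing the resulting reduction by $\mathsf{app}(s_1,s_2)\To\mathsf{app}(s_1',s_2)$ and contracting. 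The remaining eliminators follow the same schema: a congruence-step case handled by the IH on $\mathcal{E}$, and a value-consumption case handled by $\mathcal{V}$.

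The main obstacle will be the $\mathsf{app}$ and $\mathsf{case}$ cases, where one has to chain together an (arbitrarily long) inner reduction provided by $\mathcal{E}$ or $\mathcal{V}$ with an outer congruence-closure of $\To$ under the operation; this is precisely the ``heads-in-values can be replaced by weak reductions'' fact that must be verified for every eliminator. Fortunately, for each such operator the relevant congruence property of weak reduction is a direct consequence of the contextual rules in \autoref{fig:skirules} (such as $t\to t'$ implying $\mathsf{app}(t,s)\to\mathsf{app}(t',s)$), so the checks are mechanical but numerous. Once all cases are dispatched, we conclude that $\logrel^{n+1}$ is compatible with every operation of $\Sigma$, which completes the induction.
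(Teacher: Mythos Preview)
Your proposal is correct and follows essentially the same approach as the paper: induction on the index, with the successor step established by a case distinction over all operation symbols, verifying separately the $\logrel^n$, $\mathcal{E}(\logrel^n)$, and $\mathcal{V}(\logrel^n,\logrel^n)$ components. Your organization into value constructors (where $\mathcal{E}$ is vacuous) versus eliminators (where $\mathcal{V}$ is vacuous, since those terms admit no labelled transitions) matches the paper's case analysis, and the key $\mathsf{app}$ argument is handled identically.
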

\begin{proof}[Proof sketch]
We show that each $\logrel^n$ ($n<\omega$) is a congruence; this
implies that $\logrel^\omega$ is a congruence, since congruences are
closed under intersection. The proof is by induction on $n$. The base
case~\mbox{$n=0$} is trivial, since $\logrel^0=\top$. For the inductive
step $n\to n+1$, suppose that $\logrel^n$ is a congruence. To show that $\logrel^{n+1}$ is a congruence, we need to prove that for each $\Sigma$-operation $\mathsf{f}\colon \tau_{1} \times \dots \times \tau_{k} \to \tau$ and all
  terms $t_{i},s_{i} \c \tau_{i}$ such that
  $\logrel^{n+1}_{\tau_{i}}(t_{i},s_{i})$ for $i=1,\ldots,k$, we have 
$\logrel^{n+1}_{\tau}(\mathsf{f}(t_{1},\dots,t_{k}),
    \mathsf{f}(s_{1},\dots,s_{k}))$. This is equivalent to
  \begin{enumerate}
  \item \label{cong:enum1a}
    $\logrel^{n}_{\tau}(\mathsf{f}(t_{1},\dots,t_{k}),
    \mathsf{f}(s_{1},\dots,s_{k}))$.
  \item \label{cong:enum2a}
    $(\mathsf{f}(t_{1},\dots,t_{k}),
    \mathsf{f}(s_{1},\dots,s_{k}))\in \mathcal{E}_{\tau}(\logrel^{n})$;
  \item \label{cong:enum3a} $(\mathsf{f}(t_{1},\dots,t_{k}),
    \mathsf{f}(s_{1},\dots,s_{k}))\in \mathcal{V}_{\tau}(\logrel^n,\logrel^{n})$.
  \end{enumerate}
Statement \ref{cong:enum1a} holds because $\logrel^{n+1}\seq
\logrel^n$ and $\logrel^n$ is a congruence. Statements
\ref{cong:enum2a} and \ref{cong:enum3a} require a long case
distinction over the 15 operation symbols $\f$ of $\Sigma$.
Let us illustrate the case of application. We need to show that
    $\logrel^{n+1}_{\arty{\tau_{1}}{\tau_{2}}}(t_{1},s_{1})$ and $\logrel^{n+1}_{{\tau_{1}}}(t_{2},s_{2})$   implies \ref{cong:enum2a} $(t_1 \app{}{} t_2, s_1\app{}{} s_2)\in \mathcal{E}_{\tau_2}(\logrel^{n})$ and \ref{cong:enum3a} $(t_1 \app{}{} t_2, s_1\app{}{} s_2)\in \mathcal{V}_{\tau_2}(\logrel^n,\logrel^{n})$. Note that \ref{cong:enum3a} holds vacuously as applications do not admit labelled transitions. For \ref{cong:enum2a}, suppose that $t_{1} \app{}{} t_{2} \to t$. By the semantics of application (\autoref{fig:skirules}) such a transition may occur for two possible reasons:
   \begin{itemize} \item \textbf{Case 1:}
      $t_{1} \to t_{1}'$ and $t = t_{1}' \app{}{} t_{2}$. By
      $\logrel^{n+1}_{\arty{\tau_{1}}{\tau_{2}}}(t_{1},s_{1})$, there exists
      $s_{1}'$ such that $s_{1} \To s_{1}'$ and
      $\logrel^{n}_{\arty{\tau_{1}}{\tau_{2}}}(t_{1}',s_{1}')$. Since $s_{1} \app{}{} s_{2} \To s_{1}'\app{}{}s_{2}$, it suffices to show $\logrel^{n}_{\tau_{2}}(t_{1}' \app{}{}
      t_{2},s_{1}'\app{}{}s_{2})$. This holds because $\logrel^{n}_{\arty{\tau_{1}}{\tau_{2}}}(t_{1}',s_{1}')$ and $\logrel^{n+1}_{{\tau_{1}}}(t_{2},s_{2})$ (hence $\logrel^{n}_{{\tau_{1}}}(t_{2},s_{2})$) and $\logrel^{n}$ is a congruence.
      \item \textbf{Case 2:} $t_{1} \xto{t_{2}} t$.
      Since $\logrel^{n+1}_{\arty{\tau_{1}}{\tau_{2}}}(t_{1},s_{1})$ and $\logrel^{n+1}(t_2,s_2)$ (hence $\logrel^{n}(t_2,s_2)$), there exists $s$ such that $s_{1} \xTo{s_{2}} s$ and $\logrel^n(t,s)$. Moreover $s_1\app{}{} s_2\To s$, which proves the claim.\qedhere
\end{itemize}
\end{proof}

A complete proof can be found in \ifarx{the appendix}{\cite[Appendix A]{gmtu24_arxiv}}. An immediate consequence of \Cref{th:l-is-cong} is the so-called
\emph{Fundamental Property} of $\logrel^{\omega}$, namely that all terms are related to themselves:

\begin{corollary}
  The relation $\logrel^{\omega}$ is reflexive.
\end{corollary}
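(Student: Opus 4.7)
The plan is to derive reflexivity directly from \Cref{th:l-is-cong}, using the general fact recorded in \Cref{sec:categories} that every congruence on the initial algebra $\mS$ is reflexive in every sort. By \Cref{th:l-is-cong}, $\logrel^\omega$ is a congruence on the $\Sigma$-algebra $\Tr$, and since $\Tr = \mS$ is the initial $\Sigma$-algebra of closed \stscr-terms, applying this fact immediately yields $\logrel^\omega_\tau(t,t)$ for all $t \c \tau$.

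The only thing needing justification is the cited fact, which follows by structural induction on closed terms. For the base case, given a constant $c \c s$ from $\Sigma$, reflexivity $R_s(c,c)$ is the degenerate ($n=0$) instance of the congruence condition for $c$. For the inductive step, suppose $t = \f(t_1,\ldots,t_n)$ with $\f \c s_1\times\cdots\times s_n \to s$; by the induction hypothesis $R_{s_i}(t_i,t_i)$ for each $i$, so the congruence condition for $\f$ gives $R_s(\f(t_1,\ldots,t_n), \f(t_1,\ldots,t_n))$. Since every element of $\mS = \Tr$ arises as such a closed term, the induction covers all of $\Tr$. No obstacle is anticipated; the corollary is essentially a one-line consequence of the preceding theorem together with the elementary observation about congruences on initial algebras.
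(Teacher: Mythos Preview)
Your proposal is correct and matches the paper's approach exactly: the paper's proof is the single sentence ``Indeed, all congruences on an initial (term) algebra are reflexive,'' and you have simply spelled out the structural-induction argument behind that observation.
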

\noindent
Indeed, all congruences on an initial (term) algebra are reflexive (\autoref{sec:categories}). As another important corollary of \Cref{th:l-is-cong}, we conclude that $\logrel$ is sound for the contextual preorder:
\begin{corollary} \label{cor:sound} For all $\tau\in \Ty$ and terms $t,s\in \Tr_\tau$, we have
\begin{equation}\label{eq:soundness} \logrel^{\omega}_{\tau}(t,s)
  \implies t \lesssim_{\tau} s.
\end{equation}
\end{corollary}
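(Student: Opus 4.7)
The plan is to reduce the corollary to the congruence property of $\logrel^{\omega}$ established in \Cref{th:l-is-cong}, together with its reflexivity (the preceding corollary) and the identity $\logrel^{\omega+1} = \logrel^{\omega}$ recorded in \Cref{rem:logrel-props}, which in turn entails the inclusions $\logrel^{\omega} \seq \mathcal{E}(\logrel^{\omega})$ and $\logrel^{\omega} \seq \mathcal{V}(\logrel^{\omega}, \logrel^{\omega})$.

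First I would prove a \emph{context lemma}: for every context $C \c \cty{\tau}{\tau'}$ and all $t, s \in \Tr_{\tau}$ with $\logrel^{\omega}_{\tau}(t,s)$, one has $\logrel^{\omega}_{\tau'}(C[t], C[s])$. This proceeds by structural induction on $C$, regarded as a term in $(\Sigmas \mathds{1}_{\tau})_{\tau'}$. The hole case is immediate; at each operation symbol of $\Sigma$ one applies the congruence property of $\logrel^{\omega}$ to the induction hypothesis, using reflexivity of $\logrel^{\omega}$ on those argument subterms of $C$ in which the hole does not occur.

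Second, I would establish a \emph{preservation lemma} for weak reductions: whenever $\logrel^{\omega}_{\tau'}(p,q)$ and $p \To p'$, there exists $q'$ with $q \To q'$ and $\logrel^{\omega}_{\tau'}(p',q')$. This follows by a straightforward induction on the number of one-step reductions in $p \To p'$, applying the inclusion $\logrel^{\omega} \seq \mathcal{E}(\logrel^{\omega})$ at each step to propagate the relation. With these two lemmas at hand, the corollary becomes direct: assume $\logrel^{\omega}_{\tau}(t,s)$ and fix $C \c \cty{\tau}{\tau'}$ with $C[t] \Downarrow$. Unfolding \Cref{not:weak-trans}, we have $C[t] \To p'$ and $p' \xto{l} u$ for some label $l$ and term $u$. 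The context lemma gives $\logrel^{\omega}_{\tau'}(C[t], C[s])$, and the preservation lemma then yields $q'$ with $C[s] \To q'$ and $\logrel^{\omega}_{\tau'}(p',q')$. Since $\logrel^{\omega} \seq \mathcal{V}(\logrel^{\omega},\logrel^{\omega})$, a case distinction on the shape of $\tau'$ and on $l$ invokes the appropriate clause of \Cref{def:logrel} to produce $v$ with $q' \xTo{l} v$; in the function-type case, where $l$ is itself a term $e$, one instantiates the clause with $e_1 = e_2 = e$ and appeals to reflexivity of $\logrel^{\omega}$ to supply the required input relation. Concatenating the reductions yields $C[s] \xTo{l} v$, so $C[s] \Downarrow$.

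The principal subtlety I expect is managing the function-type sub-case in the final step: the clause $\mathcal{V}_{\arty{\tau_1}{\tau_2}}$ is phrased using arbitrary related inputs $e_1, e_2$, so one must specialize to $e_1 = e_2 = l$ and explicitly invoke reflexivity of $\logrel^{\omega}$ to extract the matching weak transition from $q'$. Everything else is bookkeeping once the context and preservation lemmas are in place.
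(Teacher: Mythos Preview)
Your proof is correct. The context lemma is exactly what the paper does in its step~(2), just stated more explicitly; the difference lies in the adequacy argument. The paper does \emph{not} invoke the fixed-point identity $\logrel^{\omega+1}=\logrel^\omega$: instead, if $t$ terminates in $n$ steps at $t'$, it starts from $\logrel^{n+1}(t,s)$ and repeatedly applies $\logrel^{k+1}\subseteq\mathcal{E}(\logrel^{k})$ to descend through the indices, arriving at some $s'$ with $s\To s'$ and $\logrel^{1}(t',s')$; the labelled transition of $t'$ is then matched via $\logrel^{1}\subseteq\mathcal{V}(\top,\top)$. Your route stays at level $\omega$ throughout and treats $\logrel^\omega$ as a genuine fixed point, which is conceptually clean but leans on \Cref{rem:logrel-props}\ref{rem:logrel-props-2}; that identity in turn needs determinism of the transition system (to pin down a single witness $s'$ working for all~$n$), a point you should make explicit if you go this way. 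The paper's index-descent is more self-contained and shows directly how the step-indexing drives the argument.
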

\begin{proof}
\begin{enumerate}
\item Note first that $\L^\omega$ is \emph{adequate} w.r.t.\ termination, i.e.\ 
 $\logrel^{\omega}_{\tau}(t,s)$ implies that if $t{\Downarrow}$ then $s {\Downarrow}$.
To see this, let $\logrel^{\omega}_{\tau}(t,s)$ and suppose that $t{\Downarrow}$, say $t$ terminates in $n$ steps in the term $t'$. Since $\L^{n+1}_\tau(t,s)$, there exists $s'$ such that $s\To s'$ and $\L^{1}_\tau(t',s')$. Since $t'$ has a labelled transition, we get $s'{\Downarrow}$ by definition of $\L^{1}$, hence $s{\Downarrow}$.
\item To prove \eqref{eq:soundness}, suppose that $\logrel^{\omega}_{\tau}(t,s)$. Since $\logrel^\omega$ is a congruence, it follows that $\logrel^\omega_{\tau'}(C[t],C[s])$ for every context $C\colon \cty{\tau}{\tau'}$. Hence, if $C[t]{\Downarrow}$ then $C[s]{\Downarrow}$ by adequacy, which proves $t\lesssim_\tau s$.\qedhere
\end{enumerate}
\end{proof}

\begin{remark}\label{rem:logrel-backwards-closed}
In order to prove $t\lesssim s$, it thus suffices to prove $\L^\omega(t,s)$, or equivalently $\L^n(t,s)$ for all $n<\omega$. 
A useful observation for such proofs is that $\logrel^n$ is backwards closed under silent transitions: if $t\To t'$ and $s\To s'$, 
then $\logrel^n(t',s')$ implies $\logrel^n(t,s)$.
\end{remark}

\begin{example}\label{ex:eta-left-to-right}
We put \autoref{cor:sound} to the test by proving 
\begin{equation}\label{eq:eta-left-to-right}
  f \lesssim S \app{}{} (K \app{}{} I) \app{}{} f
  \quad  \text{for all $f \c \arty{\tau_{1}}{\tau_{2}}$}.
\end{equation}
The term $S \app{}{} (K \app{}{} I) \app{}{} f$ behaves like the $\lambda$-term $\lambda x.\, f x$, as
\[ 
(S \app{}{} (K \app{}{} I) \app{}{} f)\, t\To (K \app{}{}
I\, t)(f\, t)\To I\, (f \, t)\to f\,t,
\]
so \eqref{eq:eta-left-to-right} is an analogue of one half of the $\eta$-law of the $\lambda$-calculus. It suffices to prove by induction that $\logrel^n(f,S \app{}{} (K \app{}{} I) \app{}{} f)$ for all $n<\omega$. We prove a slightly stronger statement, namely $\logrel^n(t,S \app{}{} (K \app{}{} I) \app{}{} f)$ whenever $f\To t$. The base case $n=0$ is trivial, as $\logrel^0=\top$. For the inductive step $n\to n+1$, we consider two cases:
\begin{enumerate}
\item $t\to t'$. Then $S \app{}{} (K \app{}{} I)\,f \To S \app{}{} (K \app{}{}
    I)\, f$, and $\logrel_{\arty{\tau_{1}}{\tau_{2}}}^{n}(t', S \app{}{} (K
    \app{}{} I)\, f)$ by the inductive hypothesis as $f \To t'$, hence $\logrel^{n+1}(t,S \app{}{} (K \app{}{} I) \app{}{} f)$.
\item $t \xrightarrow{e} t'$. Then
$ S \app{}{} (K \app{}{} I)\, f
    \xTo{e'} (K \app{}{} I \app{}{} e') \app{}{} (f \app{}{} e') \To f
    \app{}{} e' \To t \app{}{} e' \to t''$ for every $e'$,
where $t\xto{e'} t''$. To prove $\logrel^{n+1}(t,S \app{}{} (K \app{}{} I) \app{}{} f)$ we need to show that if $\logrel^n(e,e')$ then $\logrel^n(t', (K \app{}{} I \app{}{} e') \app{}{} (f \app{}{} e') )$. By \autoref{rem:logrel-backwards-closed} it suffices to show $\logrel^n(t', t'')$; this holds as $\logrel^{n+1}(t,t)$ by reflexivity.
\end{enumerate}
\end{example}
The other direction  $S \app{}{} (K \app{}{} I) \app{}{} f \lesssim f$ of the $\eta$-law generally fails because $S \app{}{} (K \app{}{} I) \app{}{} f$ always terminates, while $f$ may diverge. However, we shall see below that it does hold when restricting to contexts of ground type like the type of
booleans. Specifically, we define the \emph{ground contextual preorder} and \emph{ground contextual
equivalence} by
\begin{align}
  t \lesssim_{\tau}^{\mathrm{\booltype}} s &\text{\quad if \quad}
 \forall C \c \cty{\tau}{\booltype}.\, C[t] {\Downarrow}
  \ \implies\, C[s] {\Downarrow},\label{eq:ctxproxground}\\
  t \simeq_{\tau}^{\mathrm{\booltype}} s&\text{\quad if \quad}
  \forall C \c \cty{\tau}{\booltype}.\, C[t] {\Downarrow}
  \iff C[s] {\Downarrow}.  \label{eq:ctxequivground}
\end{align}
The ground contextual preorder is (subtly) coarser than the contextual preorder, see e.g.~\cite[\textsection 5]{pitts1997operationally}. To get a sufficiently expressive logical relation for $\lesssim^\booltype$, we extend the weak transition relation~$\xTo{l}$:
\begin{notation}
  \label{not:myarr}
For $t,s\in \Tr_{\arty{\tau_1}{\tau_2}}$ and $e\in \Tr_{\tau_1}$ put
\[ t\myarr{e} s \quad\iff\quad \exists t'.\, (t\To t') \wedge (t'\xto{e} s \vee s=t'\app{}{}e).\]
For terms which are not of function type, put $t\myarr{l} s$ iff $t\xTo{l} s$. 
\end{notation}

\begin{definition}\label{def:logreltwo}
The step-indexed logical relation $\logreltwo$ for \stscr is constructed like $\logrel$, except that all occurrences of weak labelled transitions $\xTo{l}$ in the definitions of $\mathcal{E}$ and $\mathcal{V}$ are replaced by $\myarr{l}$.
\end{definition}

Again we obtain compatibility with the language operations:
\begin{theorem}
  \label{th:l2-is-cong}
  For all $\alpha\leq \omega$ the relation $\logreltwo^{\alpha}$ is a congruence on~$\Tr$.
\end{theorem}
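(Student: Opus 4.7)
The plan is to mirror the inductive proof of \autoref{th:l-is-cong}: establish by induction on $n$ that each $\logreltwo^n$ is a congruence on $\Tr$ for $n < \omega$, and then deduce the case $\alpha = \omega$ from closure of congruences under intersection. The base case $n=0$ is immediate since $\logreltwo^0 = \top$. For the inductive step, assuming $\logreltwo^n$ is a congruence, I would prove for every operation $\f \c \tau_1 \times \dots \times \tau_k \to \tau$ and operands $t_i, s_i \c \tau_i$ with $\logreltwo^{n+1}_{\tau_i}(t_i, s_i)$ that $\logreltwo^{n+1}_\tau(\f(t_1,\dots,t_k), \f(s_1,\dots,s_k))$ holds, decomposed as the three obligations of lying in $\logreltwo^n$, in $\mathcal{E}_\tau(\logreltwo^n)$, and in $\mathcal{V}_\tau(\logreltwo^n,\logreltwo^n)$, exactly as for $\logrel$.

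The decisive observation is that $\myarr{l}$ and $\xTo{l}$ coincide for every non-function-type label $l \in \{\boxplus_1, \boxplus_2, \boxtimes_1, \boxtimes_2, \mu\}$. Consequently, all case-analysis steps from the proof of \autoref{th:l-is-cong} involving sums, products, recursive types, and their destructors ($\inl$, $\inr$, $\mathsf{case}$, $\mathsf{pair}$, $\mathsf{fst}$, $\mathsf{snd}$, $\mathsf{fold}$, $\mathsf{unfold}$) transfer verbatim. Similarly, for each function-valued combinator $S$, $S'$, $S''$, $K$, $K'$, $I$, the $\mathcal{V}_{\arty{\tau_1}{\tau_2}}$ obligation is met because the required witness for $s \myarr{e_2} s'$ is supplied by the corresponding strong transition $s \xto{e_2} s'$, exactly as in the case of $\logrel$; the extra flexibility of $\myarr{l}$ is not needed on this side.

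The only place where the relaxed notion of weak transition genuinely matters is the $\mathcal{E}_{\tau_2}$ verification for $\mathsf{app}_{\tau_1,\tau_2}$ in the subcase $t_1 \app{}{} t_2 \to t$ arising from $t_1 \xto{t_2} t$. Instantiating the function-case clause of $\mathcal{V}$ via $\logreltwo^{n+1}(t_1, s_1)$ and $\logreltwo^n(t_2, s_2)$ one obtains some $s$ with $s_1 \myarr{s_2} s$ and $\logreltwo^n_{\tau_2}(t,s)$. Unfolding $\myarr{s_2}$ yields $s_1'$ with $s_1 \To s_1'$ satisfying either (a) $s_1' \xto{s_2} s$, whence $s_1 \app{}{} s_2 \To s_1' \app{}{} s_2 \to s$ by the application rules of \autoref{fig:skirules}, or (b) $s = s_1' \app{}{} s_2$, whence $s_1 \app{}{} s_2 \To s_1' \app{}{} s_2 = s$. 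In either case $s_1 \app{}{} s_2 \To s$, as required. The other subcase $t_1 \to t_1'$ is identical to Case 1 of \autoref{th:l-is-cong}, and the $\logreltwo^n$-membership obligation follows from the inductive hypothesis applied componentwise.

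The main obstacle is essentially bookkeeping: verifying that the switch from $\xTo{l}$ to $\myarr{l}$ does not invalidate any of the 15 operation cases. Since this change only enlarges the set of admissible witnesses and only on function-type labels, every case either reduces directly to the argument of \autoref{th:l-is-cong} or is dispatched by the application subcase above. I would therefore structure the full proof operation by operation, treating $\mathsf{app}$ in detail and merely noting that the remaining cases proceed verbatim.
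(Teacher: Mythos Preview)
Your proposal is correct and follows essentially the same approach as the paper. The paper's proof simply says that the argument works similarly to that of \autoref{th:l-is-cong}, modulo handling the extra transitions in cases such as application, which is precisely the structure you outline; your detailed treatment of the $\mathsf{app}$ case---splitting $s_1 \myarr{s_2} s$ into the two alternatives and showing that each yields $s_1 \app{}{} s_2 \To s$---is exactly the ``handling'' the paper has in mind.
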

\noindent
The proof works similarly to that of \Cref{th:l-is-cong}, modulo handling the extra
transitions in cases such as application. The relation $\logreltwo$ is sound for the ground contextual preorder:

\begin{corollary} \label{cor:sound2} For all $\tau\in \Ty$ and terms $t,s\in \Tr_\tau$, we have
\[ \logreltwo^{\omega}_{\tau}(t,s)
  \implies t \lesssim_{\tau}^\booltype s.\]
\end{corollary}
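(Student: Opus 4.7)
The proof runs in parallel to that of \Cref{cor:sound}, combining the congruence property of $\logreltwo^\omega$ from \Cref{th:l2-is-cong} with an adequacy lemma. The plan is to first establish adequacy of $\logreltwo^\omega$ at the boolean type, namely: if $\logreltwo^\omega_\booltype(t,s)$ and $t{\Downarrow}$, then $s{\Downarrow}$. Once adequacy is in place, the corollary follows by the same one-line argument as before: given $\logreltwo^\omega_\tau(t,s)$ and a context $C\c\cty{\tau}{\booltype}$, congruence of $\logreltwo^\omega$ yields $\logreltwo^\omega_\booltype(C[t],C[s])$, so if $C[t]{\Downarrow}$ then $C[s]{\Downarrow}$ by adequacy, which is exactly $t\lesssim_\tau^\booltype s$.

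For adequacy, suppose $t$ terminates in $n$ steps, ending in a labelled transition $t\xTo{\boxplus_i} t'$ (at type $\booltype=\utype\boxplus\utype$, termination must happen via a $\boxplus_i$-transition). From $\logreltwo^{n+1}_\booltype(t,s)$ and the clause for $\mathcal{V}_{\tau_1\boxplus\tau_2}$ (now phrased with $\myarr{\boxplus_i}$ instead of $\xTo{\boxplus_i}$), we obtain some $s'$ with $s\myarr{\boxplus_i} s'$. The crucial observation is that at non-function types $\myarr{l}$ coincides with $\xTo{l}$, so in particular $s\xTo{\boxplus_i} s'$ and hence $s{\Downarrow}$.

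The only point that differs substantively from the proof of \Cref{cor:sound} is this coincidence between $\myarr{l}$ and $\xTo{l}$ at type $\booltype$; this is precisely why the ground restriction is chosen at a non-function type, and it is what makes $\logreltwo$ sound for $\lesssim^\booltype$ even though $\myarr{\cdot}$ is a strictly larger relation at function types than $\xTo{\cdot}$. I do not anticipate any real obstacle: the congruence theorem has already been stated, and the adequacy argument is essentially the same short induction on the length of reduction used for $\logrel$ in \Cref{cor:sound}, with the boolean type playing the role previously played by an arbitrary observation type.
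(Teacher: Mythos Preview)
Your proposal is correct and matches the paper's approach: establish adequacy of $\logreltwo^\omega$ at the type $\booltype$, then combine with the congruence property from \Cref{th:l2-is-cong} exactly as in \Cref{cor:sound}. You in fact make explicit the key point the paper leaves unstated, namely that $\myarr{l}$ coincides with $\xTo{l}$ at non-function types, which is precisely why adequacy survives at $\booltype$ despite the enlarged transition relation.
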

\begin{proof}
  Unlike $\logrel^{\omega}$, the relation $\logreltwo^{\omega}$ is not adequate to observe
  termination at all types. It is, however, adequate at $\booltype$:
 \[
    \logreltwo^{\omega}_{\booltype}(t,s) \implies (t{\Downarrow} \implies s {\Downarrow}).
 \]
The remaining argument is like in the proof of \autoref{cor:sound}.
\end{proof}

The extra permissiveness of
$\myarr{l}$ allows potentially diverging terms $t \c \arty{\tau_{1}}{\tau_{2}}$ to
be tested and compared against terminating terms, in a manner similar to
\cite[\textsection 4]{DBLP:journals/tcs/Gordon99}. In particular, it
helps us to establish the other direction of the $\eta$-law:
\begin{example}\label{ex:eta-full}
We prove the full $\eta$-law w.r.t.\ ground equivalence:
\begin{equation}
  f \simeq^\booltype S \app{}{} (K \app{}{} I) \app{}{} f
  \quad  \text{for all $f \c \arty{\tau_{1}}{\tau_{2}}$}.
\end{equation}
We already know $f\lesssim S \app{}{} (K \app{}{} I) \app{}{} f$ from \autoref{ex:eta-left-to-right}, which implies $f\lesssim^\bool S \app{}{} (K \app{}{} I) \app{}{} f$ because $\lesssim^\bool$ is coarser than $\lesssim$. For the reverse direction $S \app{}{} (K \app{}{} I) \app{}{} f \lesssim^\bool f$, it suffices to show $\logreltwo^{n+1}(S \app{}{} (K \app{}{} I) \app{}{} f,  f)$ for all $n<\omega$. We have
\[ S \app{}{} (K \app{}{} I) \app{}{} f \To S''(K\app{}{} I, f) \xto{e} (K\, I\, e)\, (f\, e) \To f\, e. \]
Note that analogous to \autoref{rem:logrel-backwards-closed}, the
relations $\logreltwo^{(-)}$ are backwards closed under silent
transitions. Hence it is enough to prove $\logreltwo^{n+1}(S''(K\, I,
f), f)$. Thus let $\logreltwo^n(e,e')$. Then $f\myarr{e'} f\, e'$, and
it remains to prove $\logreltwo^n((K\, I\, e)\, (f\, e),f\, e')$. For
this it suffices to prove $\logreltwo^n(f\, e, f\, e')$, again using
backwards closure, and this holds because $\logreltwo^n(f,f)$ by reflexivity,  $\logreltwo^n(e,e')$, and
$\logreltwo^n$ is a congruence.
\end{example}

At this point it is worth mentioning some of the difficulties arising from working with logical relations, which the developments for \stscr help
identify and which our theory aims to address. 

First, the principal technical challenge associated with any logical relation is establishing its congruence property. The argument typically follows the structure of the proof of \autoref{th:l-is-cong}: one proceeds via an ultimately straightforward, yet tedious, structural induction over the syntax of the language, requiring meticulous case distinctions along the various operations and their operational rules. In the literature, the individual cases are often organized into separate \emph{compatibility lemmas}. The length and complexity of the corresponding proofs makes them error-prone and arguably hard to trust without formalization in a proof assistant.

Second, another layer of reasoning is required for proving soundness w.r.t.\ contextual equivalence, as in \Cref{cor:sound,cor:sound2}.

Lastly, logical relations are tailor-made for the language under consideration and for the desired notion of contextual equivalence. Every small variation of the setting requires the construction of a separate logical relation, along with new and largely repetitive proofs of the congruence and soundness properties, as illustrated above when passing from $\logrel$ to $\logreltwo$.  

In the remainder we will demonstrate how our categorical approach to operational semantics can help mitigate these issues: for all languages modeled in higher-order abstract GSOS, a \emph{generic} construction of step-indexed logical relations is available, associated with \emph{generic} congruence and soundness theorems.

\section{Logical Relations, Abstractly}\label{sec:logrel}
We present our main technical contribution, a theory of contextual preorders and
(step-indexed) logical relations in the framework of higher-order abstract GSOS. To ensure a convenient calculus of relations, we work under the following global assumptions:

\begin{assumptions}
  \label{assumptions}
  We hereafter fix $\C$ to be a category such that: (1)~$\C$ is complete, (2)~$\C$ has finite coproducts, (3)~$\C$ is well-powered, (4)~monomorphisms are
  stable under finite coproducts, and (5)~strong epimorphisms are stable
  under pullbacks.
\end{assumptions}

\begin{remark}
\begin{enumerate}
\item Recall that an epimorphism $e$ is \emph{strong} if for every commutative square $g\cdot e = m\cdot f$ with $m$ monic, there exists $d$ such that $f=d\cdot e$ and $g=m\cdot d$. Condition (5) means that for every pullback square $e\cdot \ol{f} = f\cdot \ol{e}$, if $e$ is strongly epic, then so is $\ol{e}$.
\item Since $\C$ is complete and well-powered, every morphism $f$ admits a (strong epi, mono)-factorization $f=m\cdot e$~\cite[Prop.~4.4.3]{borceux94}. The subobject represented by $m$ is called the \emph{image} of $f$. All our results can be extended to arbitrary proper factorization systems~\cite{ahs90}.
\end{enumerate}
\end{remark}

\begin{example}\label{ex:categories}
  Categories satisfying our assumptions include the category $\Set$ of
  sets and functions, the category $\Set^{\D}$ of (covariant)
  presheaves on a small category $\D$ and natural transformations, and
  the categories of posets and monotone maps, nominal sets and
  equivariant maps, and metric spaces and non-expansive maps.
\end{example}

\subsection{Relations in Categories}\label{sec:relations}

We review some basic terminology concerning relations in abstract categories. A \emph{relation} on $X\in \C$ is a subobject $\langle \outl_{R}, \outr_{R} \rangle \colon R
\pred{} X \times X$; the projections $\outl_R$ and $\outr_R$ are usually left implicit. A \emph{morphism} from a relation $R
\pred{} X \times X$ to another relation $S
\pred{} Y \times Y$ is given by a morphism $f \colon X \to Y$ in $\C$
such that there exists a (necessarily unique) morphism
$R \to S$ rendering the square below commutative:
\[
\begin{tikzcd}[column sep=5em]
	R \ar[d,tail,"{\langle \outl_{R}, \outr_{R} \rangle}"'] \ar[r,dashed] & S
  \ar[d,tail,"{\langle \outl_{S}, \outr_{S} \rangle}"]\\
	X \times X \ar[r,"f \times f"] & Y \times Y
\end{tikzcd}
\]
We write $\RelCat \C$ for the category of relations in $\C$ and their morphisms. Finite products and coproducts in $\RelCat{\C}$ are formed like in $\C$; for coproducts this is due to Assumption~\ref{assumptions}(4). For $X\in \C$ we denote by
$\RelCat[X]{\C} \subto \Rel(\C)$
the \emph{fiber} at $X$, viz. the non-full subcategory consisting of all relations $R \monoto X
\times X$ and morphisms on the identity $\id_{X} \c X \to X$.
Each fiber $\RelCat[X]{\C}$ is a poset; we write $R\leq S$ for its
partial order. As $\C$ is complete and well-powered, $\RelCat[X]{\C}$
is in fact a complete lattice. We denote the top and bottom element by
$\top$ and $\bot$, respectively, meets (which are given by pullbacks)
by $\bigwedge$ and joins by $\bigvee$.
For relations on (sorted) sets, joins and meets are (sortwise)
union and intersection.

Given $f,g\colon X\to Y$ in $\C$ and a relation $R\monoto X\times X$, we write $(f\times g)_\star[R]\monoto Y\times Y$ for the \emph{direct image} of $R$ under $f\times g$, i.e.\ the image of the morphism $\begin{tikzcd}[cramped] R \ar{r}{\langle \outl_R, \outr_R\rangle} & X\times X\ar{r}{f\times g} & Y\times Y\end{tikzcd}$, and we write~$f_\star$ for $(f\times f)_\star$. Similarly, given $S\monoto Y\times Y$ we let $(f\times g)^\star[S]\monoto X\times X$ denote the \emph{preimage} of $S$ under $f\times g$, i.e.\ the pullback of $\langle \outl_S,\outr_S\rangle$ along $f\times g$, and we put $f^\star$ for $(f\times f)^\star$. Note that $(f\times g)_\star$ is a left adjoint of $(f\times g)^\star$: one has $(f\times g)_\star[R]\leq S$ iff $R\leq (f\times g)^\star [S]$.

We denote the \emph{identity relation} $\langle \id,\id\rangle\colon X\monoto X\times X$ by $\Delta_X$, or just $\Delta$. The \emph{composite} of two relations $R,S\monoto X\times X$ is the relation $R\cdot S\monoto X\times X$ given by the image of the morphism \[\langle \outl_R\cdot \ol{\outl}_{R\smc S}, \outr_{S}\cdot \ol{\outr}_{R\smc S} \rangle\colon R\smc S\to X\times X,\] where $\ol{\outl}_{R\smc S}$ and $\ol{\outr}_{R\smc S}$ form the pullback of $\outr_R$ and $\outl_S$:
\[
\begin{tikzcd}[column sep=1em,row sep=0.65em]
  & &
  R\smc S %
  \pullbackangle{-90}
  \ar{dl}[swap]{\ol{\outl}_{R\smc S}}
  \ar{dr}{\ol{\outr}_{R\smc S}}
  \\
  & R
  \ar{dl}[swap]{\outl_R}
  \ar{dr}{\outr_R}
  & &
  S \ar{dl}[swap]{\outl_{S}}
  \ar{dr}{\outr_{S}}
  \\
  X & & X & & X 
\end{tikzcd}
\]
Composition defines a monotone map $\RelCat[X]{\C}\times \RelCat[X]{\C} \to \RelCat[X]{\C}$.
A relation $R$ is \emph{reflexive} if $\Delta\leq R$, \emph{transitive} if $R\cdot R\leq R$, and a \emph{preorder} if it is both reflexive and transitive.

\paragraph*{Preordered objects}
Our abstract congruence results involve objects whose generalized elements are suitably preordered. We recall the required terminology from \citet{UrbatTsampasEtAl23}.

\begin{definition} A \emph{preordered
  object} in $\C$ is a pair $(X,\preceq)$ of an object $X \in \C$ and
a family ${\preceq} = (\preceq_{Y})_{Y\in \C}$, where $\preceq_{Y}$ is a
{preorder} on the hom-set
$\C(Y,X)$ satisfying
\[ f\preceq_Y g \quad\implies\quad f\comp h\preceq_Z g\comp h \quad\text{for all
    $h\c Z\to Y$}.\]
We drop the subscript from $\preceq_Y$ when it is obvious from the context.
\end{definition}

\begin{example}
  \label{ex:preord}
  Every preordered set $(X,\preceq)$ can be viewed as a preordered
  object in $\Set$ by taking the pointwise preorder on $\Set(Y,X)$:
    \[
      f\preceq g \quad\iff\quad \forall y\in Y.\, f(y)\preceq g(y).
    \]
Note that there exist preordered objects in $\Set$ where the order on $\Set(Y,X)$ is not pointwise determined by that on $\Set(1,X)$; hence preordered objects are more general that preordered sets.
\end{example}

\begin{definition}\label{def:good-for-simulations}
  Let $(X,\preceq)$ be a preordered object. A relation $R \monoto X \times X$ is
  \emph{up-closed} if for every span $X\xleftarrow{f} S
  \xrightarrow{g} X$ and every morphism $S\to R$ such that the
  left-hand triangle in the first diagram below commutes, and the right-hand triangle
  commutes laxly as indicated, then there exists a morphism $S\to R$ such that the second
  diagram commutes.%
  \smnote{I expanded this since the formulation here was unclear and confusing.}
\[
  \begin{tikzcd}
    & \ar[bend right=2em]{dl}[swap]{f} \ar[bend left=2em]{dr}{g} \ar{d} S & {~}\\
    X & \ar[phantom]{ur}[description, pos=.35, xshift=-5]{\dleq{45}} \ar{l}[swap]{\outl_R} R \ar{r}{\outr_R} & X
  \end{tikzcd}
  \implies
  \begin{tikzcd}
    & \ar[bend right=2em]{dl}[swap]{f} \ar[bend left=2em]{dr}{g} \ar[dashed]{d} S & {~}\\
    X & \ar{l}[swap]{\outl_R} R \ar{r}{\outr_R} & X
  \end{tikzcd}
\]
\end{definition}
\citet{UrbatTsampasEtAl23} have called up-closed relations \emph{good for simulations}, as they admit a typical property of simulation relations.

\begin{example}\label{ex:up-closed}
Given a preordered set $(X,\preceq)$, regarded as a preordered object in $\Set$ as in \autoref{ex:preord}, a relation $R\seq X\times X$ is up-closed iff $R\smc {\preceq} \, \seq\, R$. For instance:
\begin{enumerate}
\item\label{ex:up-closed-2}
Every relation $R\seq X\times X$ induces a relation $\arP R \seq \Pow X\times \Pow X$ on the power set known as the \emph{(left-to-right) Egli-Milner relation}:
\[ \arP(A,B) \;\iff\; \forall a\in A.\, \exists b\in B.\, R(a,b).\]
This relation is up-closed w.r.t.\ $\seq$.
\item\label{ex:up-closed-3} In contrast, the \emph{two-sided Egli-Milner relation} given by
\[ \ol{\Pow}R(A,B) \;\iff\; \forall a\in A.\, \exists b\in B.\, R(a,b) \wedge \forall b\in B.\, \exists a\in A.\, R(a,b)\]
is not up-closed.
\end{enumerate}
\end{example}

\subsection{Congruences and Contextual Preorders}
\label{sec:ctxeq}
As highlighted in \autoref{sec:tclreason}, logical relations are employed as a sound proof method for contextual preorders. In the following we introduce the abstract contextual preorder that will feature in our generic soundness result. It is based on the categorical notion of congruence for functor algebras, which is most conveniently presented in terms of relation liftings of the underlying functor.

A \emph{relation lifting} of an endofunctor $\Sigma \colon \C \to \C$ is a functor $\overline \Sigma \colon  \RelCat \C \to \RelCat \C$ making the square below commute; here $\under{-}$ denotes the forgetful functor sending $R \pred{} X \times X$ to $X$. 
\begin{equation*}
	\label{eq:liftingRel}
	\begin{tikzcd}
		\RelCat{\C} \ar{d}[swap]{\under{-}}  \ar{r}{\ol{\Sigma}} & \RelCat{\C} \ar{d}{\under{-}}  \\
		\C \ar{r}{\Sigma}  & \C
	\end{tikzcd}
\end{equation*}
Every endofunctor
$\Sigma$ has a \emph{canonical relation lifting} $\overline \Sigma$, which takes a relation $R
\pred{} X \times X$ to the relation $\ol{\Sigma}R\monoto \Sigma X\times \Sigma X$ given by the image of the morphism $\langle \Sigma\outl_{R}, \Sigma\outr_{R}\rangle\colon \Sigma R\to \Sigma X\times \Sigma X$.
\begin{example}\label{ex:endofunctor-liftings}
\begin{enumerate}
\item\label{ex:endofunctor-liftings-polynomial} For a polynomial functor $\Sigma$ on $\Set^S$, the canonical lifting $\ol{\Sigma}$ sends $R\seq X\times X$ to the relation $\ol{\Sigma} R\seq \Sigma X\times \Sigma X$ relating $u,v\in (\Sigma X)_s$ iff $u=\f(x_1,\ldots,x_n)$ and $v=\f(y_1,\ldots,y_n)$ for some $\f\colon s_1\times\cdots \times s_n\to s$ in $\Sigma$, and $R_{s_i}(x_i,y_i)$ for $i=1,\ldots,n$.
\item\label{ex:endofunctor-liftings-pow} The canonical lifting $\ol{\Pow}$ of the power set functor $\Pow\colon\Set\to \Set$ takes a relation $R\monoto X\times X$ to the two-sided Egli-Milner relation $\ol{\Pow} R\seq \Pow X\times \Pow X$ (\autoref{ex:up-closed}\ref{ex:up-closed-3}). Taking the left-to-right Egli-Milner relation yields a non-canonical lifting, denoted by $\arP$.
\item\label{ex:endofunctor-liftings-typed-pow} For every set $S$ of sorts, the pointwise power set functor
\[ \Pow_\star\colon \Set^S\to \Set^S,\qquad \Pow_\star X=(\Pow X_{s})_{s\in S}, \]
has a relation lifting ${\arP}_\star$ taking $R\seq X\times X$ to $\arP_\star R = (\arP R_s)_{s\in S}$.
\end{enumerate}
\end{example}
In our applications, syntax endofunctors $\Sigma$ will always be equipped with their canonical relation lifting. This ensures that free $\ol{\Sigma}$-algebras emerge as liftings of free $\Sigma$-algebras:
\removeThmBraces
\begin{proposition}[{\cite[Prop.~V.4]{UrbatTsampasEtAl23}}]\label{prop:free-monad-lift}
Let\/ $\ol{\Sigma}$ and $\ol{\Sigmas}$ be the canonical liftings of $\Sigma$ and $\Sigmas$. If $\Sigma$ preserves strong epimorphisms, then $\ol{\Sigma}^\star = \ol{\Sigmas}$.
\end{proposition}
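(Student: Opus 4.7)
The plan is to verify that for every relation $R\monoto X\times X$ the object $\ol{\Sigmas}R\monoto\Sigmas X\times\Sigmas X$, equipped with suitable unit and structure maps, is the carrier of a free $\ol{\Sigma}$-algebra on $R$. First, I would construct the candidate unit $\ol\eta_R\c R\to\ol{\Sigmas}R$ over $\eta_X$ and the candidate structure $\ol\iota_R\c\ol{\Sigma}\ol{\Sigmas}R\to\ol{\Sigmas}R$ over $\iota_X$. For $\ol\eta_R$, one observes that $(\eta_X\times\eta_X)_\star[R]$ equals the image of $\langle\Sigmas\outl_R,\Sigmas\outr_R\rangle\comp\eta_R$ by naturality of $\eta$, hence is $\leq\ol{\Sigmas}R$. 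For $\ol\iota_R$ one first establishes the compositionality identity $\ol{\Sigma}\ol{\Sigmas}R=\ol{\Sigma\Sigmas}R$; this is the sole place where the hypothesis enters, since applying $\Sigma$ to the defining strong epimorphism $\Sigmas R\epito\ol{\Sigmas}R$ preserves strong epicness, forcing the two images (both computed inside $\Sigma\Sigmas X\times\Sigma\Sigmas X$) to coincide. Naturality of $\iota\c\Sigma\Sigmas\Rightarrow\Sigmas$ then gives $(\iota_X\times\iota_X)_\star[\ol{\Sigma\Sigmas}R]\leq\ol{\Sigmas}R$ as required.

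Next I would verify the universal property. Let $(T,t)$ be a $\ol{\Sigma}$-algebra lying over some $\Sigma$-algebra $(Y,y)$, and let $h\c R\to T$ be a morphism in $\RelCat{\C}$ over $f\c X\to Y$. Uniqueness of a free extension $h^\star\c\ol{\Sigmas}R\to T$ is automatic, since its underlying morphism in $\C$ is forced to be the free extension $f^\star\c\Sigmas X\to Y$. For existence it suffices to prove $(f^\star\times f^\star)_\star[\ol{\Sigmas}R]\leq T$. Using $f^\star=\hat y\comp\Sigmas f$ and the commuting square $\langle\outl_T,\outr_T\rangle\comp h=\langle f\outl_R,f\outr_R\rangle$ (witnessing $h$ as a relation morphism), applying $\Sigmas$ and computing the image yields that the image in question equals the image of $\langle\outl_T^\star,\outr_T^\star\rangle\comp\Sigmas h\c\Sigmas R\to Y\times Y$, where $\langle\outl_T^\star,\outr_T^\star\rangle := (\hat y\times\hat y)\comp\langle\Sigmas\outl_T,\Sigmas\outr_T\rangle$. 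Thus the entire task reduces to showing that $\langle\outl_T^\star,\outr_T^\star\rangle$ itself factors through $T\tail Y\times Y$---a $\hat y$-closure property of $T$.

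The main obstacle is precisely this closure property. I would establish it via the following auxiliary lemma: the $\ol{\Sigma}$-algebra structure $t$ on $T$ induces a unique $\Sigma$-algebra structure $\sigma\c\Sigma T\to T$ such that both projections $\outl_T,\outr_T\c T\to Y$ become $\Sigma$-algebra morphisms to $(Y,y)$. Indeed, the composite $(y\times y)\comp\langle\Sigma\outl_T,\Sigma\outr_T\rangle\c\Sigma T\to Y\times Y$ factors through the defining strong-epi/mono decomposition $\Sigma T\epito\ol{\Sigma}T\tail\Sigma Y\times\Sigma Y$, so its image is $(y\times y)_\star[\ol{\Sigma}T]\leq T$ by the $\ol{\Sigma}$-algebra structure $t$; since $\langle\outl_T,\outr_T\rangle$ is mono, this yields a unique $\sigma$ with $\langle\outl_T,\outr_T\rangle\comp\sigma=(y\times y)\comp\langle\Sigma\outl_T,\Sigma\outr_T\rangle$. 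Once $\outl_T,\outr_T$ are $\Sigma$-algebra morphisms, their free extensions are $\outl_T^\star=\outl_T\comp\hat\sigma$ and $\outr_T^\star=\outr_T\comp\hat\sigma$, where $\hat\sigma\c\Sigmas T\to T$ is the Eilenberg-Moore algebra induced by $\sigma$; hence $\langle\outl_T^\star,\outr_T^\star\rangle=\langle\outl_T,\outr_T\rangle\comp\hat\sigma$ manifestly factors through $T$, completing the argument.
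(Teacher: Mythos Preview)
Your argument is correct. Note that the paper does not give its own proof of this proposition; it is quoted from the cited reference, so there is no in-paper proof to compare against. Your direct verification of the universal property---showing that $(\ol{\Sigmas}R,\ol\iota_R,\ol\eta_R)$ is the free $\ol{\Sigma}$-algebra on $R$---is the natural approach and goes through as described. The identification $\ol{\Sigma}\,\ol{\Sigmas}R=\ol{\Sigma\Sigmas}R$, obtained by applying $\Sigma$ to the defining strong epimorphism $\Sigmas R\epito\ol{\Sigmas}R$, is indeed the sole use of the hypothesis, and your auxiliary lemma extracting a $\Sigma$-algebra structure $\sigma\colon\Sigma T\to T$ from the $\ol{\Sigma}$-algebra structure $t$ (so that the projections $\outl_T,\outr_T$ become $\Sigma$-algebra morphisms) is exactly the right device for the closure step. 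One small addendum: for the equality $\ol{\Sigma}^\star=\ol{\Sigmas}$ as \emph{functors} rather than merely on objects, one should also record agreement on morphisms; but this is immediate from faithfulness of the forgetful functor $\under{-}\colon\RelCat{\C}\to\C$, since both $\ol{\Sigma}^\star f$ and $\ol{\Sigmas}f$ lie over $\Sigmas\under{f}$.
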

\resetCurThmBraces
Using canonical liftings, the notion of congruence from universal algebra extends to functor algebras~\cite{DBLP:books/cu/J2016}:
\begin{definition}\label{def:congruence}
  For an endofunctor $\Sigma$ with canonical lifting $\ol{\Sigma}$,
  a \emph{congruence} on a $\Sigma$-algebra $(A,a)$ is a relation
  $R\monoto A\times A$ such that
\[\fimg{a}{\ol{\Sigma}R}\leq R.\]
\end{definition}

\begin{example}
For a polynomial functor $\Sigma$ on $\Set^S$, a relation $R\seq A\times A$ is a congruence as per \autoref{def:congruence} iff it is a congruence in the usual sense, i.e.\ compatible with all $\Sigma$-operations. 
\end{example}

We will mainly consider congruences on the initial algebra $\mu \Sigma$, which admit a useful property; see e.g.\ \cite[Ex.~3.3.2]{DBLP:books/cu/J2016}:
\begin{proposition}\label{prop:cong-refl}
Every congruence on $\mu \Sigma$ is reflexive.
\end{proposition}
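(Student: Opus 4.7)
The plan is to exploit the universal property of the initial algebra $(\mu\Sigma, \iota)$ by lifting it to the category of relations. Concretely, I will show that the congruence $R \monoto \mu\Sigma \times \mu\Sigma$ itself carries a $\Sigma$-algebra structure with respect to which both projections $\outl_R, \outr_R \colon R \to \mu\Sigma$ are $\Sigma$-algebra homomorphisms; the unique homomorphism from $\mu\Sigma$ into this algebra then yields a factorization of the diagonal $\Delta_{\mu\Sigma}$ through $\langle \outl_R, \outr_R\rangle$.

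First, I would unpack the congruence condition $\fimg{\iota}{\ol{\Sigma}R} \leq R$. By the adjunction between direct image and preimage, this is equivalent to $\ol{\Sigma}R \leq (\iota \times \iota)^\star R$. Recall that $\ol{\Sigma}R$ is, by definition of the canonical lifting, the (strong epi, mono)-image of $\langle \Sigma\outl_R, \Sigma\outr_R\rangle \colon \Sigma R \to \Sigma\mu\Sigma \times \Sigma\mu\Sigma$. Chasing through the factorization, the inequality yields a (unique) morphism $r \colon \Sigma R \to R$ satisfying $\outl_R \cdot r = \iota \cdot \Sigma \outl_R$ and $\outr_R \cdot r = \iota \cdot \Sigma \outr_R$. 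In other words, $(R, r)$ is a $\Sigma$-algebra, and the two projections are $\Sigma$-algebra morphisms into $(\mu\Sigma, \iota)$.

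Second, I would invoke initiality of $\mu\Sigma$ to obtain the unique $\Sigma$-algebra morphism $u \colon (\mu\Sigma, \iota) \to (R, r)$. Post-composing $u$ with each projection gives $\Sigma$-algebra endomorphisms $\outl_R \cdot u$ and $\outr_R \cdot u$ of $(\mu\Sigma, \iota)$. By initiality again, both must equal $\id_{\mu\Sigma}$. Hence $\langle \outl_R, \outr_R\rangle \cdot u = \langle \id_{\mu\Sigma}, \id_{\mu\Sigma}\rangle = \Delta_{\mu\Sigma}$, exhibiting $\Delta_{\mu\Sigma}$ as a subobject of $R$, i.e.\ $\Delta \leq R$.

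No step here looks particularly delicate; the only point that requires a little care is the passage from the inequality $\fimg{\iota}{\ol{\Sigma}R} \leq R$ to the existence of the structure map $r \colon \Sigma R \to R$, since this implicitly uses the fact that the image factorization system is well-behaved (both a strong epi from $\Sigma R$ onto $\ol{\Sigma}R$ and a mono from $\ol{\Sigma}R$ into the relevant product exist, by \autoref{assumptions}). Once $r$ is in hand, the argument is a standard two-fold application of initiality.
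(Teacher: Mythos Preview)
Your proof is correct and follows the standard argument: endow $R$ with a $\Sigma$-algebra structure via the congruence condition, then use initiality of $\mu\Sigma$ twice. The paper does not give its own proof of this proposition but simply cites \cite[Ex.~3.3.2]{DBLP:books/cu/J2016}; the argument there is essentially the one you have written.
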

We are prepared to set up our abstract notion of contextual preorder, which is parametric in a choice of admissible observations.

\begin{definition}
Let $O\monoto \mS\times \mS$ be a preorder.
\begin{enumerate}
\item A relation $R \monoto \mS\times \mS$ is \emph{$O$-adequate} if $R\leq O$.
\item If a greatest $O$-adequate congruence on $\mS$ exists, then it
  is said to be the \emph{contextual preorder w.r.t.\ $O$} and denoted by \[{\cprd}\monoto\mS\times \mS.\]
\end{enumerate} 
\end{definition}
This terminology is justified by the following result:
\begin{proposition}\label{prop:contextual-preorder}
If\/ $\Sigma$ preserves strong epimorphisms and the relation
$\cprd$ exists, then it is a preorder.
\end{proposition}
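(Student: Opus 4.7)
The plan is to verify reflexivity and transitivity separately. Reflexivity follows directly from \autoref{prop:cong-refl}, since by assumption $\lesssim_O$ is a congruence on the initial algebra $\mu\Sigma$.

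For transitivity, the strategy is to show that the relational composite $T := {\lesssim_O}\cdot{\lesssim_O}$ is itself an $O$-adequate congruence on $\mu\Sigma$; by maximality of $\lesssim_O$ this yields $T \leq {\lesssim_O}$, which is precisely transitivity. The $O$-adequacy of $T$ is straightforward: from ${\lesssim_O}\leq O$, monotonicity of composition, and transitivity of $O$ one obtains
\[ T \;=\; {\lesssim_O}\cdot{\lesssim_O} \;\leq\; O\cdot O \;\leq\; O. \]

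The technical core is showing that $T$ is a congruence, i.e.\ $\iota_\star\ol{\Sigma}(T) \leq T$. For this I would first establish two auxiliary lax-compatibility lemmas for the (strong epi, mono)-factorization system available under \autoref{assumptions}: (i) the direct image is lax with respect to composition,
\[ f_\star(R\cdot S) \;\leq\; f_\star R \cdot f_\star S \qquad \text{for all $R,S\monoto X\times X$ and $f\c X\to Y$,} \]
which follows by comparing the pullback defining $R\cdot S$ with the pullback defining $f_\star R\cdot f_\star S$ and using stability of strong epimorphisms under pullback; and (ii) the canonical lifting is lax with respect to composition,
\[ \ol{\Sigma}(R\cdot S) \;\leq\; \ol{\Sigma}R \cdot \ol{\Sigma} S, \]
which is where the hypothesis that $\Sigma$ preserves strong epimorphisms enters crucially. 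Chaining (i) and (ii) with the congruence property $\iota_\star\ol{\Sigma}{\lesssim_O}\leq{\lesssim_O}$ then yields
\[ \iota_\star\ol{\Sigma}(T) \;\leq\; \iota_\star\bigl(\ol{\Sigma}{\lesssim_O}\cdot\ol{\Sigma}{\lesssim_O}\bigr) \;\leq\; \iota_\star\ol{\Sigma}{\lesssim_O}\cdot\iota_\star\ol{\Sigma}{\lesssim_O} \;\leq\; {\lesssim_O}\cdot{\lesssim_O} \;=\; T, \]
completing the argument.

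The main obstacle is establishing lemma (ii). Its proof requires carefully comparing two (strong epi, mono)-factorizations: apply $\Sigma$ to the factorization $P \twoheadrightarrow R\cdot S \hookrightarrow X\times X$, where $P$ is the pullback of $\outr_R$ and $\outl_S$, and use that $\Sigma$ preserves the strong-epi part to identify $\ol{\Sigma}(R\cdot S)$ as the image of the composite $\Sigma P \to \Sigma(X\times X)\to \Sigma X\times \Sigma X$. On the other side, $\ol\Sigma R\cdot\ol\Sigma S$ arises from pulling back $\outr_{\ol\Sigma R}$ and $\outl_{\ol\Sigma S}$ inside $\Sigma X$. The universal property of pullbacks, combined with naturality of the projections, yields a canonical comparison morphism from (the image of) $\Sigma P$ into this latter pullback, whose image factorization establishes the desired inequality. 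Once this lifting lemma is in place, the remainder of the argument is purely formal manipulation of monotone operations on the subobject lattice $\RelCat[\mu\Sigma]{\C}$.
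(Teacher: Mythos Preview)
Your proposal is correct and follows essentially the same strategy as the paper. For transitivity, the paper packages your lemmas (i) and (ii) into a standalone lemma (``composites of congruences are congruences''), which it then invokes; your chain of inequalities is exactly the proof of that lemma, and your sketch of (ii) matches the paper's argument. For reflexivity, there is a minor difference: you invoke \autoref{prop:cong-refl} (every congruence on $\mu\Sigma$ is reflexive), whereas the paper instead observes that $\Delta$ is itself an $O$-adequate congruence (using reflexivity of $O$ and that $\ol{\Sigma}\Delta=\Delta$), whence $\Delta\leq{\lesssim_O}$ by maximality. Both routes are short and valid; yours exploits that we are on the initial algebra, while the paper's argument would apply to $\lesssim_O$ defined on an arbitrary $\Sigma$-algebra.
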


\begin{example}\label{ex:observations}
\begin{enumerate}
\item\label{ex:observations-1} For every $S$-sorted signature $\Sigma$ and preorder $O\seq \mS\times \mS$, the contextual preorder $\cprd$ exists; it is given by
\begin{equation}\label{eq:cont-preord-concrete} t\cprd_{\tau} s \quad\text{iff}\quad \forall \tau'.\forall\, C \c \cty{\tau}{\tau'}.\, O_{\tau'}(C[t], C[s]) \end{equation}
in sort $\tau$. To see this, we take \eqref{eq:cont-preord-concrete} as the definition of $\cprd$ and prove it to be the greatest $O$-adequate congruence. Indeed, every $O$-adequate congruence $R$ is contained in $\cprd$: if $R_\tau(t,s)$  and $C \c \cty{\tau}{\tau'}$ we have $R_{\tau'}(C[t],C[s])$ because $R$ is a congruence, hence $O_{\tau'}(C[t], C[s])$ because $R$ is $O$-adequate, which proves $t\cprd_{\tau} s$. Conversely, it is not difficult to verify that the relation $\cprd$ defined by \eqref{eq:cont-preord-concrete} is itself an $O$-adequate congruence. Hence it is the greatest such relation.
\item\label{ex:observations-2} Taking $O_\tau=\{ (t,s) \mid t{\Downarrow}\,\To\, s{\Downarrow}  \}$\sgnote{Use $\mid$ instead of $\colon$ for consistency?} for all $\tau$, the relation $\cprd$ coincides with the contextual preorder of \stscr given by \eqref{eq:ctxprox}. 
\item\label{ex:observations-3}  Similarly, for $O_\booltype = \{ (t,s) \mid t{\Downarrow}\,\To\, s{\Downarrow}  \}$ and $O_{\tau}=\top_\tau$ if $\tau\neq \booltype$, we recover the ground contextual preorder given by \eqref{eq:ctxproxground}.
\end{enumerate}
\end{example}
In general the existence of a greatest $O$-adequate congruence~$\cprd$ is not obvious because a join of congruences need not be a congruence. However, we show next that it always exists for well-behaved categories and syntax functors:

\begin{rem}\label{rem:cat-properties}
\begin{enumerate}
\item\label{rem:cat-properties-1} A category $\gcat$ is
\emph{infinitary extensive}~\cite{cbl93} if it has small coproducts
and for every set-indexed family of objects $(X_i)_{i\in I}$ the functor
$E\colon \prod_{i\in I} \gcat/X_i \to \gcat/\coprod_{i\in I} X_i$ sending
$(p_i\colon Y_i\to X_i)_{i\in I}$ to
$\coprod_{i\in I} p_i\colon \coprod_i Y_i \to \coprod_i X_i$ is an
equivalence of categories. Extensivity hence ensures that coproducts behave
like disjoint unions. %
\item\label{rem:cat-properties-2} A diagram $D\colon I\to \C$ is \emph{directed} if $I$ is a directed poset, that is, $I$ is non-empty and any two elements $i,j\in I$ have an upper bound $k\geq i,j$. A \emph{directed colimit} is a colimit of a directed diagram.
\item\label{rem:cat-properties-3} The category $\C$ has \emph{smooth monomorphisms}~\cite{amm21} if for every object the join of a directed family of subobjects is given by the colimit of the corresponding directed diagram in $\C$.
\end{enumerate}
\end{rem}

\begin{theorem}[Existence of $\cprd$]\label{prop:existence-of-contextual-preorder}
Suppose that $\C$ is infinitary extensive and has smooth monomorphisms, and that $\Sigma$ preserves strong epimorphisms, monomorphisms, and directed colimits. Then for every preorder $O\monoto \mS\times \mS$ the contextual preorder $\cprd\,\monoto\ \mS\times \mS$ exists.
\end{theorem}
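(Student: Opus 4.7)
The plan is to construct $\lesssim_O$ as the join of the set $\Omega := \{R \monoto \mS\times\mS : R \leq O \text{ and } R \text{ is a congruence on }\mS\}$ of all $O$-adequate congruences, and then show this join itself belongs to $\Omega$, making it the greatest element.

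First, I would verify that $\Omega$ is \emph{directed} under $\leq$. Given $R_1, R_2 \in \Omega$, I claim their relational composite $R_1 \cdot R_2$ is an upper bound in $\Omega$. That $R_1 \cdot R_2$ is again a congruence follows from a standard calculation in the category of $\Sigma$-algebras: the defining pullback $R_1 \times_\mS R_2$ is the product of $R_1$ and $R_2$ in the slice $\Alg(\Sigma)/\mS$ (both projections being algebra morphisms because $R_1, R_2$ are congruences), the induced morphism to $\mS\times\mS$ is therefore an algebra morphism, and its (strong epi, mono)-factorization exhibits $R_1 \cdot R_2$ as a subalgebra of $\mS\times\mS$---this step uses that $\Sigma$ preserves strong epimorphisms. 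That $R_1 \cdot R_2 \leq O$ follows from transitivity of the preorder $O$: $R_1 \cdot R_2 \leq O \cdot O \leq O$. Finally, $R_1, R_2 \leq R_1 \cdot R_2$ follows from reflexivity, since by \Cref{prop:cong-refl} every congruence on $\mS$ is reflexive, so $\Delta \leq R_i$ and hence $R_1 = R_1 \cdot \Delta \leq R_1 \cdot R_2$, similarly on the other side.

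Second, I would show that $\Omega$ is closed under directed joins of subobjects. Under the hypotheses, the canonical relation lifting $\ol{\Sigma}$ preserves directed joins of subobjects: smoothness of monomorphisms lets us compute such joins as directed colimits in $\C$, $\Sigma$ preserves those colimits by assumption, and the image factorization used in the canonical lifting interacts well with directed colimits thanks to the preservation properties of $\Sigma$ together with the global assumptions on $\C$. Since $\iota_\star$ preserves all joins as a left adjoint, the full operator $R \mapsto \iota_\star[\ol{\Sigma}R]$ preserves directed joins, and a monotone operator preserving directed joins has its set of pre-fixed points closed under directed joins. Thus directed joins of congruences are congruences, and joins of subobjects below $O$ remain below $O$, so $\Omega$ is directed-join-closed. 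Combining the two steps, $\bigvee \Omega$ is a directed join of members of $\Omega$, hence is in $\Omega$, and is manifestly the top element; therefore it coincides with $\lesssim_O$.

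The main obstacle is the verification that $\ol{\Sigma}$ preserves directed joins of subobjects in the abstract setting: this is precisely where the full force of infinitary extensivity, smoothness of monomorphisms, and the three preservation properties of $\Sigma$ (strong epis, monos, directed colimits) all conspire to make the image factorization and the directed-colimit computation compatible. The directedness step, by contrast, is conceptually clean once one notices that reflexivity of congruences on the initial algebra $\mS$ (\Cref{prop:cong-refl}) together with transitivity of $O$ is exactly the combination that lets relational composition witness directedness of $\Omega$.
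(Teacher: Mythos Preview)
Your proposal is correct and uses the same underlying ingredients as the paper's proof, but packages them more directly. The paper proceeds via an auxiliary \Cref{lem:union-transhull-cong}: it forms $S=(\bigvee_{R\in\R}R)^\star$, expands this reflexive-transitive closure as the join $\bigvee_{n}\bigvee_{R_1,\ldots,R_n}R_1\cdots R_n$ (\Cref{lem:transhull}), observes that this join is directed because all congruences on $\mS$ are reflexive, and then pushes $a_\star\ol\Sigma$ through using directed-join preservation of $\ol\Sigma$ (\Cref{lem:barSigma-pres-directed-joins}) together with lax compatibility of $\ol\Sigma$ and $a_\star$ with relational composition. You instead show directly that $\Omega$ is a directed family (using $R_1\cdot R_2$ as the upper bound, via \Cref{prop:cong-refl} and transitivity of $O$) and that the congruence condition is closed under directed joins (the same \Cref{lem:barSigma-pres-directed-joins} plus the fact that $\iota_\star$ is a left adjoint); then $\bigvee\Omega$ is a directed join of members of $\Omega$ and hence lies in $\Omega$. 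Your route avoids the detour through $(-)^\star$ and the explicit expansion of \Cref{lem:transhull}, at the modest cost of not isolating a reusable lemma of the form ``$(\bigvee R)^\star$ is a congruence'' for arbitrary $\Sigma$-algebras. Both arguments hinge on the same technical core you correctly flag as the main obstacle: that $\ol\Sigma$ preserves directed joins of subobjects, which is exactly where smoothness of monomorphisms and the preservation hypotheses on $\Sigma$ are consumed.
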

All categories of \autoref{ex:categories} satisfy the assumptions on $\C$, and all polynomial functors on $\Set^S$ satisfy those on $\Sigma$. The proof is a categorical generalization of an argument by \citet[Thm.~7.5.3]{10.5555/1076265}  for a fragment of the \textsf{ML} language. It rests on the lemma below; here $R^\star$ denotes the least preorder containing a given relation $R$.

\begin{lemma}\label{lem:union-transhull-cong}
In the setting of \autoref{prop:existence-of-contextual-preorder}, if\, $\R$ is a set of reflexive congruences on a $\Sigma$-algebra $(A,a)$, then $(\bigvee_{R\in \R} R)^\star$ is a congruence.
\end{lemma}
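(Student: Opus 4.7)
The plan is to set $T = \bigvee_{R\in\R} R$ and observe that $T$ is reflexive (since each $R\in\R$ contains $\Delta_A$), so $T^\star$ coincides with the transitive closure of $T$ and is expressible as the directed join $T^\star = \bigvee_{n\in\Nat} T^n$ in $\RelAt{A}$, where $T^0 = \Delta_A$ and $T^{n+1} = T \cdot T^n$. The goal $\fimg{a}{\ol\Sigma T^\star} \leq T^\star$ then reduces to the per-$n$ claim $\fimg{a}{\ol\Sigma T^n} \leq T^\star$, via a preservation lemma I would first establish: under the standing assumptions that $\C$ has smooth monomorphisms and $\Sigma$ preserves strong epimorphisms and directed colimits, the canonical relation lifting $\ol\Sigma$ preserves directed joins of relations; hence $\ol\Sigma T^\star = \bigvee_n \ol\Sigma T^n$, and the direct image under $a\times a$ preserves joins since it is a left adjoint.

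The base case $n=0$ is immediate: $\ol\Sigma \Delta_A = \Delta_{\Sigma A}$ and $\fimg{a}{\Delta_{\Sigma A}} \leq \Delta_A \leq T^\star$. For the inductive step, I aim to show $\fimg{a}{\ol\Sigma(T\cdot T^n)} \leq T^\star$. The core idea is the classical \emph{interleaving argument}: since each $R\in\R$ is reflexive, applying a $\Sigma$-operation $\f$ of arity $k$ to arguments whose pairs are related by $T = \bigvee_R R$ can be factored into $k$ successive single-position updates, each changing one argument while keeping the others fixed. By reflexivity of the particular $R_i\in\R$ witnessing each pair, such a single-position update is a genuine $\ol{R_i}$-step; hence by the congruence property of $R_i$ it lies in $\fimg{a}{\ol\Sigma R_i} \leq R_i \leq T$. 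Chaining these $k$ updates with the inductive hypothesis $\fimg{a}{\ol\Sigma T^n}\leq T^\star$ yields a walk of length at most $n+k$ in $T^\star$.

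The main obstacle is to carry out this interleaving without arguing on elements, given the rather modest assumptions on $\Sigma$ (which notably do \emph{not} include weak-pullback preservation, so $\ol\Sigma$ need not be lax monoidal with respect to relation composition). My plan is to exploit infinitary extensivity of $\C$ in order to decompose the (possibly non-directed) join $T = \bigvee_{R\in\R} R$ as a directed colimit of finite sub-joins $\bigvee_{R\in F} R$ indexed by finite $F \seq \R$, each realized as the strong-epi image of the coproduct $\coprod_{R\in F} R \to A\times A$. Since $\Sigma$ preserves strong epimorphisms and coproducts behave like disjoint unions under extensivity, the relation $\ol\Sigma T$ admits a compatible decomposition along the coproduct summands, enabling us to track which $R\in\R$ a given single-position update comes from. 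Reflexivity of each $R$ is the essential ingredient that upgrades such a summand-specific update into a genuine congruence step of $R$, and the careful bookkeeping of the $k$-fold interleaving at the level of the factorized image of $a\cdot \Sigma\langle \outl_T, \outr_T\rangle$ is where I expect the main technical subtlety to lie.
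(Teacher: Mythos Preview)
Your overall scaffolding—expressing $T^\star$ as a directed join, using that $\ol\Sigma$ preserves directed joins (from smooth monomorphisms and $\Sigma$ preserving directed colimits), and that $a_\star$ preserves joins as a left adjoint—is sound and matches the paper. The gap is in the inductive step, and it stems from a misdiagnosis of the available tools.

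You correctly observe that weak-pullback preservation is not assumed, so the \emph{lax} inequality $\ol\Sigma R\cdot\ol\Sigma S \le \ol\Sigma(R\cdot S)$ may fail. But what the argument actually needs is the \emph{oplax} direction
\[
  \ol\Sigma(R\cdot S) \;\le\; \ol\Sigma R \cdot \ol\Sigma S,
\]
and this \emph{does} hold under the sole hypothesis that $\Sigma$ preserves strong epimorphisms: $\Sigma$ sends the strong epi $R\smc S \twoheadrightarrow R\cdot S$ to a strong epi, and the pullback projections $R\smc S\to R$, $R\smc S\to S$ induce (via the universal property of the pullback defining $\ol\Sigma R\smc\ol\Sigma S$ and a diagonal fill-in) the desired comparison. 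This lemma \emph{is} the categorical content of your interleaving intuition; your proposed workaround via ``operations of arity $k$'' and single-position updates tacitly assumes $\Sigma$ is polynomial, which is not among the hypotheses, and the sketch of how extensivity would rescue this for a general $\Sigma$ is not convincing as stated.

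With the oplax lemma in hand, the paper also chooses a cleaner decomposition than your $\bigvee_n T^n$: distributivity of relation composition over joins (from infinitary extensivity) gives
\[
  \Big(\bigvee_{R\in\R} R\Big)^\star \;=\; \bigvee_{n\ge 0}\ \bigvee_{R_1,\dots,R_n\in\R} R_1\cdots R_n,
\]
and this join is directed since reflexivity of the $R_i$ makes $R_1\cdots R_n$ and $R_1'\cdots R_m'$ both bounded by their concatenation. One then computes, for each tuple,
\[
  a_\star\ol\Sigma(R_1\cdots R_n) \;\le\; a_\star(\ol\Sigma R_1\cdots\ol\Sigma R_n) \;\le\; (a_\star\ol\Sigma R_1)\cdots(a_\star\ol\Sigma R_n) \;\le\; R_1\cdots R_n,
\]
using in turn the oplax lemma, the analogous inequality $f_\star[P\cdot Q]\le f_\star[P]\cdot f_\star[Q]$ for direct images, and that each $R_i$ is a congruence. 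This route entirely sidesteps the non-directed join $T=\bigvee_{R\in\R} R$ that your $T^n$-decomposition would still have to unpack.
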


\begin{proof}[Proof of \autoref{prop:existence-of-contextual-preorder}]
Let $\R$ be the set of all $O$-adequate congruences on $\mS$. Then $\bigvee_{R\in \R} R$ exists as $\Rel_A(\C)$ is a complete lattice, $S=(\bigvee_{R\in \R}R)^\star$ is a congruence by \autoref{prop:cong-refl} and \autoref{lem:union-transhull-cong}, and $S$ is $O$-adequate because 
$S=\big(\bigvee_{R\in \R} R\big)^\star \leq O^\star = O$, using that each $R\in \R$ is adequate and $O$ is a preorder. Hence $S\in \R$, so $S$ is the greatest $O$-adequate congruence.
\end{proof}

\subsection{Logical Relations via Liftings}\label{sec:cong-logrel}
Next we develop our abstract categorical notion of (step-indexed) logical relation, with the goal of exposing a sound proof method for the contextual preorders $\cprd$ introduced above. Just like congruences rely on relation liftings of syntax endofunctors, logical relations are based on relation liftings of behaviour bifunctors. 

A \emph{relation lifting} of a mixed
variance bifunctor ${B\c \C^\opp\times \C\to \C}$ is a functor $\overline B \colon
{\RelCat \C}^\opp \times {\RelCat \C} \to \RelCat \C$ such that the square
\begin{equation*}
	\begin{tikzcd}
		\RelCat{\C}^\opp\times \RelCat{\C}
		\ar{d}[swap]{\under{-}^\opp\times \under{-}} \ar{r}{\ol{B}} & \RelCat{\C} \ar{d}{\under{-}}  \\
		\C^\opp \times \C \ar{r}{B}  & \C
	\end{tikzcd}
\end{equation*}
commutes.
Similar to the case of endofunctors, every bifunctor ${B\c \C^\opp\times \C\to \C}$ has a \emph{canonical relation
lifting} $\overline B$; it takes two relations $R\monoto X\times X$ and $S\monoto Y\times Y$ to the relation $\overline{B}(R,S) \monoto B(X,Y) \times B(X,Y)$ given by the image of the morphism $u_{R,S}$ in the pullback below.
This is an equivalent, albeit simplified,
version of a construction due to \citet[Sec.~C.2]{UrbatTsampasEtAl23}.
\begin{equation*}
  \label{eq:liftingpb}
  \hspace{-9pt}\begin{tikzcd}[column sep=.5em, row sep=.4ex]
    &[-.25em]& {T_{R,S}}
    \pullbackangle{-45}
    &&&& {B(R,S)} \\
    \\
    {\overline{B}(R,S)} \\
    \\
    && {B(X,Y) \times B(X,Y)} &&&& {B(R,Y) \times B(R,Y)}
    \arrow[two heads, from=1-3, to=3-1]
    \arrow[tail, pos=.8, from=3-1, to=5-3]
    \arrow["v_{R,S}", from=1-3, to=1-7]
    \arrow["u_{R,S}"', from=1-3, to=5-3]
    \arrow["{\langle B(\id,\outl_{S}), B(\id,\outr_{S}) \rangle}", from=1-7, to=5-7]
    \arrow["{B(\outl_{R},\id) \times B(\outr_{R},\id)}"{yshift=5pt}, from=5-3, to=5-7]
  \end{tikzcd}
\end{equation*}

\begin{example}\label{ex:lift-beh}
For the behaviour bifunctor $B$ on $\Set^\Ty$ given by~\eqref{eq:beh}, the canonical lifting $\ol{B}$ sends a pair of relations
 $R\seq X\times X$ and $S\seq Y\times Y$ to the relation $\ol{B}(R,S)$ on $B(X,Y)$ defined as follows:
\begin{enumerate}
\item\label{it:b-lift-func} $\ol{B}(R,S)_{\arty{\tau_{1}}{\tau_{2}}}\seq (Y_{\arty{\tau_{1}}{\tau_{2}}} + Y_{\tau_2}^{X_{\tau_1}})^2$ contains all $(u,v)$ such that either $u,v\in Y_{\arty{\tau_{1}}{\tau_{2}}}$ and $S_{\arty{\tau_{1}}{\tau_{2}}} (u,v)$, or
$u,v\in Y_{\tau_2}^{X_{\tau_1}}$ and for all $x,x'\in X_{\tau_1}$, if $R_{\tau_1}(x,x')$ then $S_{\tau_2}(u(x),v(x'))$.
\item $\ol{B}(R,S)_{\mu \alpha.\tau} \seq (Y_{\mu\alpha.\tau}+Y_{\tau[\mu\alpha.\tau/\alpha]})^2$ contains all such $(u,v)$ that either $u,v\in Y_{\mu\alpha.\tau}$ and $S_{\mu\alpha.\tau}(u,v)$, or $u,v\in Y_{\tau[\mu\alpha.\tau/\alpha]}$ and $S_{\tau[\mu\alpha.\tau/\alpha]}(u,v)$.
\item $\ol{B}(R,S)_{\tau_1\boxplus \tau_2}\seq (Y_{\tau_1\boxplus \tau_2} + Y_{\tau_1} + Y_{\tau_2}))^2$ contains all $(u,v)$ such that either $u,v\in Y_{\tau_1\boxplus \tau_2}$ and $S_{\tau_1\boxplus \tau_2}(u,v)$, or $u,v\in Y_{\tau_1}$ and $S_{\tau_1}(u,v)$, or $u,v\in Y_{\tau_2}$ and $S_{\tau_2}(u,v)$.
\item $\ol{B}(R,S)_{\tau_1\boxtimes \tau_2}\seq (Y_{\tau_1\boxtimes \tau_2} + Y_{\tau_1} \times Y_{\tau_2}))^2$ contains all $(u,v)$ such that either $u,v\in Y_{\tau_1\boxtimes \tau_2}$ and $S_{\tau_1\boxtimes \tau_2}(u,v)$, or $u=(u_1,u_2), v=(v_1,v_2)\in Y_{\tau_1} \times Y_{\tau_2}$ and $S_{\tau_i}(u_i,v_i)$ for $i=1,2$.
\end{enumerate}
Note how these clauses match the definition of the logical relation~$\logrel$ in \autoref{def:logrel}. In particular, clause \ref{it:b-lift-func} captures the requirement that for functions, related inputs should lead to related outputs.
\end{example}

Using relation liftings of bifunctors we can introduce bisimulations and logical relations for higher-order coalgebras. 
Both notions are parametric in a pair of coalgebras $c,\wt{c}\colon X\to B(X,X)$, with~$c$ thought of as the operational model of some higher-order GSOS law and $\wt{c}$ as some form of weak transition system associated to $c$ (see \autoref{ex:logrel} below for illustration).

\begin{definition}[Bisimulation, Logical Relation]\label{def:bisim-logical-relation}
Let $B\colon \C^\op\times \C\to \C$ with a relation lifting $\ol{B}$, and let $c,\wt{c} \colon X \to B(X,X)$ be coalgebras.
\begin{enumerate} 
\item\label{def:bisim-logical-relation-1} A \emph{bisimulation} for $(c,\wt{c})$ is a relation $R\monoto X\times X$ such that
\[R \le \iimg{(c\times \wt{c})}{\overline B(\Delta,R)}.\] 
\item\label{def:bisim-logical-relation-2} A \emph{logical relation} for $(c,\wt{c})$ is a relation $R\monoto X\times X$ such that 
\[R \le \iimg{(c\times \wt{c})}{\overline B(R,R)}.\]
\item\label{def:bisim-logical-relation-3} A \emph{step-indexed logical relation} for $(c,\wt{c})$ is an ordinal-indexed family of relations $(R^\alpha\monoto X\times X)_\alpha$ that forms a decreasing chain
(i.e.\ $R^\alpha\leq R^\beta$ for all $\beta<\alpha$) and satisfies
\[ R^{\alpha+1}\leq \iimg{(c\times \wt{c})}{\ol{B}(R^\alpha,R^\alpha)}\quad \text{for all $\alpha$}.\] 
\end{enumerate}
\end{definition}
 Informally, a (step-indexed) logical relation requires that, on term-labeled transitions, related input terms should lead to related output terms, while a bisimulation only considers identical inputs.

\begin{rem}\label{rem:logrel}
The above concepts are related as follows:
\begin{enumerate}
\item\label{rem:logrel-1} For $c=\wt{c}$ a bisimulation corresponds to the familiar notion of (Hermida-Jacobs) bisimulation~\cite{hj98} for the endofunctor $B(X,-)$ and its relation lifting $\ol{B}(\Delta,-)$. If $\wt{c}$ is some weakening of $c$, one obtains an abstract notion of \emph{applicative bisimulation}~\cite{UrbatTsampasEtAl23}.
\item\label{rem:logrel-2} Every reflexive logical relation is a bisimulation.
\item\label{rem:logrel-3} Every logical relation $R$ can be regarded as a step-indexed logical relation by putting $R^\alpha=R$ for all $\alpha$.
\item\label{rem:logrel-4} Conversely, every step-indexed logical relation $(R^\alpha)_\alpha$ induces a logical relation: since $\RelCat[X]{\C}$ is a small complete lattice, the decreasing chain $(R^\alpha)_\alpha$  eventually stabilizes, i.e.\ there exists a (least) ordinal $\nu$ such that $R^{\nu+1}=R^\nu$. An upper bound to $\nu$ is given by the cardinality of $\RelCat[X]{\C}$. Then $R^\nu$ is a logical relation because
\[ R^\nu = R^{\nu+1} \leq \iimg{(c\times \wt{c})}{\ol{B}(R^\nu,R^\nu)}.\]
\end{enumerate}
\end{rem}

\begin{example}\label{ex:logrel}
Let $\gamma\colon \Tr\to B(\Tr,\Tr)$ be the operational model of $\stscr$ given by \eqref{exa:gamma}. By postcomposing with the map $b\mapsto \{b\}$ we regard $\gamma$ as a nondeterministic transition system $\gamma\colon \Tr\to \Pow_\star B(\Tr,\Tr)$, and we denote by $\wt{\gamma}\colon \Tr\to \Pow_\star B(\Tr,\Tr)$ its corresponding \emph{weak transition system}, given for each $\tau\in \Ty$ and $t\in \Tr_\tau$ by
\[ \wt{\gamma}_\tau(t)= \{ t\} \cup \bigcup_{t\To s} \gamma_\tau(s). \]
The bifunctor $\Pow_\star\cdot B$ has a relation lifting $\arP_\star\cdot \ol{B}$ where $\ol{B}$ is the canonical lifting of $B$ (\autoref{ex:lift-beh}) and $\arP_\star$ is the left-to-right Egli-Milner lifting of $\Pow_\star$ (\autoref{ex:endofunctor-liftings}\ref{ex:endofunctor-liftings-typed-pow}). The relations $(\logrel^\alpha)_{\alpha}$ given by \autoref{def:logrel} for $\alpha\leq \omega$ and by $\logrel^\alpha=\logrel^\omega$ for $\alpha>\omega$ form a step-indexed logical relation for $(\gamma,\wt{\gamma})$ w.r.t.\ the lifting $\arP_\star\cdot \ol{B}$.
\end{example}

\subsection{Constructing Step-Indexed Logical Relations}\label{sec:step-indexed}

In the following we present a general construction of a step-indexed logical relation that applies to arbitrary higher-order coalgebras, in particular to operational models of higher-order GSOS laws. 
\begin{construction}[Step-Indexed Henceforth]
\label{def:stepindexed}
Let $B\colon \C^\op\times \C\to \C$ with a relation lifting $\ol{B}$, and let $c,\wt{c} \colon X \to B(X,X)$ be coalgebras.
For every $R\monoto X\times X$ we define the step-indexed logical relation $(\,\invp^{\ol{B},c,\wt{c},\alpha} R\monoto X\times X\,)_\alpha$
by transfinite induction:
\begin{align*}
\invp^{\ol{B},c,\wt{c},0} R	 				&=\, R,\\
\invp^{\ol{B},c,\wt{c},\alpha+1} R 	&=\, \invp^{\ol{B},c,\wt{c},\alpha} R \wedge \iimg{(c\times \wt{c})}{\ol{B}(\invp^{\ol{B},c,\wt{c},\alpha}R,\invp^{\ol{B},c,\wt{c},\alpha}R)},\\
\invp^{\ol{B},c,\wt{c},\alpha} R 		&=\, \bigwedge_{\beta<\alpha} \invp^{\ol{B},c,\wt{c},\beta} R \qquad \text{for limit ordinals $\alpha$}.
\end{align*}
We usually write $\square^\alpha R$ for $\square^{\ol{B},c,\wt{c},\alpha} R$, and we let $\nu$ denote the least ordinal such that $\square^{\nu+1} R=\square^\nu R$ (\autoref{rem:logrel}\ref{rem:logrel-4}).
\end{construction}
The construction of $\square^\alpha R$ thus takes, at every non-zero ordinal $\alpha$, the greatest relation satisfying the conditions of \autoref{def:bisim-logical-relation}\ref{def:bisim-logical-relation-3}.

\begin{example}\label{ex:henceforth}
Taking $B$, $\ol{B}$, $\gamma$, $\wt{\gamma}$ as in \autoref{ex:logrel} and $R=\top$ (the total relation on $\Tr$), we see that $(\square^{\ol{B},\gamma,\wt{\gamma},\alpha} \top)_\alpha$ coincides with the step-indexed logical relation $(\logrel^\alpha)_\alpha$ for \stscr given by \autoref{def:logrel}. Note that $(\square^\alpha \top)_\alpha$ stabilizes after $\omega$ steps (\autoref{rem:logrel-props}\ref{rem:logrel-props-2}).
\end{example}

We will now apply \autoref{def:stepindexed} to the operational model of a higher-order GSOS law. Let us fix the required setup:
\begin{assumptions}\label{asm-cong}
In the remainder we assume that we are given:
\begin{enumerate}
\item\label{asm-cong-1} an endofunctor $\Sigma\colon \C\to \C$ that preserves strong epimorphisms and admits free algebras $\Sigmas X$, with its canonical lifting $\ol{\Sigma}$;
\item\label{asm-cong-2} a bifunctor $B\colon \C^\op \times \C \to \C$ with a relation lifting $\ol{B}$;
\item\label{asm-cong-3} for each $X,Y\in \C$ a preorder $\preceq$ on $B(X,Y)$ such that each relation $\ol{B}(R,S)$ is up-closed w.r.t. $\preceq$ (\autoref{def:good-for-simulations});
\item\label{asm-cong-4} a $V$-pointed higher-order GSOS law $\rho$ of $\Sigma$ over $B$ that lifts to a $\Delta_V$-pointed higher-order GSOS law $\ol{\rho}$ of $\ol{\Sigma}$ over $\ol{B}$.  
\end{enumerate}
\end{assumptions}

\begin{remark}\label{rem:conditions}
\begin{enumerate}
\item We stress that the relation lifting $\ol{B}$ of $B$ need not be canonical, while we always choose the canonical lifting  of $\Sigma$.
\item The requirement that $\Sigma$ preserves strong epimorphism ensures that the free monad $\Sigmas$ lifts (\autoref{prop:free-monad-lift}) and is rather innocent: it holds, e.g., for all set functors, all left adjoints (including binding functors on presheaf categories figuring in \autoref{sec:applications}), for coproducts of such functors, and for products in categories where strong epimorphisms are stable under products. This covers a wide range of syntax functors for higher-order languages, which typically combine polynomial and binding constructions.   
\item \label{rem:rho-lift} Condition \ref{asm-cong-4} states that $\under{\ol{\rho}_{R,S}} = \rho_{X,Y}$ for all relations $R\monoto X\times X$ and $S\monoto Y\times Y$, where $\under{-}\colon \Rel(\C)\to\C$ is the forgetful functor. Note that $\under{\ol{\rho}_{R,S}}$ has the type~\eqref{eq:rhotcl} 
by \autoref{prop:free-monad-lift}. Note also that a lifting $\ol{\rho}$ is necessarily unique because $\under{-}$ is faithful. If $\ol{B}$ is the canonical lifting of $B$, a corresponding higher-order GSOS law always has a relation lifting~\cite[Constr.~D.5]{utgms23_arxiv}.
Generally, the existence of $\ol{\rho}$ can be understood as a monotonicity condition on the rules represented by $\rho$. For instance, for functors $B$ modelling nondeterministic behaviours and whose relation lifting involves the left-to-right Egli-Milner lifting $\arP$, it entails the absence of rules with negative premises~\cite{fs10}.
\end{enumerate}
\end{remark}

\begin{example}\label{ex:mutcl-asm}
For \stscr we instantiate the above data to
\begin{enumerate}
\item the signature functor $\Sigma\colon \Set^\Ty\to \Set^\Ty$ of \stscr;
\item the behaviour $\Pow_\star\cdot B$ and its lifting $\arP_\star \cdot \ol{B}$ as in \autoref{ex:logrel};
\item the order $\seq$ on $\Pow_\star B(X,Y)$ given by pointwise inclusion;
\item the ($0$-pointed) higher-order GSOS law $\rho'$ of $\Sigma$ over $\Pow_\star\cdot B$ whose component at $X,Y\in \Set^\Ty$ is given by
\begin{equation}\label{eq:rho0-to-rho}
  \begin{tikzcd}[column sep=50, row sep=20]
    \Sigma(X\times \Pow_\star B(X,Y)) \ar{r}{\Sigma \st_{X,B(X,Y)}} &
    \Sigma\Pow_\star(X\times B(X,Y)) \ar{dl}[description]{\delta_{X\times B(X,Y)}} \\
    \Pow_\star\Sigma(X\times B(X,Y)) \ar{r}{\Pow_\star\rho_{X,Y}} & \Pow_\star B(X,\Sigmas(X+Y))
  \end{tikzcd}
\end{equation}
Here $\rho$ is the higher-order GSOS law for \stscr given by \eqref{eq:rhotcl}, 
the natural transformation $\st$ is the (pointwise) canonical strength
\[ \st_{X,Y}\colon X\times \Pow_\star Y \to \Pow_\star(X\times Y),\quad (\st_{X,Y})_\tau(x,A) = \{ (x,a)\mid a\in A \},  \]
and the natural transformation $\delta \c \Sigma \Pow_\star \to \Pow_\star\Sigma$ is given by
\[
  \f(U_1,\ldots,U_n) \mapsto \{\, \f(u_1,\ldots,u_n) \mid u_1\in U_1,\ldots,u_n\in U_n\}.
\]
Hence, $\rho'$ is essentially the law $\rho$ regarded as a law for the non\-deterministic behaviour functor $\Pow_\star\cdot B$. Accordingly, the operational model of $\rho'$ is simply the composite
\[
  \mS \xto{\gamma} B(\mS,\mS)\monoto \Pow_\star
  B(\mS,\mS),
\]
of the operational model $\gamma$ of $\rho$ given by \eqref{exa:gamma} with the map $b\mapsto \{b\}$.

Let us verify that $\rho'$ has a relation lifting (Assumption~\ref{asm-cong}\ref{asm-cong-4}). Since $\ol{B}$ is the canonical lifting, the law $\rho$ lifts (\autoref{rem:conditions}.\ref{rem:rho-lift}), so we only need to show that $\st$ and $\delta$ lift. This means that for all $R\monoto X\times X$ and $S\monoto Y\times Y$ the maps
\[ \st_{X,Y}\colon R\times \arP_\star S \to \arP_\star (R\times S) \qand \delta_X\colon \ol{\Sigma}\,\arP_\star R\to \arP_\star \ol{\Sigma} R  \]
are $\Rel(\C)$-morphisms, which easily follows from the definitions.
\end{enumerate}
\end{example}

The key ingredient to our general congruence result for logical relations is a
higher-order version of \emph{lax
  bialgebras}~\cite{DBLP:conf/concur/BonchiPPR15}:
\removeThmBraces
\begin{definition}[\cite{gmstu23,UrbatTsampasEtAl23}]\label{D:lax-bialg}
 A \emph{lax $\rho$-bialgebra} $(X,a,c)$
  is given by an object $X\in \C$ and morphisms $a\c \Sigma X\to X$ and $c\c
  X\to B(X,X)$ such that the diagram below commutes laxly:
  \begin{equation*}\label{eq:lax-bialgebra}
    \begin{tikzcd}[column sep=1.95em]
      \Sigma X \ar{r}{a} \ar{d}[swap]{\Sigma \langle \id, c\rangle}
      &
      X \ar{r}{c}
      &[2em]
      B(X,X)
      \\
      \Sigma(X\times B(X,X))
      \ar{r}{\rho_{X,X}}
      &
      B(X,\Sigmas(X+X))
      \ar[phantom]{u}[description]{\dgeq{-90}}
      \ar{r}{B(\id,\Sigmas\nabla)}
      &
      B(X,\Sigmas X) \ar{u}[swap]{B(\id,\hat{a})}
    \end{tikzcd}
  \end{equation*}
If the diagram commutes strictly, then $(X,a,c)$ is a \emph{$\rho$-bialgebra}.
\end{definition}
\resetCurThmBraces

Informally, in a lax bialgebra the operational rules corresponding to the law $\rho$ are sound (every $c$-transition required by some rule exists), and in a bialgebra they are also complete (no other $c$-transitions exists). The operational model $(\mS,\ini,\gamma)$ of $\rho$ (\autoref{fig:gamma}) is a $\rho$-bialgebra, and if $\wt{\gamma}$ is some weakening of $\gamma$, laxness of $(\mS,\ini,\wt{\gamma})$ means that the rules given by $\rho$ are sound for weak transitions.

\begin{example}\label{ex:lax-bialgebra}
The weak operation model $\wt{\gamma}$ for \stscr (\autoref{ex:logrel}) forms a lax $\rho'$-algebra, with $\rho'$ given by \eqref{eq:rho0-to-rho}. Indeed, all the rules of \autoref{fig:skirules} remain sound when strong transitions are replaced by weak ones\,\footnote{This alludes to the rules being \emph{cool} in the sense of the
  cool congruence formats of \citet{DBLP:journals/tcs/Bloom95} and
  \citet{DBLP:journals/tcs/Glabbeek11}.}. For illustration, consider the two rules for application and their respective weak versions: 
\begin{align*}
	\inference{t\to t'}{t \app{\tau_{1}}{\tau_{2}} s\to t' \app{\tau_{1}}{\tau_{2}} s} 
\quad
	\inference{t\xto{s} t'}{t \app{\tau_{1}}{\tau_{2}} s\to t'} 
\quad
	\inference{t\To t'}{t \app{\tau_{1}}{\tau_{2}} s\To t' \app{\tau_{1}}{\tau_{2}} s}
\quad
	\inference{t \stackon[.5ex]{~\To~}{\scalebox{.75}{$\scriptsize s$}} t'}{t \app{\tau_{1}}{\tau_{2}} s\To t'}
\end{align*}
The third rule is sound because it emerges via repeated application of the first one. The fourth rule is sound as it follows from the second and third rule. Similarly for the other rules of \autoref{fig:skirules}.
\end{example}
The soundness condition modeled by lax bialgebras give rise to a natural criterion for the logical relation $\square^\nu R$ to be a congruence:

\begin{theorem}\label{thm:main}
Let $(X,a,c)$ be a $\rho$-bialgebra, and let $(X,a,\wt{c})$ be a lax
$\rho$-bialgebra. For every congruence $R\monoto X\times X$ on $(X,a)$ and every ordinal $\alpha$, the
relation $\square^{\alpha}R=\square^{\ol{B},c,\wt{c},\alpha}R$ is a congruence
on $(X,a)$.
\end{theorem}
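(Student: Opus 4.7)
We proceed by transfinite induction on $\alpha$. The base case $\square^0 R = R$ is a congruence by hypothesis. For a limit ordinal $\alpha$, the relation $\square^\alpha R = \bigwedge_{\beta<\alpha} \square^\beta R$ is a congruence because an arbitrary meet of congruences on $(X,a)$ is again a congruence (by monotonicity of $\ol{\Sigma}$ and $\fimg{a}{-}$ combined with the inductive hypothesis). The substance lies in the successor step.

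For $\alpha \to \alpha+1$, assume that $\square^\alpha R$ is a congruence. Since $\square^{\alpha+1}R = \square^\alpha R \wedge \iimg{(c\times \wt{c})}{\ol{B}(\square^\alpha R, \square^\alpha R)}$, it suffices to prove (a)~$\fimg{a}{\ol{\Sigma}\square^{\alpha+1}R} \leq \square^\alpha R$ and (b)~$\fimg{a}{\ol{\Sigma}\square^{\alpha+1}R} \leq \iimg{(c\times \wt{c})}{\ol{B}(\square^\alpha R, \square^\alpha R)}$. Part~(a) is routine, using $\square^{\alpha+1}R \leq \square^\alpha R$, monotonicity, and the inductive hypothesis. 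For~(b), the direct image/preimage adjunction makes it equivalent to build a $\RelCat{\C}$-morphism $\ol{\Sigma}\square^{\alpha+1}R \to \ol{B}(\square^\alpha R, \square^\alpha R)$ with legs $c\cdot a$ (left) and $\wt{c}\cdot a$ (right). I would construct this morphism by chasing the lifted higher-order GSOS law $\ol{\rho}$: the two inclusions defining $\square^{\alpha+1}R$ together ensure that $\langle \id, c\rangle \times \langle \id, \wt{c}\rangle$ restricts to a $\RelCat{\C}$-morphism from $\square^{\alpha+1}R$ into the product, in $\RelCat{\C}$, of $\square^\alpha R$ and $\ol{B}(\square^\alpha R, \square^\alpha R)$. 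Applying $\ol{\Sigma}$, then $\ol{\rho}_{\square^\alpha R, \square^\alpha R}$, then $\ol{B}(\id, \ol{\Sigmas}\nabla)$, and finally $\ol{B}(\id, \hat{a})$---which lifts to $\RelCat{\C}$ since by \autoref{prop:free-monad-lift} and the inductive hypothesis $\square^\alpha R$ is also a $\Sigmas$-congruence---yields a morphism of relations into $\ol{B}(\square^\alpha R, \square^\alpha R)$.

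By the strict $\rho$-bialgebra axiom for $(X, a, c)$, the left leg of this composite is exactly $c\cdot a$; by the lax $\rho$-bialgebra axiom for $(X, a, \wt{c})$, the right leg is only $\preceq \wt{c}\cdot a$. The main obstacle is converting this laxly commuting morphism into a strict $\RelCat{\C}$-morphism with legs $(c\cdot a, \wt{c}\cdot a)$. This is precisely where the up-closure hypothesis on $\ol{B}(\square^\alpha R, \square^\alpha R)$ (\autoref{asm-cong}\ref{asm-cong-3} together with \autoref{def:good-for-simulations}) enters: it allows us to replace the right leg by $\wt{c}\cdot a$ while preserving the $\RelCat{\C}$-morphism property, thereby establishing~(b) and completing the induction.
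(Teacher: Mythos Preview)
Your proof is correct and follows essentially the same route as the paper: transfinite induction, with the successor step decomposed into parts (a) and (b), where (b) is established by chasing the lifted law $\ol{\rho}$ through the bialgebra diagram and then invoking up-closure to straighten the lax right leg. One small terminological remark: the morphism you build from $\square^{\alpha+1}R$ into $\square^\alpha R \times \ol{B}(\square^\alpha R,\square^\alpha R)$ has different underlying maps on the two legs ($\langle\id,c\rangle$ versus $\langle\id,\wt{c}\rangle$), so it is not literally a $\RelCat{\C}$-morphism as defined in the paper; the paper handles this by working directly with the span-style commuting squares (which is also what \autoref{def:good-for-simulations} expects), and you should phrase it the same way. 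Similarly, the final step uses the lifted algebra structure $\hat{@}$ on $\square^\alpha R$ rather than $\hat{a}$ itself.
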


\begin{proof}
We proceed by transfinite induction. The base case is immediate since $\square^0R = R$ is a congruence by assumption. The limit step follows because meets of congruences are congruences. For the successor step $\alpha\to\alpha+1$, suppose that $\square^\alpha R$ is a congruence. We are to prove that $\square^{\alpha+1}R$ is congruence, i.e.\ $\fimg{a}{\ol{\Sigma}(\square^{\alpha+1}R)} \leq \square^{\alpha+1} R$.
By definition of $\square^{\alpha+1}R$, this is equivalent to showing
\begin{align}
\fimg{a}{\ol{\Sigma}(\square^{\alpha+1}R)} &\leq \square^{\alpha} R, \label{eq:proof-main-1} \\
\fimg{a}{\ol{\Sigma}(\square^{\alpha+1}R)} &\leq \iimg{(c\times \wt{c})}{\ol{B}(\square^\alpha R, \square^\alpha R)}.\label{eq:proof-main-2}
\end{align}
The inequality \eqref{eq:proof-main-1} holds because $\square^{\alpha+1}R\leq \square^{\alpha}R$ and $\square^\alpha R$ is a congruence by induction. To prove \eqref{eq:proof-main-2}, we first observe that there exists a $\C$-morphism $\ol{\Sigma}(\square^{\alpha+1} R)\to \ol{B}(\square^\alpha R,\square^\alpha R)$ such that
\begin{equation}\label{eq:proof-main-diag}
  \begin{tikzcd}
    & \ar[bend right=2em]{dl}[swap]{c\cdot a \cdot \outl} \ar[bend left=2em]{dr}{\wt{c}\cdot a\cdot \outr} \ar[dashed]{d} \ol{\Sigma}(\square^{\alpha+1}R) & {~}\\
    B(X,X) & \ar[phantom]{ur}[description, pos=.35, xshift=-10]{\dleq{45}} \ar{l}[swap]{\outl}  \ol{B}(\square^\alpha R, \square^\alpha R) \ar{r}{\outr} & B(X,X)
  \end{tikzcd}  
\end{equation}
commutes laxly, where $\outl$ and $\outr$ are the projections of the respective relations (omitting subscripts).
This follows from the diagram in \autoref{fig:proof-main}; all its cells commute (laxly) as indicated. Here $p$ is the pairing of the $\Rel(\C)$-morphisms witnessing that $\square^{\alpha+1}R\leq \square^{\alpha}R$ and $\square^{\alpha+1}R\leq \iimg{(c\times \wt{c})}{\ol{B}(\square^\alpha R,\square^\alpha R)}$, and $@\colon \ol{\Sigma} R \to R$ is the $\ol{\Sigma}$-algebra structure witnessing that $R$ is a congruence on $(X,a)$. (We write $f_1\colon S\to T$ for the $\C$-morphism witnessing that $f\colon Y\to Z$ is $\Rel(\C)$-morphism from $S\monoto Y\times Y$ to $T\monoto Z\times Z$.)
    \begin{figure*}
      \begin{tikzcd}
        X
        \ar{ddd}[swap]{c}
        & \Sigma X 
        \ar[swap]{l}{a}
        \ar{d}[swap]{\Sigma\langle\id,c\rangle} & & \overline{\Sigma}(\invp^{\alpha+1}R)
        \ar[swap]{ll}{\outl}
        \arrow{d}{(\ol{\Sigma} p)_1}
        \ar{rr}{\outr}
        & & \Sigma X  \ar[swap]{d}{\Sigma\langle \id,\widetilde{c}\rangle}
        \ar{r}{a}
        \ar[phantom]{dddr}[description, pos=.4, xshift=10]{\dleq{45}}
        & X \ar{ddd}{\widetilde{c}} \\
        &
        \Sigma(X \times B(X,X))
        \ar{d}[swap]{\rho_{X,X}}
        & & \ar[swap]{ll}{\outl}
        \overline{\Sigma}(\invp^{\alpha}R \times
        \overline{B}(\invp^{\alpha}R, \invp^{\alpha}R))
        \arrow{d}{(\ol{\rho}_{\square^\alpha R, \square^\alpha R})_1}
        \ar{rr}{\outr}
        & & \Sigma(X \times B(X,X)) \ar{d}[swap]{\rho_{X,X}}
        & \\
        &
        B(X, \Sigma^{\star}(X + X))
        \ar{dl}{B(\id,\widehat{a} \comp \Sigmas \nabla)}
        & &
        \ar[swap]{ll}{\outl}
        \overline{B}(\invp^{\alpha}R,
        \overline{\Sigma}^{\star}(\invp^{\alpha}R + \invp^{\alpha}R))
        \ar{d}{(\ol{B}(\id,\widehat{@}\cdot \ol{\Sigma}^\star \nabla))_1}
        \ar{rr}{\outr}
        & &
        B(X, \Sigma^{\star}(X + X))
        \ar[swap]{dr}{B(\id,\widehat{a} \comp \Sigmas \nabla)}
        \\
        B(X,X)
        & & &
        \ar[swap]{lll}{\outl}
        \overline{B}(\invp^{\alpha}R, \invp^{\alpha}R)
        \ar{rrr}{\outr}
        & & &
        B(X,X)
      \end{tikzcd}
\caption{Diagram for the proof of \autoref{thm:main}}\label{fig:proof-main}
    \end{figure*}
Since $\ol{B}(\square^\alpha R,\square^\alpha R)$ is up-closed (Assumption~\ref{asm-cong}\ref{asm-cong-3}), it follows from \eqref{eq:proof-main-diag} that there exists a morphism $\ol{\Sigma}(\square^{\alpha+1}R)\to \ol{B}(\square^\alpha R,\square^\alpha R)$ such that the diagram below commutes strictly:
\[
  \begin{tikzcd}
    & \ar[bend right=2em]{dl}[swap]{c\cdot a \cdot \outl} \ar[bend left=2em]{dr}{\wt{c}\cdot a\cdot \outr} \ar[dashed]{d} \ol{\Sigma}(\square^{\alpha+1}R) & {~}\\
    B(X,X) & \ar{l}[swap]{\outl}  \ol{B}(\square^\alpha R, \square^\alpha R) \ar{r}{\outr} & B(X,X)
  \end{tikzcd} 
\]
This commutative diagram is equivalent to \eqref{eq:proof-main-2}.
\end{proof}

The main interest of~\autoref{thm:main} is the case where $R=\top$ and $(X,a,c)$ is the operational model of the higher-order GSOS law $\rho$. This leads us to the main result of our paper, a congruence result for our generic step-index logical relation $(\square^\alpha \top)_\alpha$. We stress that our results rely on the \autoref{assumptions} and \ref{asm-cong}.

\begin{corollary}[Congruence]\label{cor:main}
Let $(\mS,\ini,\gamma)$ be the operational model of $\rho$, and let $(\mS,\ini,\wt{\gamma})$ be a lax $\rho$-bialgebra. Then for every ordinal $\alpha$ the relation $\square^{\alpha}\top=\square^{\ol{B},\gamma,\wt{\gamma},\alpha}\top$ is a congruence on  $\mu\Sigma$.
\end{corollary}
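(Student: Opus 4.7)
The plan is to invoke \autoref{thm:main} as a black box, instantiated at $(X,a,c) = (\mS, \ini, \gamma)$, $\wt{c} = \wt{\gamma}$, and $R = \top$. Three hypotheses need to be discharged: that $(\mS,\ini,\gamma)$ is a (strict) $\rho$-bialgebra, that $(\mS,\ini,\wt{\gamma})$ is a lax $\rho$-bialgebra, and that $\top \monoto \mS \times \mS$ is a congruence on $(\mS,\ini)$.

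The first is immediate from \autoref{def:operational-model}: the operational model $\gamma$ is \emph{defined} by primitive recursion so as to make the diagram in \autoref{fig:gamma} commute, which is precisely the strict $\rho$-bialgebra equation at $(\mS,\ini,\gamma)$. The second is the hypothesis of the corollary itself. The third is trivial: $\top$ is the top element of the complete lattice $\Rel_{\mS}$, so any subobject of $\mS \times \mS$, including $\fimg{\ini}{\ol{\Sigma}\top}$, lies below it; hence $\fimg{\ini}{\ol{\Sigma}\top} \leq \top$, which is exactly the congruence condition of \autoref{def:congruence}.

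With all three hypotheses verified, \autoref{thm:main} immediately yields that $\square^{\ol{B},\gamma,\wt{\gamma},\alpha}\top$ is a congruence on $(\mS,\ini)$ for every ordinal $\alpha$, which is the statement to be proved. I do not anticipate any obstacle, as this corollary is simply the specialization of the generic theorem to the initial operational bialgebra with the trivial starting congruence $R=\top$; all the real work was done in the transfinite induction that proves \autoref{thm:main}.
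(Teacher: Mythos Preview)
Your proposal is correct and matches the paper's approach exactly: the corollary is stated immediately after \autoref{thm:main} as its specialization to the operational model with $R=\top$, and the paper gives no further proof. Your verification of the three hypotheses (strict bialgebra by \autoref{def:operational-model}, lax bialgebra by assumption, $\top$ a congruence trivially) is precisely what is needed.
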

Since every congruence on $\mS$ is reflexive (\autoref{prop:cong-refl}), we immediately conclude:

\begin{corollary}[Fundamental Property]\label{cor:fundamental-property}
In the setting of \autoref{cor:main}, the logical relation $\square^\nu \top$ is reflexive.
\end{corollary}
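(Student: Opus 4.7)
The plan is to derive the statement immediately from the two results that precede it. By \autoref{cor:main}, instantiated at the ordinal $\alpha = \nu$, the relation $\square^{\nu}\top = \square^{\ol{B},\gamma,\wt{\gamma},\nu}\top$ is a congruence on the initial algebra $\mu\Sigma$ (since $\nu$ is an ordinal like any other, and \autoref{cor:main} asserts the congruence property for \emph{every} ordinal). Then \autoref{prop:cong-refl} applies: congruences on $\mu\Sigma$ are automatically reflexive, because the identity map $\id_{\mu\Sigma}$ factors through any congruence via the universal property of the initial algebra. Composing these two facts yields reflexivity of $\square^{\nu}\top$.

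Concretely, I would write the proof as a one-line invocation: ``By \autoref{cor:main}, $\square^{\nu}\top$ is a congruence on $\mu\Sigma$, and by \autoref{prop:cong-refl} every such congruence is reflexive.'' There is no genuine obstacle here; the content of the result lies entirely in \autoref{cor:main} and the general fact about congruences on initial algebras. The only point worth flagging is that the choice of $\nu$ plays no special role in this particular corollary --- any ordinal would do --- but singling out $\nu$ is natural because $\square^{\nu}\top$ is the actual (non-step-indexed) logical relation obtained from the construction, i.e.\ the fixed point of the chain.

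If one wanted to expand the argument slightly for the reader's benefit, one could spell out that reflexivity means $\Delta_{\mu\Sigma} \le \square^{\nu}\top$, and observe that this follows by initiality: the congruence $\square^{\nu}\top$ carries a $\Sigma$-algebra structure via its witnessing morphism $\ol{\Sigma}(\square^{\nu}\top) \to \square^{\nu}\top$, and the unique $\Sigma$-algebra morphism $\mu\Sigma \to \square^{\nu}\top$ postcomposed with either projection $\outl,\outr$ must equal $\id_{\mu\Sigma}$ (again by initiality), producing the desired factorization of the diagonal. But this level of detail is likely unnecessary given that \autoref{prop:cong-refl} has already been cited.
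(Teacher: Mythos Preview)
Your proposal is correct and matches the paper's own argument exactly: the paper simply observes that $\square^\nu\top$ is a congruence on $\mu\Sigma$ by \autoref{cor:main} and then invokes \autoref{prop:cong-refl} to conclude reflexivity. Your additional remarks (that any ordinal would do, and the expanded initiality argument) are accurate but, as you anticipate, unnecessary.
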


Finally, we deduce a simple criterion for the logical relation $\square^\nu \top$ to yield a sound proof method for a contextual preorder:

\begin{corollary}[Soundness]\label{cor:soundness}
Let $O\monoto \mS\times \mS$ be a preorder such that the contextual preorder $\cprd$ exists. In the setting of \autoref{cor:main}, if the logical relation $\square^\nu \top$ is $O$-adequate, it is contained in $\cprd$.
\end{corollary}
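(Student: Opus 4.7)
The proof plan is essentially an immediate assembly of the preceding machinery: the only thing to check is that $\square^\nu\top$ meets the two defining properties of the contextual preorder $\lesssim_O$ as the greatest $O$-adequate congruence.

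First I would invoke \autoref{cor:main} with $\alpha=\nu$ to conclude that $\square^\nu\top$ is a congruence on $\mu\Sigma$. This uses all the setup of \autoref{asm-cong} and the fact that $(\mS,\ini,\gamma)$ is a $\rho$-bialgebra while $(\mS,\ini,\wt\gamma)$ is the assumed lax $\rho$-bialgebra. No further work is required here; the congruence property of $\square^\nu \top$ is already established in the generic theorem.

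Next, by hypothesis $\square^\nu\top$ is $O$-adequate, i.e.\ $\square^\nu\top \leq O$. So $\square^\nu\top$ is an $O$-adequate congruence on $\mS$. Since by definition $\lesssim_O$ is the greatest $O$-adequate congruence on $\mS$, we conclude $\square^\nu\top \leq\, \lesssim_O$, which is the desired inclusion.

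There is no real obstacle in this corollary: all the hard work has been absorbed into \autoref{thm:main} (providing the congruence property) and into \autoref{prop:existence-of-contextual-preorder} (providing the existence and universal property of $\lesssim_O$). The statement is essentially a bookkeeping consequence observing that $\square^\nu\top$ lies in the poset of $O$-adequate congruences that $\lesssim_O$ maximises. I would keep the written proof to roughly three lines, mirroring the concrete \autoref{cor:sound} but replacing the hand-crafted argument for $\logrel^\omega$ by the generic chain ``congruence by \autoref{cor:main} $+$ $O$-adequacy by assumption $+$ maximality of $\lesssim_O$.''
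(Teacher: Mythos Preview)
Your proposal is correct and matches the paper's intended argument exactly. The paper does not spell out a proof for this corollary, treating it as immediate from the definition of $\lesssim_O$ as the greatest $O$-adequate congruence together with \autoref{cor:main}; your three-step assembly (congruence via \autoref{cor:main}, $O$-adequacy by hypothesis, maximality of $\lesssim_O$) is precisely that implicit argument made explicit.
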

Recall from \autoref{prop:contextual-preorder} that $\cprd$ always exists under natural assumptions on the category $\C$ and the syntax functor $\Sigma$.

\begin{example}
\label{ex:dbtilde}
  Choose the data for \stscr as in \autoref{ex:mutcl-asm}.
\begin{enumerate}
\item By taking $O_\tau=\{ (t,s) \mid t{\Downarrow}\,\To\, s{\Downarrow}  \}$ as in \autoref{ex:observations}\ref{ex:observations-2}, we recover the results of \autoref{th:l-is-cong} and \autoref{cor:sound}: the relation $\square^\nu \top = \logrel^\omega$ is a congruence, and sound for the contextual preorder.
\item\label{ex:dbtilde:2} To capture the ground contextual preorder, we extend the weak transition system $\wt{\gamma}$ of \autoref{ex:logrel} as follows: put
\[ \dbtilde{\gamma}_{\arty{\tau_1}{\tau_2}}(t)= \wt{\gamma}_{\arty{\tau_1}{\tau_2}}(t) \cup \bigcup_{t\To s}\{ \lambda e\colon \tau_1.\,({s}\,{e}) \}  \]
at function types and $\dbtilde{\gamma}_\tau = \wt{\gamma}_\tau$ at all other types. This amounts to extending the weak transition relation $\xTo{l}$ to $\myarr{l}$ as required in \autoref{def:logreltwo}. Observe that $(\mS,\ini,\dbtilde{\gamma})$ is still a lax bialgebra, i.e.\ the rules remain sound w.r.t.\ the extended weak transition relation. The step-indexed logical relation $(\square^\alpha \top)_\alpha = (\square^{\ol{B},\gamma,\dbtilde{\gamma},\nu}\top)_\alpha$ coincides with $(\logreltwo^\alpha)_{\alpha}$ given in \autoref{def:logreltwo}. Taking $O_\booltype \{ (t,s) \mid t{\Downarrow}\,\To\, s{\Downarrow}  \}$ and $O_{\tau}=\mS_\tau\times \mS_\tau$ for $\tau\neq \booltype$ as in \autoref{ex:observations}\ref{ex:observations-2}, we recover the results of \autoref{th:l2-is-cong} and \autoref{cor:sound2}: the relation $\logreltwo^\omega$ is a congruence, and sound for the ground contextual preorder.
\end{enumerate}
\end{example}

The key insight to draw from the above results is that the lax bialgebra condition forms the language-specific core of compatibility and soundness of logical relations, while their boiler-plate part comes for free thanks to the categorical setup. Checking the lax bialgebra condition itself for a given language usually boils down to a straightforward analysis of its rules, as in \autoref{ex:lax-bialgebra}. 

\begin{remark}\label{rem:logrel-vs-howe}
We highlight some noteworthy connections with the recent work of Urbat et al.~\cite{UrbatTsampasEtAl23}, which lifts another operational method -- namely Howe's method~\cite{DBLP:conf/lics/Howe89, DBLP:journals/iandc/Howe96} -- to the level of higher-order abstract GSOS. Briefly, Howe's method is used to show that applicative (bi)similarity is a congruence, while logical relations are a form of relation between programs that is essentially designed with the congruence property in mind. There are certain similarities between the two methods, which our present work makes explicit:
\begin{enumerate}
\item Logical relations and Howe's method are regarded as independent techniques in the literature, each with its own use cases.
Our present results demonstrate that logical relations for recursive types can be modeled in the same categorical framework as applicative simulations for untyped languages. 
Even more remarkably, the respective congruence results (\autoref{cor:main} and \cite[Thm.~VIII.6]{UrbatTsampasEtAl23}) boil down to the same abstract (lax-bialgebra) condition. This exposes a formal and explicit connection between the two most widely used higher-order operational techniques.

\item Our present technical setup is slightly simpler compared to~\cite{UrbatTsampasEtAl23}:
\begin{enumerate}
\item Extensivity is not needed for the congruence result (just for constructing the contextual preorder).
\item Weakenings are arbitrary coalgebras, while in~\cite{UrbatTsampasEtAl23} they are restricted. For instance, the coalgebra $\dbtilde{\gamma}$ of \autoref{ex:dbtilde}\ref{ex:dbtilde:2} is not a weakening in the sense of~\cite{UrbatTsampasEtAl23}.
\item The congruence proof proceeds directly via structural induction and does not involve Howe's closure.
\end{enumerate}
These simplifications bear a relevant insight on their own: they reflect, on a categorical level, the common wisdom that congruence proofs for logical relations are (and in fact should be) less complex than for applicative simulations.
\end{enumerate}
\end{remark}

\section{Binding and Nondeterminism}
\label{sec:applications}
We conclude with a case study of our general framework, namely the functional
language \fpc, which is a typed
$\lambda$-calculus with recursive types. We consider a nondeterministic
variant of the standard \fpc~\cite{Gunter92,FioreP94} to emphasize that higher-order abstract GSOS can handle effectful settings. The overall treatment is similar to that of untyped~\cite{gmstu23,UrbatTsampasEtAl23} and simply typed~\cite{gmstu24} $\lambda$-calculi in earlier work; hence we focus on the core ideas and omit technical details.

The types of \fpc are the same as those of \stscr, defined earlier
\eqref{eq:type-grammar}. The language \fpc lacks combinators but instead
includes variables, $\lambda$-abstractions, sums, products conditionals and a
nondeterministic choice operator. We skip the term formation rules of \fpc, as
they are standard; the (call-by-name, open evaluation) operational semantics is
presented in \Cref{fig:fpcrules}.

\begin{figure*}[t]
  \centering
  \columnwidth=\linewidth
  \begin{gather*}
  \inference{t\to t'}{t \app{\tau_{1}}{\tau_{2}} s\to t'
    \app{\tau_{1}}{\tau_{2}} s}
  \qquad
  \inference{}{(\mathsf{lam}\,x\c\tau.\,t) \app{}{} s \to t[x/s]} \qquad
  \inference{t \to t'}{\mathsf{unfold}(t) \to \mathsf{unfold}(t')} \qquad
  \inference{}{\mathsf{unfold}(\mathsf{fold}(t)) \to t}
  \qquad
  \\[1ex]
  \inference{t \to t'}{\mathsf{fst}(t) \to \mathsf{fst}(t')}
  \qquad
  \inference{t \to t'}{\mathsf{snd}(t) \to \mathsf{snd}(t')}
  \qquad
  \inference{}{\mathsf{fst}(\mathsf{pair}(t,s)) \to t}
  \qquad
  \inference{}{\mathsf{snd}(\mathsf{pair}(t,s)) \to s}
  \\[1ex]
  \inference{}{t \oplus s \to t}
  \qquad
  \inference{}{t \oplus s \to s}
  \qquad
  \inference{t\to t'}{\mathsf{case}(t,s,r)\to
    \mathsf{case}(t',s,r)} \qquad
  \inference{}{\mathsf{case}(\inl(t),s,r)\to
    s\app{}{}t} \qquad
  \inference{}{\mathsf{case}(\inr(t),s,r)\to
    r\app{}{}t}
\end{gather*}
\caption{Call-by-name operational semantics of \fpc. Metavariables $r,s,t,t'$ range over
  possibly open terms.}
  \label{fig:fpcrules}
\end{figure*}

To implement \fpc in higher-order abstract GSOS, we build on the
categorical approach to abstract syntax and variable binding using
presheaves~\cite{DBLP:journals/mscs/Fiore22,DBLP:conf/lics/FiorePT99}. Let
$\fset$ be the category of finite cardinals and functions, and regard
the set $\Ty$ of types as a discrete category. An object $\Gamma\colon
n\to \Ty$ of the slice category $\fset/\Ty$ is a \emph{typed variable context} associating to each variable $x\in n$ a type; we put $\under{\Gamma}=n$. The fundamental operation of \emph{context extension} $(-
+ \check\tau) \c \fset/{\Tyl} \to \fset/{\Tyl}$ extends a context with
a new variable $x \c \tau$.

The base category for modeling $\fpc$, and typed languages in general, is the presheaf category $\C=(\Set^{\fset/{\Ty}})^{\Ty}$. Informally, a presheaf $X\in (\Set^{\fset/{\Ty}})^{\Ty}$ associates to each type $\tau$ and  context $\Gamma$ a set $X_\tau(\Gamma)$ of terms of type $\tau$ in context $\Gamma$. For example,
the presheaf $V$ of \emph{variables} is given by $V_{\tau}(\Gamma)=\{x\in |\Gamma| \mid \Gamma(x) =\tau\}$, and \fpc-terms form a presheaf $\Lambda$ given by $\Lambda _\tau(\Gamma) = \{ t \mid \text{$t$ is an \fpc-term and } \Gamma \vdash t\colon \tau \}$, with terms taken modulo $\alpha$-equivalence.
The presheaf $\Lambda$ carries the initial algebra for the endofunctor $\Sigma$
on $(\Set^{\fset/{\Ty}})^{\Ty}$ corresponding to the \emph{binding
  signature}~\cite{DBLP:conf/lics/FiorePT99} of \fpc:
\begin{equation*}
  \begin{aligned}
    & \Sigma_{\tau} X =
      V_{\tau} + X_{\tau} \times X_{\tau}
      + \Sigma^{1}_{\tau}X + \Sigma^{2}_{\tau}X + \Sigma^{3}_{\tau}X + \Sigma^{4}_{\tau}X, \\
    & \Sigma^{1}_{\tau} X =
      \coprod_{\tau' \in \Tyl}(X_{\arty{\tau'}{\tau}} \times X_{\tau'})
      + X_{\tau' \boxtimes \tau} + X_{\tau \boxtimes \tau'},\\
    & \Sigma^{2}_{\tau_{1} \boxplus \tau_{2}} X= X_{\tau_{1}} + X_{\tau_{2}},
      \qquad
      \Sigma^{2}_{\tau_{1} \boxtimes \tau_{2}} X = X_{\tau_{1}} \times X_{\tau_{2}},
    \\
    & \Sigma^{2}_{\mu\alpha.\,\tau} X= X_{\tau[\mu\alpha.\,\tau/\alpha]},
      \qquad
   \Sigma^{2}_{\arty{\tau_{1}}{\tau_{2}}}X= X_{\tau_{2}} \comp (- + \check\tau_{1}),
    \\
    & \Sigma^{3}_{\tau} X = \coprod_{\tau_{1} \in \Tyl}
      \coprod_{\tau_{2} \in \Tyl}X_{\tau_{1} \boxplus \tau_{2}} \times X_{\arty{\tau_{1}}{\tau}}
      \times X_{\arty{\tau_{2}}{\tau}},
    \\
    & \Sigma^{4}_{\tau} X =
      \coprod_{\sigma \c \tau = \sigma[\mu \alpha.\sigma/\alpha]}X_{\mu\alpha.\sigma}.
  \end{aligned}
\end{equation*}
In the definition of $\Sigma^{4}$, $\sigma$ ranges over types with one free
variable.
The bifunctor $B$ for \fpc (compare with \eqref{eq:beh}) is given by
\begin{equation*}
  \begin{aligned}
    B(X,Y)
    &\;= \llangle X,Y \rrangle \times \Pow_{\star}(Y + D(X,Y)),\\
    D_{\arty{\tau_{1}}{\tau_{2}}}(X,Y)
    &\;= Y_{\tau_{2}}^{X_{\tau_{1}}},
    &\hspace{-5.5em}
      D_{\tau_1\boxplus\tau_2}(X,Y)
    &\;= Y_{\tau_1}+Y_{\tau_2}, \\
        D_{\mu\alpha.\,\tau}(X,Y)
                    &\,= Y_{\tau[\mu\alpha.\tau/\alpha]},&\hspace{-6em}
    D_{\tau_1\boxtimes\tau_2}(X,Y) &\,= Y_{\tau_1}\times Y_{\tau_2}.
  \end{aligned}
\end{equation*}
Here, $\Pow_{\star}$ is the pointwise
power set functor $X\mapsto \Pow\cdot X$, we write $Y_{\tau_{2}}^{X_{\tau_{1}}}$
for the exponential in $\Set^{\fset/{\Ty}}$,
and $\llangle \argument,\argument \rrangle$ is given by
\[
  \llangle X,Y \rrangle_{\tau}(\Gamma) = \Set^{\fset/{\Tyl}}\Bigl(\prod_{x \in
    |\Gamma|}X_{\Gamma(x)}, Y_{\tau}\Bigr).
\]
The bifunctor $\llangle \argument,\argument \rrangle$ models
\emph{substitution}. For example, there is a morphism $\gamma_{0}
\c \Lambda \to \llangle \Lambda,\Lambda \rrangle$ mapping terms of \textbf{FPC}
to their substitution structure: given $t\in \Lambda_\tau(\Gamma)$, the natural transformation $(\gamma_0)_\tau(t)\colon \prod_{x \in
    |\Gamma|}\Lambda_{\Gamma(x)}\to \Lambda_{\tau}$ is given at component $\Delta\in \fset/\Ty$ by
\[ \vec{u}\in \prod_{x \in
    |\Gamma|}\Lambda_{\Gamma(x)}(\Delta) \quad\mapsto \quad t[\vec{u}]\in \Lambda_\tau(\Delta),\] i.e.\ the simultaneous substitution of $\vec{u}$ for the variables of $t$. 

There is a $V$-pointed higher-order order GSOS law $\rho$ of $\Sigma$ over $B$ modelling
\textbf{FPC} \ifarx{(\Cref{sec:fpc-extended})}{\cite[Appendix C]{gmtu24_arxiv}} which, as in the case of \stscr, encodes the operational
rules of \fpc into maps, taking into account how the individual constructors
affect the substitution structure of terms,
The canonical operational model of $\rho$ is the coalgebra
\begin{equation*}
  \langle \gamma_{0},\gamma \rangle \c \Lambda \to \llangle \Lambda,\Lambda \rrangle \times
  \Pow_{\star}(Y + D(\Lambda,\Lambda))
\end{equation*}
where $\gamma_0$ is as described above, and $\gamma$ is the transition system
that models $\beta$-reduction according to the operational semantics and
identifies values similarly to \stscr in \Cref{sec:mu-ctl}. For instance,
\[ \gamma_{\tau_{1} \boxtimes \tau_{2}}(\Gamma)(\mathsf{pair}(t,s)) = \{(t,s)\}
  \text{, \,else~\,}
  \gamma_{\tau_{1} \boxtimes \tau_{2}}(\Gamma)(t) = \{ t' \mid  t\to t' \}.\]
Here, $\gamma$ identifies $\mathsf{pair}(t,s)$ as a value with
projections $t$ and $s$.
Replacing $\to$ by its reflexive transitive hull $\To$ we obtain the \emph{weak operational model} $\langle \gamma_0, \widetilde{\gamma}\rangle \c \Lambda \to
  \llangle \Lambda,\Lambda \rrangle \times \Pow_{\star}(Y + D(\Lambda,\Lambda))$.
Similar to the \stscr case, we choose the relation lifting $\arP_\star \cdot \ol{B}$ of $\Pow_\star\cdot B$ where $\arP_\star$ is the left-to-right Egli-Milner lifting and $\ol{B}$ is the canonical lifting. The generic logical relation $(\logrel^\alpha)_\alpha=(\square^\alpha)_\alpha$ w.r.t.\ $(\gamma,\wt{\gamma})$ is then given by $\logrel^0=\top$, $L^\alpha=\bigcap_{\beta<\alpha} L^\beta$ at limit ordinals $\alpha$, and 
\begin{align*}
  \kern.5em\logrel^{\alpha+1}_{\tau} =&\; \logrel^{\alpha}_{\tau}
     \cap \mathcal{S}_{\tau}(\logrel^{\alpha},\logrel^{\alpha})
      \cap \mathcal{E}_{\tau}(\logrel^{\alpha})
      \cap \mathcal{V}_{\tau}(\logrel^{\alpha},\logrel^{\alpha})
\end{align*}
where $\mathcal{S}$, $\mathcal{E}$,
$\mathcal{V}$ are the maps on $\Rel_{\Lambda}$ given by
  \begin{align*}
    & \mathcal{S}_{\tau}(\Gamma)(Q,R)
      = \{(t,s) \mid \text{for all $\Delta$ and $Q_{\Gamma(x)}(\Delta)(u_x,v_x)$ ($x\in \under{\Gamma}$)},  \\
    & \phantom{\mathcal{S}_{\tau}(\Gamma)(Q,R)= \{}
      \text{one has $R_{\tau}(\Delta)(t[\vec{u}],s[\vec{v}])$} \}, \\
    & \mathcal{E}_{\tau}(\Gamma)(R) = \{(t,s) \mid \text{if $t \to t'$ then $\exists s'.\,s \To s'
      \land R_{\tau}(\Gamma)(t',s')$}\}, \\
    & \mathcal{V}_{\arty{\tau_{1}}{\tau_{2}}}(\Gamma)(Q,R) =
      \{(t,s) \mid
      \text{for all $Q_{\tau_{1}}(\Gamma)(e,e'),$} \\
    & \quad \text{if $t=\lambda x.t'$ then $\exists s'.\,s \To \lambda x.s'\,\wedge\,R_{\tau_2}(\Gamma)(t'[e/x], s'[e'/x])$} \}, \\
    & \text{and similarly for the other types.}
  \end{align*}

\begin{remark}
 Note that substitution closure is directly built into the definition of $\L$. In the literature, logical relations are commonly constructed in two steps, namely by first defining them on closed terms, and then extending
  to open terms via (closed) substitutions.
\end{remark}
By instantiating our results of \autoref{sec:logrel} to \fpc, we obtain:
\begin{corollary}
The logical relation $\logrel$ for $\fpc$ is a congruence, and sound for the contextual preorder w.r.t.\ the termination predicate.
\end{corollary}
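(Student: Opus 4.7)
The plan is to derive both claims by instantiating the general Congruence (\autoref{cor:main}) and Soundness (\autoref{cor:soundness}) results to the higher-order GSOS law $\rho$ for \fpc, following the same recipe used for \stscr in \autoref{ex:dbtilde}: check the abstract assumptions, establish the lax bialgebra condition for the weak operational model, and read off congruence and soundness.

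First, I would verify that the categorical setup of \autoref{asm-cong}, together with the global assumptions on $\C$ fixed at the start of \autoref{sec:logrel}, is satisfied. The base category $\C=(\Set^{\fset/\Ty})^\Ty$ is a presheaf category, hence complete, well-powered, infinitary extensive, and equipped with smooth monomorphisms, with the stability properties of monos and strong epis all holding pointwise. The syntax functor $\Sigma$ is built from coproducts, products and the context extension $(-+\check\tau)$, so it preserves strong epis, monos and directed colimits; free $\Sigma$-algebras exist, with the initial one carried by the presheaf $\Lambda$ of \fpc-terms. For the behaviour $\Pow_\star\cdot B$ I take the canonical lifting $\ol{\Pow}_\star\cdot\ol{B}$ equipped with the pointwise-inclusion preorder, which renders every lifted relation up-closed in the sense of \autoref{def:good-for-simulations}. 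The law $\rho$ lifts to $\ol{\rho}$ along these canonical liftings exactly as in \autoref{ex:mutcl-asm}, modulo a routine verification that the strength and the distributive law $\delta$ appearing in~\eqref{eq:rho0-to-rho} also lift.

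Second, and centrally, I would establish that $(\Lambda,\iota,\langle\gamma_0,\widetilde\gamma\rangle)$ is a lax $\rho$-bialgebra in the sense of \autoref{D:lax-bialg}. The reduction axioms $(\mathsf{lam}\, x\c\tau.\, t)\app{}{}s\to t[x/s]$, $\mathsf{unfold}(\mathsf{fold}(t))\to t$, $t\oplus s\to t$ and $t\oplus s\to s$ all remain sound when $\to$ is replaced by $\To$ because $\to\seq\To$; the contextual rules (such as $t\to t'$ entailing $t\app{}{}s\to t'\app{}{}s$) lift to their weak analogues by reflexivity and transitivity of $\To$. The substitution component $\gamma_0$ enters the lax bialgebra square via the $\llangle-,-\rrangle$ factor and is handled by the naturality of simultaneous substitution, which categorically amounts to $\gamma_0$ being a morphism of presheaves into $\llangle\Lambda,\Lambda\rrangle$.

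Once this is in place, \autoref{cor:main} yields that $\logrel^\alpha$ is a congruence on $\Lambda$ for every ordinal $\alpha$, establishing the first half of the claim. For soundness, I would pick $O_\tau=\{(t,s)\mid t{\Downarrow}\Rightarrow s{\Downarrow}\}$ as in \autoref{ex:observations}\ref{ex:observations-2}; by \autoref{prop:existence-of-contextual-preorder} the corresponding contextual preorder $\lesssim_O$ exists and coincides with the standard termination-based contextual preorder of \fpc. Adequacy of $\logrel^\nu$ follows by the argument of \autoref{cor:sound}: if $\logrel^\nu_\tau(t,s)$ and $t$ terminates in $n$ steps at a term admitting a labelled transition, then the clauses for $\mathcal{E}$ and $\mathcal{V}$ at step $n+1$ produce $s\To s'$ with $s'$ likewise labelled, hence $s{\Downarrow}$. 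Applying \autoref{cor:soundness} yields $\logrel^\nu\le\lesssim_O$, completing the proof. The main obstacle is the lax bialgebra verification: conceptually routine but mildly delicate in the presence of binding, since one must carefully track how $\beta$-reduction via the substitution structure $\gamma_0$ interacts with the weak transition relation $\To$.
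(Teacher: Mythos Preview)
Your proposal is correct and follows essentially the same route as the paper: verify the abstract assumptions, check that the weak operational model forms a lax $\rho$-bialgebra by confirming the \fpc rules remain sound for $\To$, invoke \autoref{cor:main} for congruence, and then use $O$-adequacy plus \autoref{cor:soundness} for soundness. One small inaccuracy: for \fpc the behaviour bifunctor already has the form $B(X,Y)=\llangle X,Y\rrangle\times\Pow_\star(Y+D(X,Y))$, so the powerset sits \emph{inside} $B$ rather than wrapping it as $\Pow_\star\cdot B$; consequently the lifting of $\rho$ is verified directly on its two components $\rho^1,\rho^2$ rather than via the strength/$\delta$ construction of~\eqref{eq:rho0-to-rho}, but this does not affect your argument's substance.
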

Again, this amounts to observing that the operational rules of \fpc remain sound for weak transitions, which is easily verified. 
\vspace{-0.2cm}
\section{Conclusion and Future Work}
We have developed a language-independent theory of step-indexed
logical relations and contextual equivalence based on higher-order abstract
GSOS. We have shown that logical relations arise naturally via a
generic construction and that soundness for contextual equivalence is contingent
of a simple condition on the operational semantics. We expect that
our theory will lead to a higher level of automation of proofs via logical
relations and that its scaling nature contributes towards efficient reasoning on
real-world languages.

Let us identify several directions for future work that will
further enrich our theory.
Currently, we have investigated logical relations in higher-order abstract
GSOS for call-by-name languages, and define them in terms of their small-step
($\to$) and weak $(\To)$ semantics. In the future, we will look to explore
call-by-value logical relations, and make use of \emph{big-step} semantics
$(\Downarrow)$. For the latter, the challenge would be to find a suitable analogue of our lax-bialgebra condition.

Another direction is to leverage our abstract coalgebraic theory to effectful, probabilistic and
differential settings, and also to generalize to \emph{Kripke} logical
relations~\cite{DBLP:journals/iandc/OHearnR95,DBLP:conf/popl/HurDNV12,DBLP:journals/jfp/DreyerNB12,DBLP:conf/popl/HurD11}.
Another compelling prospect is to apply our theory to
\emph{cross-language logical
  relations}~\cite{DBLP:conf/popl/DevriesePP16,10.1145/1596550.1596567,DBLP:conf/popl/HurD11,
  DBLP:conf/esop/PercontiA14,10.1145/3434302}, specifically
to reason about (secure) compilers. Such
an effort would likely involve the development of notions of \emph{morphisms of
  higher-order GSOS laws}, analogously to the first-order
case~\cite{DBLP:journals/entcs/Watanabe02}, a notion that has been used as
criterion of secure compilation in the
past~\cite{DBLP:conf/cmcs/0001NDP20,DBLP:conf/aplas/AbateBT21}.

In the present paper, we have focused on logical relations that are
sound for contextual equivalence; investigating completeness at similar generality is a sensible (and likely challenging) next step.

Finally, ever since their conception, logical relations have been used
for
\emph{parametricity}~\cite{DBLP:conf/ifip/Reynolds83,DBLP:journals/corr/AnandM17,DBLP:journals/jfp/BernardyJP12,DBLP:conf/csl/KellerL12,DBLP:journals/pacmpl/AhmedJSW17}.
We will aim to leverage our theory to study and reason about
parametricity at the level of generality of higher-order abstract
GSOS.\stnote{Mechanization?}

\begin{acks}
The first author would like to thank Paul Blain Levy for helpful discussions on the 
reasoning about program equivalence.
\end{acks}

\clearpage

\bibliography{mainBiblio}

\appendix
\onecolumn

\takeout{
\section{Optional Content non-indexed logical relations}

\subsection*{From \autoref{sec:relations}}
\begin{definition}
	Let $B \colon \C^\opp \times \C \to \C$ be a functor equipped with a relation lifting $\overline B \colon  {\RelCat \C}^\opp \times {\RelCat \C} \to \RelCat \C$, $c \colon Y \to B(X,Y)$ be a coalgebra, and $S \pred{} X \times X$ be a relation on $X$. We say that a relation $Q \pred{} Y \times Y$ is a $S$-\emph{relative} ($\overline B$-)bisimulation (for $c$) if 
	\[
	Q \le \iimg{c}{\overline B(S,Q)} \quad \text{or equivalently} \quad \fimg{c}{Q} \le \overline B(S,Q).
	\]
\end{definition}
We then obtain the usual notion of ``logical relation'' as a relation that is preserved by the coalgebra steps in terms of self-relative bisimulations:

Following previous work~\cite{FoSSaCS24}, we introduce a well-behavedness condition
for relation liftings, needed for computing their fixpoints, and abstracting structural 
induction over types. Let us briefly recall some terminology.
A metric space $(X,\, d\colon X\times X\to \mathbb{R})$ is \emph{$1$-bounded} if $d(x,y)\leq 1$ for all $x,y$, an \emph{ultrametric space} if $d(x,y)\leq\max\{d(x,z),d(z,y)\}$ for all $x,y,z$, and \emph{complete} if every Cauchy sequence converges. A map $f\colon (X,d)\to (X',d')$ between metric spaces
is \emph{nonexpansive} if $d'(f(x),f(y))\leq d(x,y)$ for all $x,y$, and \emph{contractive}
if there exists $c\in [0,1)$, called a \emph{contraction factor}, such that $d'(f(x),f(y))\leq c\cdot d(x,y)$
for all $x,y$. 
By Banach's fixed point theorem, every contractive endomap $f\colon X\to X$ on a complete metric space has a unique fixed point.

\begin{defn}\label{ass:contr}
The category $\C$ is \emph{predicate-contractive} if
\begin{enumerate}
  \item every $\Pred[X]{\C}$ carries the structure of a complete $1$-bounded ultrametric space;
  \item for every $f\c X\to Y$ in $\C$, the map $\iimg{f}{\argument}\c\Pred[Y]{\C}\to\Pred[X]{\C}$
  is non-expansive;
  \item for any two co-well-ordered families $(P^i\pred{} X)_{i\in I}$ and $(Q^i\pred{} X)_{i\in I}$ of predicates,
  \begin{displaymath}\textstyle
    d\bigl(\bigand_{i\in I} P^i,\bigand_{i\in I} Q^i\bigr)\leq \sup_{i\in I} d(P^i,Q^i).
  \end{displaymath} 
  Here $(P^i\pred{} X)_{i\in I}$ is \emph{co-well-ordered} if each nonempty subfamily has a greatest element.
\end{enumerate}
\end{defn}
\begin{example}
  The category $\C=\Set^\Ty$ is predicate-contractive when equipped with the
  ultrametric on $\Pred[X]{\C}$ given by $d(P,Q)=2^{-n}$ for
  $P,Q\monoto X$, where $n=\inf\{\sharp\tau\mid P_\tau\neq Q_\tau\}$
  and $\sharp\tau$ is the size of~$\tau$, defined by $\sharp\booltype=1$
  and $\sharp(\arty{\tau_1}{\tau_2})=\sharp\tau_1+\sharp\tau_2$. By
  convention, $\inf\,\emptyset=\infty$ and $2^{-\infty}=0$. To see
  predicate-contractivity, first note that a function
  $\mathcal{F}\c\Pred[Y]{\C}\to\Pred[X]{\C}$ is non-expansive iff
\begin{flalign*}
\quad\inf\{\sharp\tau \mid (\mathcal{F} P)_\tau \neq (\mathcal{F} Q)_\tau\}
\geq \inf\{\sharp\tau \mid P_\tau \neq Q_\tau\} && (P, Q\monoto Y)
\end{flalign*}
and contractive (necessarily with factor at most $1/2$) iff that inequality holds strictly.

This immediately implies clause~(2) of~\Cref{ass:contr}: inverse images in $\Set^\Ty$ are computed 
pointwise, and $\iimg{f_\tau}{P_\tau} \neq\iimg{f_\tau}{Q_\tau}$
implies $P_\tau\neq Q_\tau$ for $f\colon X\to Y$ and $P,Q\monoto Y$. Similarly, since intersections are computed pointwise, clause~(3) amounts to
\begin{align*}
\inf\Bigl\{\sharp\tau\mid \bigcap_{i\in I}P^i_\tau\neq \bigcap_{i\in I}Q^i_\tau\Bigr\}
\geq&\;\inf\{\sharp\tau\mid \exists i\in I: P^i_\tau\neq Q_\tau^i\},
\end{align*}
which is clearly true, for if $\bigcap_{i\in I}P^i_\tau\neq \bigcap_{i\in I}Q^i_\tau$ then 
$P^i_\tau\neq Q^i_\tau$ for some $i\in I$.
\end{example}
Since relations are special kind of predicates, we apply predicate-contractivity 
to relations in the obvious way.
For the rest of the section, we fix a mixed variance functor $B \colon \C^\opp \times \C \to \C$,
and a corresponding lifting $\ol B$, which satisfies the following 
\begin{assumption}\label{ass:lics}
\begin{enumerate}
 \item $\ol B$ is weakly contractive;
 \item For any $R\pred{} X\times X$ and any $S,S'\pred{} Y\times Y$, $\ol{B}(R,S)\comp\ol{B}(\Delta,S')\leq \ol{B}(R,S\comp S')$.
\end{enumerate}
\end{assumption}
\begin{example}
Consider the type grammar~\eqref{eq:type-grammar}. Given a type $\tau\in\Ty$, 
let $FL(\tau)$ be the Fischer-Ladner closure of $\tau$, i.e.\ the smallest set
of (closed) types, which is closed under direct subformulas of $\tau_1\boxplus\tau_2$,
$\tau_1\boxtimes\tau_2$, $\arty{\tau_{1}}{\tau_2}$, and under the following unfolding 
operation 
\begin{align*}
\mu\alpha.\,\tau'\leadsto\tau[\mu\alpha.\,\tau'/\alpha].
\end{align*}
It is known that $FL(\tau)$ is finite. 
We proceed to make it into a coalgebra in $\Set^\Ty$, by defining the transitions:
\begin{itemize}
    \item $\tau_1\boxplus\tau_2\xto{\boxplus_1}\tau_1$, $\tau_1\boxplus\tau_2\xto{\boxplus_2}\tau_2$;
    \item $\tau_1\boxtimes\tau_2\xto{\boxtimes_1}\tau_1$, $\tau_1\boxtimes\tau_2\xto{\boxtimes_2}\tau_2$;
    \item $\arty{\tau_{1}}{\tau_2}\xto{\arty{}{}_1}\tau_1$, $\arty{\tau_{1}}{\tau_2}\xto{\arty{}{}_2}\tau_2$;
    \item $\tau'\xto{\ell}\tau'''$ if $\tau'\leadsto^\star\tau''$, and $\tau''\xto{\ell}\tau'''$,
    is one of the transitions, defined above.
\end{itemize} 
For every $\tau\in\Ty$, let $\sharp\tau$ be the size of the bisimilarity 
quotient of the subcoalgebra of $FL(\tau)$ generated by $\tau$. This makes $\C=\Set^\Ty$ 
into a predicate-contractive category equipped with the ultrametric on $\Pred[X]{\C}$ 
given by $d(P,Q)=2^{-n}$ for $P,Q\monoto X$, where $n=\inf\{\sharp\tau\mid P_\tau\neq Q_\tau\}$.

By definition, 
every $\mu\alpha.\,\tau$ is bisimilar to its unfoldings, and hence $\sharp\mu\alpha.\,\tau=\sharp\tau[\mu\alpha.\,\tau/\alpha]$.
It is also clear that $\sharp\tau\leq\sharp\tau'$ for every direct subtype $\tau'$
of $\tau$. 
\end{example}

\paragraph{Henceforth modality} We can define a henceforth modality just like in the predicate case: given a coalgebra $c \colon Y \to B(X,Y)$, where $B \colon \C^\opp \times \C \to \C$ has a relation lifting $\overline B$, and a relation $S \pred{} X \times X$, the $S$-relative henceforth modality sends $R \pred{}Y \times Y$ to $\invp^{\ol{B},c} (S, R)
\pred{} Y \times Y$, which is the supremum of all $S$-relative bisimulations contained in~$R$:
\begin{equation}\label{eq:henceRel}
	\invp^{\ol{B},c} (S, R) = \bigvee\{ Q \leq R \mid Q
	\text{ is an $S$-relative $\ol{B}$-bisimulation for $c$} \}.
\end{equation}
We will omit the superscripts $\ol B,c$ if they are irrelevant, or clear from the context.

Again by the Knaster-Tarski theorem we have that $\invp (S,R)$ is itself an $S$-relative bisimulation, because the correspondence $G\mapsto R\land\iimg{c}{\ol{B}(S,G)}$
is a monotone endomap on the complete lattice of subobjects of $Y \times Y$, hence the corresponding 
greatest post-fixpoint exists and equals~\eqref{eq:henceRel}. Symbolically:
\begin{align*}
\invp(S, R) = \nu G.\ R \land \iimg{c}{\overline{B}(S,G)}.
\end{align*}
We thus obtain again an analogous result to \Cref{prop:invarprop}.
\begin{proposition}
\label{prop:invarpropRel}
Given $S\pred{} X \times X$ and $R\pred{} Y \times Y$,
\begin{enumerate}
	\item \label{strongerthanpRel} $\invp (S, R) \leq R$,
	\item \label{squareisinvRel} $\invp (S, R)$ is an $S$-relative $\ol{B}$-bisimulation,
	\item $\invp (S, R)$ is antitone in $S$ and monotone in $R$.
\end{enumerate}
\end{proposition}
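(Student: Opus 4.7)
The plan is to exploit the fixed-point characterization $\invp(S,R) = \nu G.\,R \land \iimg{c}{\ol B(S,G)}$ stated just above the proposition. Fixing $S$ and $R$, denote the monotone endomap $\Phi_{S,R}\colon Q \mapsto R \land \iimg{c}{\ol B(S,Q)}$ on the complete lattice of subobjects of $Y\times Y$; by Knaster--Tarski it has a greatest fixed point, which coincides with $\invp(S,R)$ both as a fixed point and as the join in~\eqref{eq:henceRel}.

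For part~(1), I would simply invoke the fixed-point equation $\invp(S,R) = R \land \iimg{c}{\ol B(S,\invp(S,R))}$ and take the first conjunct to get $\invp(S,R) \leq R$. For part~(2), the same fixed-point equation immediately yields $\invp(S,R) \leq \iimg{c}{\ol B(S,\invp(S,R))}$, which is exactly the definition of being an $S$-relative $\ol B$-bisimulation for $c$. Both facts are in essence restatements of the definition combined with Knaster--Tarski, so no nontrivial computation is required.

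For part~(3), I would argue via the join formulation~\eqref{eq:henceRel}. Monotonicity in $R$: if $R \leq R'$, then the set of $S$-relative bisimulations contained in $R$ is a subset of those contained in $R'$, hence the join grows, giving $\invp(S,R) \leq \invp(S,R')$. Antitonicity in $S$: if $S \leq S'$, then by contravariance of $\ol B$ in its first argument we have $\ol B(S',Q) \leq \ol B(S,Q)$, so every $S'$-relative bisimulation $Q \leq R$ satisfies $Q \leq \iimg{c}{\ol B(S',Q)} \leq \iimg{c}{\ol B(S,Q)}$ and is thus also an $S$-relative bisimulation contained in $R$; the join therefore increases when $S$ shrinks, yielding $\invp(S',R) \leq \invp(S,R)$.

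There is no real obstacle here; the only subtlety worth flagging is ensuring that $\Phi_{S,R}$ is genuinely monotone in $G$, which follows because $\ol B(S,-)$ is a functor (hence preserves the order on subobjects) and because inverse image $\iimg{c}{-}$ and meet with $R$ are both monotone. Once this is noted, parts~(1) and~(2) are one-line consequences of the fixed-point equation, and part~(3) follows from general principles about parametrized greatest fixed points together with the mixed variance of $\ol B$.
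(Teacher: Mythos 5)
Your proposal is correct and follows essentially the same route as the paper, which states the proposition as an immediate consequence of the Knaster--Tarski characterization $\invp(S,R)=\nu G.\,R\land\iimg{c}{\ol B(S,G)}$ established just beforehand (noting that post-fixed points of $G\mapsto R\land\iimg{c}{\ol B(S,G)}$ are exactly the $S$-relative bisimulations contained in $R$). Your fleshing-out of parts (1)--(2) from the fixed-point equation and of part (3) from the join formulation together with the contravariance of $\ol B$ in its first argument is exactly the intended argument.
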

Because $\RelCat[X]{\C} = \Pred[X \times X]{\C}$, \Cref{ass:contr} implies its
analogous version for relations. 

In the unary case, typically, $\overline{B}(S,\top)=\top$ for all $S$, which entails
that $\logp\top=\top$. 
Indeed, since $\top\leq\top\land \iimg{c}{\overline{B}(\logp\top,\top)}$, $\top$ 
is a $\logp\top$-relative invariant, and therefore it is smaller than the greatest 
such, i.e.\ $\logp\top$. Thus, in the unary case $\logp P$ is typically only of 
interest if $P$ is non-trivial, such as the strong normalization predicate. 
Contrastingly, in the binary case, $\logp\top$ is typically not $\top$, and in fact 
an extremely important relation, which, analogously to the case of first-order
bisimilarity, can arguably be regarded as the finest notion of program equivalence.

\begin{lemma}\label{lem:RR-refl}
Let $R\pred{}\mS\times\mS$ be a $\ol\Sigma$-congruence.
Then 
\begin{displaymath}
  \Delta\leq\iimg{\gamma}{\ol B(R,R)}.
\end{displaymath}
\end{lemma}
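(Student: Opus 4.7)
My plan is to unpack the claim $\Delta \leq \iimg{\gamma}{\ol B(R,R)}$ as the existence of a $\RelCat{\C}$-morphism $h\colon \Delta_\mS \to \ol B(R,R)$ whose underlying $\C$-morphism is $\gamma$, and then to construct such an $h$ by primitive recursion on $\mS$, lifting the defining diagram of $\gamma$ from \autoref{fig:gamma} to the level of relations.

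As preparation, observe that $R$ is reflexive by \autoref{prop:cong-refl}, so there is a $\RelCat{\C}$-morphism $r\colon \Delta_\mS \to R$ with underlying morphism $\id_\mS$. The congruence structure furnishes $\xi\colon \ol\Sigma R \to R$ over $\iota$, and by \autoref{prop:free-monad-lift} (using $\ol\Sigma^\star = \ol\Sigmas$) this extends to an Eilenberg--Moore-style lift $\hat\xi\colon \ol\Sigmas R \to R$ over $\hat\iota$. With these ingredients and the lifting $\ol\rho_{R,R}$ guaranteed by Assumption~\ref{asm-cong}\ref{asm-cong-4}, I will specify $h$ via the recursive equation
\[
h \cdot \iota \,=\, \ol B(\id, \hat\xi \cdot \ol\Sigmas \nabla) \cdot \ol\rho_{R,R} \cdot \ol\Sigma \langle r, h\rangle,
\]
which is the direct relation-level analogue of the formula defining $\gamma$. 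By primitive recursion on the initial $\Sigma$-algebra $\mS$, this uniquely determines $h$.

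Finally, to conclude, I observe that the underlying $\C$-morphism of $h$ satisfies the same equation as $\gamma$: passing to underlying objects sends $r$ to $\id_\mS$, $\ol\rho_{R,R}$ to $\rho_{\mS,\mS}$, $\hat\xi$ to $\hat\iota$, and $\ol B, \ol\Sigmas, \nabla$ to their unbarred versions, yielding exactly the defining equation of $\gamma$ from \autoref{fig:gamma}. By uniqueness of primitive recursion in $\C$, the two morphisms coincide, giving the desired factorization of $\langle \gamma, \gamma\rangle$ through $\ol B(R,R)$. The main obstacle will be verifying that every piece of the recursive step genuinely lives in $\RelCat{\C}$: that $\nabla\colon R + R \to R$ is a $\RelCat{\C}$-morphism (leveraging coproduct stability of monomorphisms, \autoref{assumptions}(4)), that $\hat\xi$ is a well-defined $\ol\Sigmas$-algebra structure on $R$ in $\RelCat{\C}$, and that $\ol\rho_{R,R}$ composes coherently with these to yield a morphism into $\ol B(R,R)$. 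None of these steps are deep, but the bookkeeping of subobjects and projections requires care.
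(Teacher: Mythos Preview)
Your proposal is correct and takes essentially the same approach as the paper: both arguments work by observing that $\Delta_{\mS}$ is the initial $\ol\Sigma$-algebra in $\RelCat{\C}$ and then lifting the primitive-recursive definition of $\gamma$ (\autoref{fig:gamma}) along the relation lifting $\ol\rho$ to obtain the required morphism $\Delta \to \ol B(R,R)$ over $\gamma$. The paper simply packages this step as an instantiation of a cited general lemma \cite[Lemma~4.13]{gmstu23} in $\RelCat{\C}$, whereas you spell out the recursive equation and the bookkeeping explicitly; the content is the same.
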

\begin{proof}
By assumption, $R$ is a $\ol\Sigma$-algebra in $\RelCat \C$. Note that $\Delta$ 
is the initial $\ol\Sigma$-algebra in $\RelCat\C$, and therefore, by instantiating 
a previous general result~\cite[Lemma 4.13]{gmstu23} in~$\RelCat \C$, we obtain 
\begin{equation*}
\qquad\begin{tikzcd}[column sep=4em, row sep=normal]
	\Delta 
	  \dar[tail] 
	  \rar["R^\clubsuit"] 
	& 
	  \ol B(R,R) 
	  \dar[tail]
\\
	\mS \times \mS
    \rar["\gamma \times \gamma"] 
	& 
	  B(\mS,\mS) \times B(\mS,\mS)
\end{tikzcd}
\end{equation*}
where $R^\clubsuit$ is the unique such morphism $R\to\ol B(R,R)$ that the
diagram commutes. Commutativity of this diagram is equivalent to the goal.
\end{proof}

\begin{lemma}\label{lem:popl}
Let $R\pred{}\mS\times\mS$ be a transitive $\ol\Sigma$-congruence.
Then 
\begin{displaymath}
  \iimg{\gamma}{\ol B(\Delta,R)}\leq\iimg{\gamma}{\ol B(R,R)}.
\end{displaymath}
\end{lemma}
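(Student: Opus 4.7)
The plan is to reduce the stated inclusion to a relational-composition calculation that combines \Cref{lem:RR-refl}, \Cref{ass:lics}(2), and transitivity of $R$. By the adjunction $\fimg{\gamma}{-}\dashv\iimg{\gamma}{-}$, the goal $\iimg{\gamma}{\ol B(\Delta,R)}\leq\iimg{\gamma}{\ol B(R,R)}$ is equivalent to $\fimg{\gamma}{Q}\leq\ol B(R,R)$, where $Q=\iimg{\gamma}{\ol B(\Delta,R)}$; the counit of the adjunction already yields $\fimg{\gamma}{Q}\leq\ol B(\Delta,R)$ for free.

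First I would exploit the identity $\Delta\cdot Q = Q$ together with the standard calculus-of-relations fact that the direct image along a morphism of a composite is contained in the composite of the direct images:
\[
\fimg{\gamma}{Q} \;=\; \fimg{\gamma}{\Delta\cdot Q} \;\leq\; \fimg{\gamma}{\Delta}\cdot\fimg{\gamma}{Q}.
\]
Applying \Cref{lem:RR-refl} to the $\ol\Sigma$-congruence $R$ yields $\fimg{\gamma}{\Delta}\leq\ol B(R,R)$, and the counit gives $\fimg{\gamma}{Q}\leq\ol B(\Delta,R)$.

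The final step is then a short chain of inclusions using \Cref{ass:lics}(2) and transitivity of $R$:
\[
\fimg{\gamma}{Q} \;\leq\; \ol B(R,R)\cdot\ol B(\Delta,R) \;\leq\; \ol B(R,R\cdot R) \;\leq\; \ol B(R,R),
\]
the last step by monotonicity of $\ol B(R,-)$. This delivers the desired inclusion.

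The conceptual point worth flagging is that the naive strategy via $\Delta\leq R$ is blocked by the \emph{contravariance} of $\ol B$ in its first argument: $\Delta\leq R$ only gives $\ol B(R,R)\leq\ol B(\Delta,R)$, which is the wrong direction. Inserting $\Delta$ inside the composition $\Delta\cdot Q$ is the trick that enables \Cref{lem:RR-refl} to upgrade the $\Delta$ in $\ol B(\Delta,R)$ to $R$, at the cost of producing a $\ol B(R,R\cdot R)$ that transitivity of $R$ then absorbs. All other ingredients are straightforward applications of the adjunction and of the assumed properties of $\ol B$.
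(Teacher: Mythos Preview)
Your proof is correct and follows essentially the same route as the paper's. Both arguments use \Cref{lem:RR-refl} to produce a factor $\ol B(R,R)$, compose it with $\ol B(\Delta,R)$, apply \Cref{ass:lics}(2), and finish with transitivity of $R$. The only cosmetic difference is that the paper stays on the inverse-image side throughout, computing
\[
\iimg{\gamma}{\ol B(\Delta,R)}
\leq \iimg{\gamma}{\ol B(R,R)}\cdot\iimg{\gamma}{\ol B(\Delta,R)}
\leq \iimg{\gamma}{\ol B(R,R)\cdot\ol B(\Delta,R)}
\leq \iimg{\gamma}{\ol B(R,R\cdot R)}
\leq \iimg{\gamma}{\ol B(R,R)},
\]
whereas you transpose to direct images first and then use the lax compatibility $\fimg{\gamma}{\Delta\cdot Q}\leq\fimg{\gamma}{\Delta}\cdot\fimg{\gamma}{Q}$. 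The substance is identical.
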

\begin{proof}
We use the following instance of \Cref{ass:lics}~(2):
\begin{align*}
\ol{B}(R,R)\comp\ol{B}(\Delta,R)\leq\ol{B}(R,R\comp R).
\end{align*}
Now,
\begin{flalign*}
\quad\iimg{\gamma}{\ol B(\Delta,R)}
\leq\;&\iimg{\gamma}{\ol{B}(R,R)}\comp\iimg{\gamma}{\ol{B}(\Delta,R)}&\by{\Cref{lem:RR-refl}}\\
\leq\;&\iimg{\gamma}{\ol{B}(R,R)\comp\ol{B}(\Delta,R)}\\
\leq\;&\iimg{\gamma}{\ol{B}(R,R\comp R)}&\!\!\by{\Cref{ass:lics}~(2)}\\
\leq\;&\iimg{\gamma}{\ol{B}(R,R)}, &\by{transitivity}
\end{flalign*}
as desired.
\end{proof}

\subsection*{From \autoref{sec:step-indexed}}
\begin{definition}[Indexed Henceforth]\sgnote{Step-indexed?}
\label{def:stepindexed}
Let $R \pred{} X \times X$ be a relation on the state space of a coalgebra $c \c X \to
B(X,X)$ on a bifunctor $B\c \C^\opp\times \C\to \C$ with a contractive predicate 
lifting $\ol{B}$, define~$\invp^{\bar B,c,\alpha} R$ by transfinite induction as follows:
\begin{itemize}
  \item for a successor ordinal $\alpha+1$, $\invp^{\bar B,c,\alpha+1} R$ is the unique 
  solution of the equation
\begin{align*}
\invp^{\bar B,c,\alpha+1} R = R \land\invp^{\bar B,c,\alpha} R\land\iimg{c}{\overline{B}(\invp^{\bar B,c,\alpha+1} R,\invp^{\bar B,c,\alpha} R)};
\end{align*}
\ST{
  \begin{align*}
    \invp^{\bar B,c,\alpha+1} R = R \land\invp^{\bar B,c,\alpha} R\land\iimg{c}{\overline{B}(\invp^{\bar B,c,\alpha} R,\invp^{\bar B,c,\alpha} R)};
  \end{align*}
}
(the result is well-defined thanks to our assumption of weak contractivity of $\bar B$
and predicate-contractivity of the underlying category~$\C$);
  \item for a limit ordinal $\alpha$, $\invp^{\bar B,c,\alpha} R =$ $\bigand_{\beta<\alpha} \invp^{\bar B,c,\beta} R$
  (we regard $0$ as a limit ordinal, and thus $\invp^{\bar B,c,0} R = \top$ as a meet of an empty family).
\end{itemize}
By definition, the family $(\invp^{\bar B,c,\alpha} R)_{\alpha}$ is decreasing along the 
ordinal chain and the members of the family range over subobjects of ${X\times X}$. 
By our global assumptions, subobjects of $X\times X$ form a set. Therefore, there is a least 
ordinal, call it $\nu$, such that $\invp^{\bar B,c,\nu+1} R =$ $\invp^{{\bar B},c,\nu} R$.
\end{definition}
\begin{lemma}\label{lem:box-nu}
For any coalgebra $c\colon\mS\to B(\mS,\mS)$, and any relation $R \pred{} X \times X$,
$\invp^{\bar B,c,\nu}R\leq\invp(\invp^{\bar B,c,\nu} R,R)$.
\end{lemma}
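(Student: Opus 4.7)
The plan is to unfold the defining equation of $\invp^{\bar B,c,\nu+1}R$ at the stabilisation ordinal $\nu$ and read off that $\invp^{\bar B,c,\nu}R$ satisfies the two defining properties that identify it as a member of the family whose supremum is $\invp(\invp^{\bar B,c,\nu} R,R)$.

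First I would recall from Definition~\ref{def:stepindexed} that $\invp^{\bar B,c,\nu+1}R$ is characterised as the unique solution~$X$ of
\begin{equation*}
  X \,=\, R \land \invp^{\bar B,c,\nu}R \land \iimg{c}{\overline{B}(X,\invp^{\bar B,c,\nu}R)}.
\end{equation*}
By the choice of $\nu$ as the least ordinal with $\invp^{\bar B,c,\nu+1}R=\invp^{\bar B,c,\nu}R$, we may substitute the right-hand occurrence of $X$ by $\invp^{\bar B,c,\nu}R$, obtaining the fixed-point identity
\begin{equation*}
  \invp^{\bar B,c,\nu}R \,=\, R \land \invp^{\bar B,c,\nu}R \land \iimg{c}{\overline{B}(\invp^{\bar B,c,\nu}R,\invp^{\bar B,c,\nu}R)}.
\end{equation*}

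From this single equation, two facts follow immediately: (i) $\invp^{\bar B,c,\nu}R \leq R$, and (ii) $\invp^{\bar B,c,\nu}R \leq \iimg{c}{\overline{B}(\invp^{\bar B,c,\nu}R,\invp^{\bar B,c,\nu}R)}$. Observation~(ii), read with $S := Q := \invp^{\bar B,c,\nu}R$, is precisely the defining condition $Q \leq \iimg{c}{\overline{B}(S,Q)}$ for $Q$ to be an $S$-relative $\overline B$-bisimulation for $c$. Hence $\invp^{\bar B,c,\nu}R$ is an $\invp^{\bar B,c,\nu}R$-relative bisimulation which, by~(i), is contained in~$R$.

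The conclusion is then immediate: by the defining formula~\eqref{eq:henceRel}, the relation $\invp(\invp^{\bar B,c,\nu}R,R)$ is the join of \emph{all} $\invp^{\bar B,c,\nu}R$-relative bisimulations contained in~$R$, and $\invp^{\bar B,c,\nu}R$ is one of them, so it lies below this join, yielding $\invp^{\bar B,c,\nu}R \leq \invp(\invp^{\bar B,c,\nu}R,R)$. No obstacle is really expected here since everything reduces to reading off the fixed-point equation; the only subtlety worth flagging in the write-up is the well-definedness of the solution at stage $\nu+1$, which has already been guaranteed by the weak contractivity of $\overline B$ together with the predicate-contractivity of $\C$ invoked in Definition~\ref{def:stepindexed}.
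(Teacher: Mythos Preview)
Your proposal is correct and follows essentially the same approach as the paper's proof: both unfold the defining equation at the stabilisation ordinal $\nu$, read off that $\invp^{\nu}R \leq R$ and that $\invp^{\nu}R$ is a $\invp^{\nu}R$-relative bisimulation, and conclude via the definition of $\invp$ as the supremum of such relations.
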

\begin{proof}
By definition, $\invp^{\nu}R  = R\land\invp^{\nu}R \land\iimg{c}{\overline{B}(\invp^{\nu}R ,\invp^{\nu}R )}$,
hence $\invp^{\nu}R  \leq R$ and $\invp^{\nu}R  \leq \iimg{c}{\overline{B}(\invp^{\nu}R, \invp^{\nu}R)}$. 
This entails that $\invp^{\nu}R \leq\invp(\invp^{\nu}R ,R)$, by definition of $\invp$. 
\end{proof}

\begin{theorem}\label{thm:indbox}\sgnote{Explore the case of $\logp P$ with $P\neq\top$;
identify clauses, independent from the existence of a GSOS law.}
Given a coalgebra $\gamma \colon\mS\to B(\mS,\mS)$ induced by a higher-order GSOS law. Then
\begin{enumerate}
  \item\label{it:indbox1} every $\logp^\alpha\top$ is a $\ol\Sigma$-congruence; %
  \item\label{it:indbox2} every $\logp^\alpha\top$ is reflexive; %
  \item\label{it:indbox3} every $\logp^\alpha\top$ is transitive;
  \item\label{it:indbox5} $\invp^\nu\top=\invp(\Delta,\top)$;
  \item\label{it:indbox6} $\invp^\nu\top=\invp\top$ (when\,\,$\invp\top$ exists).
\end{enumerate}
\end{theorem}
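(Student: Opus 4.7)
My plan is to prove the three properties (congruence, reflexivity, transitivity of every $\logp^\alpha\top$) simultaneously by transfinite induction on $\alpha$, using the strengthened inductive hypothesis that $\logp^\alpha\top$ is a reflexive transitive $\ol\Sigma$-congruence. The base case $\alpha = 0$ holds trivially since $\logp^0\top = \top$, and limit steps preserve all three properties: congruence because $\ol\Sigma$ distributes with meets in the required direction, and reflexivity and transitivity by standard order-theoretic reasoning.

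The successor step $\alpha \to \alpha+1$ is the heart of the matter and generalizes the argument of \autoref{thm:main}. The defining equation $\logp^{\alpha+1}\top = \top \wedge \logp^\alpha\top \wedge \iimg{c}{\ol B(\logp^{\alpha+1}\top, \logp^\alpha\top)}$ is self-referential, with uniqueness guaranteed by contractivity of $\ol B$. For congruence, I compose the GSOS law lifting $\ol\rho$ with the (lax) bialgebra square, invoking up-closedness of $\ol B$ through \autoref{lem:RR-refl} (applicable since the inductive hypothesis gives reflexivity of $\logp^\alpha\top$, hence $\Delta\leq\iimg{\gamma}{\ol B(\logp^\alpha\top,\logp^\alpha\top)}$). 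Reflexivity at stage $\alpha+1$ then follows from congruence via \autoref{prop:cong-refl}, since $\mS$ is initial. Transitivity is obtained by combining \autoref{ass:lics}(2), which decomposes composition inside $\ol B$, with \autoref{lem:popl} to reconcile the self-referential structure with the transitive congruence hypothesis carried by the induction.

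For the identity $\logp^\nu\top = \logp(\Delta, \top)$, I show both directions. The inequality $\logp^\nu\top \leq \logp(\Delta, \top)$ is immediate: reflexivity of $\logp^\nu\top$ together with contravariance of $\ol B$ in its first argument yields $\ol B(\logp^\nu\top, \logp^\nu\top) \leq \ol B(\Delta, \logp^\nu\top)$, so $\logp^\nu\top$ is a $\Delta$-relative bisimulation contained in $\top$, hence bounded by the greatest such. For the reverse, I verify that $\logp(\Delta, \top)$ is itself a self-relative bisimulation: via \autoref{lem:popl}, once $\logp(\Delta, \top)$ is known to be a transitive congruence (established by an argument parallel to the induction above, but without step-indexing), its $\Delta$-relative bisimulation property automatically upgrades to self-relative, placing $\logp(\Delta, \top)$ below $\logp^\nu\top$ through the stabilization of the indexed construction. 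The final equality $\logp^\nu\top = \logp\top$ (when $\logp\top$ exists) follows by identifying $\logp\top$ as the greatest self-relative bisimulation, which by the same upgrade argument coincides with $\logp(\Delta, \top)$ and thus with $\logp^\nu\top$.

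The main obstacle I anticipate is the successor step of the transfinite induction, specifically verifying congruence of $\logp^{\alpha+1}\top$ in the presence of the self-referential fixed-point equation. Unlike the direct argument in \autoref{thm:main}, which handles a strictly decreasing sequence of approximants, here both sides of the defining equation feature $\logp^{\alpha+1}\top$, so establishing that the contractive fixed point inherits congruence structure may require showing that the Banach iteration producing it preserves congruence stepwise and then passing to the limit using continuity of $\ol\Sigma$ with respect to the ultrametric. A secondary subtlety is ensuring that the auxiliary transitive congruence property of $\logp(\Delta,\top)$ used in the reverse direction of the stabilization identity is truly independent of the indexed construction, rather than circular.
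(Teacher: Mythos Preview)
Your plan for clauses~\ref{it:indbox1}--\ref{it:indbox3} is sound, and packaging congruence, reflexivity, and transitivity into one simultaneous transfinite induction is a legitimate reorganization. The paper separates them: it proves~\ref{it:indbox1} by induction alone, derives~\ref{it:indbox2} immediately from~\ref{it:indbox1} via \autoref{prop:cong-refl} (no inductive hypothesis for reflexivity is needed, only initiality of $\mS$), and then proves~\ref{it:indbox3} by a second transfinite induction that invokes reflexivity already established. Either packaging works. One correction: up-closedness plays no role in this theorem, since $\gamma$ is the strict operational model and no laxness enters; the reference to \autoref{lem:RR-refl} is also unnecessary for the congruence step. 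Your anticipated ``main obstacle'' --- the self-reference of $\logp^{\alpha+1}\top$ in the contravariant slot of $\ol B$ --- is handled in the paper more simply than you fear: one just carries $\logp^{\alpha+1}\top$ through the chain of inequalities via the GSOS law lifting, using only that $\logp^{\alpha+1}\top\leq\logp^\alpha\top$ at the end. No Banach-iteration-preserves-congruence argument is required.

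The substantive gap is in the reverse direction of~\ref{it:indbox5}. Your route is: show $\logp(\Delta,\top)$ is a transitive congruence by an independent argument, upgrade it to a self-relative bisimulation via \autoref{lem:popl}, then conclude $\logp(\Delta,\top)\leq\logp^\nu\top$. The last step does not go through: knowing that $Q$ is a self-relative bisimulation, i.e.\ $Q\leq\iimg{\gamma}{\ol B(Q,Q)}$, does \emph{not} yield $Q\leq\logp^{\alpha+1}\top$ inductively, because the defining equation has $\logp^{\alpha+1}\top$ in the contravariant position of $\ol B$, so you would need $\logp^{\alpha+1}\top\leq Q$, which is the other inequality. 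Moreover, the independent transitive-congruence argument for $\logp(\Delta,\top)$ that you flag as a ``secondary subtlety'' is indeed problematic without step-indexing. The paper avoids both issues: it proves $\logp(\Delta,\top)\leq\logp^\alpha\top$ directly by transfinite induction, starting from the $\Delta$-relative (not self-relative) characterization $\logp(\Delta,\top)=\iimg{\gamma}{\ol B(\Delta,\logp(\Delta,\top))}$, then applying \autoref{lem:popl} with $R=\logp^\alpha\top$ (whose transitive-congruence status is already secured by clauses~\ref{it:indbox1} and~\ref{it:indbox3}) to pass from $\ol B(\Delta,-)$ to $\ol B(\logp^\alpha\top,-)$, and finally using $\logp^{\alpha+1}\top\leq\logp^\alpha\top$ with contravariance. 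This is clean and non-circular; your route should be replaced by it.
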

\begin{proof}
Clause~\ref{it:indbox1} is shown by transfinite induction on $\alpha$. For a limit ordinal 
$\alpha$ we reduce to the induction hypothesis as follows:
\begin{align*}
\fimg{\iota}{\ol\Sigma(\invp^\alpha\top)} 
=&\;    \fimg{\iota}{\ol\Sigma(\bigand_{\beta<\alpha}\invp^\beta \top)}\\
\leq&\; \bigand_{\beta<\alpha}\fimg{\iota}{\ol\Sigma(\invp^\beta \top)}\\ 
\leq&\; \bigand_{\beta<\alpha}\invp^\beta \top\\
   =&\; \invp^\alpha\top.
\end{align*}
Let us proceed with the successor ordinal 
case. That is, suppose that $\invp^\alpha\top$ is a congruence,
and show that $\invp^{\alpha+1} \top$ is so, equivalently:
\begin{align*}
&\fimg{\iota}{\ol\Sigma(\invp^{\alpha+1}\top)} \leq \logp^{\alpha+1}\top= \invp^{\alpha}\top\land\iimg{\gamma}{\overline{B}(\invp^{\alpha+1}\top,\invp^{\alpha}\top)},
\end{align*}
\ST{
  $\fimg{\iota}{\ol\Sigma(\invp^{\alpha+1}\top)} \leq \logp^{\alpha+1}\top=
  \invp^{\alpha}\top\land\iimg{\gamma}{\overline{B}(\invp^{\alpha}\top,\invp^{\alpha}\top)},$}

which reduces to two subgoals:
\begin{align}
\fimg{\iota}{\ol\Sigma(\invp^{\alpha+1}\top)}\leq&\; \invp^{\alpha}\top\label{eq:abox1}\\
\fimg{\iota}{\ol\Sigma(\invp^{\alpha+1}\top)}\leq&\; \iimg{\gamma}{\overline{B}(\invp^{\alpha+1}\top,\invp^{\alpha}\top)}\label{eq:abox2}
\end{align}

\ST{$\fimg{\iota}{\ol\Sigma(\invp^{\alpha+1}\top)}\leq\; \iimg{\gamma}{\overline{B}(\invp^{\alpha}\top,\invp^{\alpha}\top)}$}

By definition, and by induction hypothesis, 
\begin{align*}
\fimg{\iota}{\ol\Sigma(\invp^{\alpha+1}\top)}\leq \fimg{\iota}{\ol\Sigma(\invp^{\alpha}\top)}\leq\invp^{\alpha}\top,
\end{align*}
which yields~\eqref{eq:abox1}. To prove~\eqref{eq:abox2}, equivalently replace it 
with 
\begin{align*}
\fimg{(\gamma\comp\iota)}{\ol\Sigma(\invp^{\alpha+1}\top)}\leq {\overline{B}(\invp^{\alpha+1}\top,\invp^{\alpha}\top)},
\end{align*}

\ST{$\fimg{(\gamma\comp\iota)}{\ol\Sigma(\invp^{\alpha+1}\top)}\leq {\overline{B}(\invp^{\alpha}\top,\invp^{\alpha}\top)},$}

which is then shown as follows\sgnote{Replace by a diagram (?)}:
\begin{flalign*}
\qquad\fimg{(\gamma\comp\,&\iota)}{\ol\Sigma(\invp^{\alpha+1}\top)} \\
 &\; \leq\fimg{(B(\mS,\hat\iota\comp\Sigmas\nabla)\comp\rho\comp\Sigma\brks{\id,\gamma})}{\ol\Sigma(\invp^{\alpha+1}\top)}\\
&\tag*{\by{Lemma~?}}\\
 &\; \leq\fimg{(B(\mS,\hat\iota\comp\Sigmas\nabla)\comp\rho)}{\fimg{(\Sigma\brks{\id,\gamma})}{\ol\Sigma(\invp^{\alpha+1}\top)}} & \tag*{\by{Lemma~?}}\\
&\;\leq\fimg{(B(\mS,\hat\iota\comp\Sigmas\nabla)\comp\rho)}{\\&\qquad\ol\Sigma{(\invp^{\alpha+1}\top\times\fimg{\gamma}{\invp^{\alpha}\top\land\iimg{\gamma}{\overline{B}(\invp^{\alpha+1}\top,\invp^{\alpha}\top)}})}}\\
&\;\leq\fimg{(B(\mS,\hat\iota\comp\Sigmas\nabla)\comp\rho)}{\\&\textcolor{red}{\qquad\ol\Sigma{(\invp^{\alpha+1}\top\times\fimg{\gamma}{\invp^{\alpha}\top\land\iimg{\gamma}{\overline{B}(\invp^{\alpha}\top,\invp^{\alpha}\top)}})}}}\\
&\;\leq\fimg{(B(\mS,\hat\iota\comp\Sigmas\nabla)\comp\rho)}{\ol\Sigma{(\invp^{\alpha+1}\top,\times\overline{B}(\invp^{\alpha+1}\top,\invp^{\alpha}\top))}}\\
&\textcolor{red}{\;\leq\fimg{(B(\mS,\hat\iota\comp\Sigmas\nabla)\comp\rho)}{\ol\Sigma{(\invp^{\alpha+1}\top\times\overline{B}(\invp^{\alpha}\top,\invp^{\alpha}\top))}}}\\   
&\tag*{\by{Lemma~?}}\\
 &\;\leq\fimg{B(\mS,\hat\iota\comp\Sigmas\nabla)}{\fimg{\rho}{\ol\Sigma{(\invp^{\alpha+1}\top\times\overline{B}(\invp^{\alpha+1}\top,\invp^{\alpha}\top))}}}& \\
&\tag*{\by{\eqref{eq:gsoslifting}}}\\
&\textcolor{red}{\;\leq\fimg{B(\mS,\hat\iota\comp\Sigmas\nabla)}{\fimg{\rho}{\ol\Sigma{(\invp^{\alpha+1}\top\times\overline{B}(\invp^{\alpha}\top,\invp^{\alpha}\top))}}}}&\\
&\textcolor{red}{\;\leq\fimg{B(\mS,\hat\iota\comp\Sigmas\nabla)}{\fimg{\rho}{\ol\Sigma{(\invp^{\alpha}\top\times\overline{B}(\invp^{\alpha}\top,\invp^{\alpha}\top))}}}}&\\    
 &\;\leq \fimg{B(\mS,\hat\iota\comp\Sigmas\nabla)}{\ol{B}(\logp^{\alpha+1}\top, \ol{\Sigma}^\star (\logp^{\alpha+1}\top+\invp^{\alpha}\top))}\\
&\tag*{\by{Lemma~?}}\\
&\textcolor{red}{\;\leq \fimg{B(\mS,\hat\iota\comp\Sigmas\nabla)}{\ol{B}(\logp^{\alpha}\top, \ol{\Sigma}^\star (\logp^{\alpha}\top+\invp^{\alpha}\top))}}\\  
 &\;\leq \fimg{B(\mS,\hat\iota)}{\fimg{B(\mS,\Sigmas\nabla)}{\ol{B}(\logp^{\alpha+1}\top, \ol{\Sigma}^\star (\logp^{\alpha+1}\top+\invp^{\alpha}\top))}}\\
&\textcolor{red}{\;\leq \fimg{B(\mS,\hat\iota)}{\fimg{B(\mS,\Sigmas\nabla)}{\ol{B}(\logp^{\alpha}\top, \ol{\Sigma}^\star (\logp^{\alpha}\top+\invp^{\alpha}\top))}}}\\
&\;\leq \fimg{B(\mS,\hat\iota)}{\ol{B}(\logp^{\alpha+1}\top, \ol{\Sigma}^\star (\invp^{\alpha}\top))}&\\
&\textcolor{red}{\;\leq \fimg{B(\mS,\hat\iota)}{\ol{B}(\logp^{\alpha}\top, \ol{\Sigma}^\star (\invp^{\alpha}\top))}}&\\
&\;\leq \ol{B}(\logp^{\alpha+1}\top, \fimg{\hat\iota}{\ol{\Sigma}^\star (\invp^{\alpha}\top)})&\\
&\tag*{\by{induction}}\\
&\textcolor{red}{\;\leq \ol{B}(\logp^{\alpha}\top, \fimg{\hat\iota}{\ol{\Sigma}^\star (\invp^{\alpha}\top)})}&\\  
&\;\leq {\overline{B}(\invp^{\alpha+1}\top,\invp^{\alpha}\top)},& \\
&\textcolor{red}{\;\leq {\overline{B}(\invp^{\alpha}\top,\invp^{\alpha}\top)}},&
\end{flalign*}
and we are done with the proof of~\ref{it:indbox1}. Clause~\ref{it:indbox2} now follows by noting that $\Delta=\fimg{\hat\iota}{\ol\Sigma^\star\emptyset}$.

Let us show~\ref{it:indbox3} by transfinite induction over $\alpha$. The limit ordinal 
case is clear. Let us handle the successor ordinal case. Suppose that $\logp^{\alpha}\top$
is transitive. Then
\begin{flalign*}
 \logp^{\alpha}&{}^{\,+1}\top\comp \logp^{\alpha+1}\top \\
  &\;= (\logp^{\alpha}\top\land\iimg{\gamma}{\ol{B}(\invp^{\alpha+1}\top,\invp^{\alpha}\top)})\\
  &\qquad\comp (\logp^{\alpha}\top\land\iimg{\gamma}{\ol{B}(\invp^{\alpha+1}\top,\invp^{\alpha}\top)} ) &\\
  &\;\leq (\logp^{\alpha}\top\comp\logp^{\alpha}\top)\land \iimg{\gamma}{\ol{B}(\invp^{\alpha+1}\top,\invp^{\alpha}\top)}\comp \iimg{\gamma}{\ol{B}(\invp^{\alpha+1}\top,\invp^{\alpha}\top)}\\
  &\tag*{\by{induction}}\\
  &\;\leq \logp^{\alpha}\top\land \iimg{\gamma}{\ol{B}(\invp^{\alpha+1}\top,\invp^{\alpha}\top)}\comp \iimg{\gamma}{\ol{B}(\invp^{\alpha+1}\top,\invp^{\alpha}\top)}&\\
  &\;\tag*{\by{reflexivity}}\\
  &\;\leq \logp^{\alpha}\top\land \iimg{\gamma}{\ol{B}(\invp^{\alpha+1}\top,\invp^{\alpha}\top)}\comp \iimg{\gamma}{\ol{B}(\Delta,\invp^{\alpha}\top)}& \\
  &\;\leq \logp^{\alpha}\top\land \iimg{\gamma}{\ol{B}(\invp^{\alpha+1}\top,\invp^{\alpha}\top)\comp\ol{B}(\Delta,\invp^{\alpha}\top)}&\\
  &\tag*{\by{\Cref{ass:lics}~(2)}}\\
  &\;\leq \logp^{\alpha}\top\land \iimg{\gamma}{\ol{B}(\invp^{\alpha+1}\top,\invp^{\alpha}\top\comp\invp^{\alpha}\top)}&\\
  &\;\leq \logp^{\alpha}\top\land \iimg{\gamma}{\ol{B}(\invp^{\alpha+1}\top,\invp^{\alpha}\top)}&\\
  &\tag*{\by{induction}}\\*
  &\;= \logp^{\alpha+1}\top,
\end{flalign*}
and we are done.

We show~\ref{it:indbox5} by mutual inequality. The inequality $\invp^\alpha\top\leq\invp(\Delta,\top)$ 
follows from~\ref{it:indbox2} and~\Cref{lem:box-nu}. Let us show $\invp(\Delta,\top)\leq\invp^\alpha\top$ by transfinite induction 
over $\alpha$. The limit ordinal case is clear. To show the successor ordinal
case, suppose that $\invp(\Delta,\top)\leq\invp^\alpha\top$. Then
\begin{flalign*}
&& \invp(\Delta,\top) &\;= \iimg{\gamma}{\ol{B}(\Delta,\invp(\Delta,\top))} &\\*
&&  &\;\leq \logp^{\alpha}\top\land\iimg{\gamma}{\ol{B}(\Delta,\invp^{\alpha}\top)}&\by{induction}\\
&&  &\;\leq \logp^{\alpha}\top\land\iimg{\gamma}{\ol{B}(\invp^{\alpha}\top,\invp^{\alpha}\top)} &\by{\Cref{lem:popl}}\\
&&  &\;\leq \logp^{\alpha}\top\land\iimg{\gamma}{\ol{B}(\invp^{\alpha+1}\top,\invp^{\alpha}\top)}\\
&&  &\;=    \invp^{\alpha+1}\top
\end{flalign*}
where the call of \Cref{lem:popl} is justified by~\ref{it:indbox1} and~\ref{it:indbox3}.

Finally, let us show~\ref{it:indbox6}. By~\ref{it:indbox5}, to show~\ref{it:indbox6}, is to show $\logp\top = \logp(\Delta,\top)$. 
Since $\logp\top$ is the unique solution of the equation $\logp\top=\logp(\logp\top,\top)$,
it suffices to show that $\logp(\Delta,\top)=\logp(\logp(\Delta,\top),\top)$. 
By combining clauses~\ref{it:indbox2} and~\ref{it:indbox5}, we obtain $\Delta\leq\logp(\Delta,\top)$, which entails $\logp(\logp(\Delta,\top),\top)\leq\logp(\Delta,\top)$.
We are thus left to prove the converse inequality: $\logp(\Delta,\top)\leq\logp(\logp(\Delta,\top),\top)$.
To that end, it suffices to show that $\logp(\Delta,\top)$ is a $\logp(\Delta,\top)$-relative 
invariant, i.e.\ that 
\begin{displaymath}
  \logp(\Delta,\top)\leq\iimg{\gamma}{\ol B(\logp(\Delta,\top),\logp(\Delta,\top))}.
\end{displaymath}
By clauses~\ref{it:indbox1},~\ref{it:indbox3} and~\ref{it:indbox5}, $\logp(\Delta,\top)$
is a transitive congruence. Therefore, using \Cref{lem:popl},
\begin{displaymath}
  \logp(\Delta,\top) = \iimg{\gamma}{\ol B(\Delta,\logp(\Delta,\top))}\leq\iimg{\gamma}{\ol B(\logp(\Delta,\top),\logp(\Delta,\top))},
\end{displaymath}
and we are done.
\end{proof}
}

\section*{Appendix: Omitted proofs and details}
\label{app:proofs}
This appendix provides all omitted proofs and additional details on our examples, ordered by section.

\section{Details for Section 3}

\subsection*{Full Definition of the Higher-Order GSOS Law \texorpdfstring{$\boldsymbol{\rho}$}{$\rho$} for \texorpdfstring{$\boldsymbol{\mu}$}{$\mu$}CTL (\autoref{sec:mu-ctl})}

\begin{equation*}
\begin{aligned}
  &\rho_{X,Y} \c&\hspace{0em} \Sigma(X \times B(X,Y)) & \to B(X, \Sigma^\star (X+Y)) \\*
  &\rho_{X,Y}\makebox[0pt][l]{$(tr) = \texttt{case}~tr~\texttt{of}$} \\
  &\quad\makebox[0em][l]{$S_{\tau_{1},\tau_{2},\tau_{3}}$} 				&& \mapsto~ \lambda e.\,S'_{\tau_{1},\tau_{2},\tau_{3}}(e)    \\
	&\quad\makebox[0em][l]{$S'_{\tau_{1},\tau_{2},\tau_{3}}(t,f)$} 	&& \mapsto~ \lambda e.\,S''_{\tau_{1},\tau_{2},\tau_{3}}(t,e) \\
	&\quad\makebox[0em][l]{$S''_{\tau_{1},\tau_{2},\tau_{3}}((t,f),(s,g))$}  
	 																																&& \mapsto~ \lambda e.\,\mathsf{app}_{\tau_{2},\tau_{3}}(\mathsf{app}_{\tau_1,\arty{\tau_{2}}{\tau_{3}}}(t, e),\mathsf{app}_{\tau_1,\tau_2}(s, e)) \\  
  &\quad\makebox[0em][l]{$K_{\tau_{1},\tau_{2}}$} 								&& \mapsto~ \lambda e.\,K'_{\tau_{1},\tau_{2}}(e) \\
  &\quad\makebox[0em][l]{$K'_{\tau_{1},\tau_{2}}(t,f)$} 					&& \mapsto~ \lambda e.\,t \\
  &\quad\makebox[0em][l]{$I_{\tau}$} 															&& \mapsto~ \lambda e.\,e \\
  &\quad\makebox[0em][l]{$\inl_{\tau_1,\tau_2}(t,f)$} 						&& \mapsto~t \in X_{\tau_{1}}\\
  &\quad\makebox[0em][l]{$\inr_{\tau_1,\tau_2}(t,f)$} 						&& \mapsto~t \in X_{\tau_{2}}\\
  &\quad\makebox[5em][l]{$\mathsf{case}_{\tau_1,\tau_2,\tau_3}((t,f),(s,g),(r,h))$} 		
  													             													&& \mapsto~ \begin{cases}
																																							 \makebox[10em][l]{$\mathsf{case}_{\tau_1,\tau_2,\tau_3}(f,s,r)$} \text{\quad if $f \in Y_{\tau_1\boxplus\tau_2}$} \\
																																							 \makebox[10em][l]{$\mathsf{app}_{\tau_{1},\tau_{3}}(s,t')$} \text{\quad if $f=\inl_{\tau_{1},\tau_{2}}(t')$} \\
																																							 \makebox[10em][l]{$\mathsf{app}_{\tau_{2},\tau_{3}}(r,t')$} \text{\quad if $f=\inr_{\tau_{1},\tau_{2}}(t')$}
																																						  \end{cases} \\
  &\quad\makebox[0em][l]{$\fst_{\tau_1,\tau_2}(t,f)$} 						&& \mapsto~ \begin{cases}
																																							 \makebox[10em][l]{$\fst_{\tau_1,\tau_2}(f)$} \text{\quad if $f\in Y_{\tau_1\boxtimes\tau_2}$} \\
																																							 \makebox[10em][l]{$\outl(f)$} \text{\quad if $f\in Y_{\tau_1}\times Y_{\tau_2}$}
																																						  \end{cases} \\
  &\quad\makebox[0em][l]{$\snd_{\tau_1,\tau_2}(t,f)$} 						&& \mapsto~ \begin{cases}
																																							 \makebox[10em][l]{$\snd_{\tau_1,\tau_2}(f)$} \text{\quad if $f\in Y_{\tau_1\boxtimes\tau_2}$} \\
																																							 \makebox[10em][l]{$\outr(f)$} \text{\quad if $f\in Y_{\tau_1}\times Y_{\tau_2}$}
																																						  \end{cases} \\
	&\quad\makebox[0em][l]{$\mathsf{pair}_{\tau_1,\tau_2}((t,f),(s,g))$} 
	                                                    						&& \mapsto~ (t,s)\\
  &\quad\makebox[0em][l]{$\mathsf{app}_{\tau_1,\tau_2}((t,f),(s,g))$} 										&& \mapsto~ \begin{cases}
																																							 \makebox[10em][l]{$f(s)$} \text{\quad if $f \in Y_{\tau_{2}}^{X_{\tau_{1}}}$} \\
																																							 \makebox[10em][l]{$\mathsf{app}_{\tau_1,\tau_2}(f, s)$} \text{\quad if $f \in Y_{\arty{\tau_{1}}{\tau_{2}}}$}
																																						  \end{cases} \\
  &\quad\makebox[0em][l]{$\mathsf{fold}_\tau(t,f)$} 							&& \mapsto~ t \in Y_{\tau[\mu\alpha.\tau/\alpha]} \\
  &\quad\makebox[0em][l]{$\mathsf{unfold}_\tau(t,f)$} 					  && \mapsto~ \begin{cases}
      																																				 \makebox[10em][l]{$\mathsf{unfold}_\tau(f)$} \text{\quad if $f \in Y_{\mu \alpha.\,\tau}$} \\
      																																				 \makebox[10em][l]{$f$} \text{\quad if $f \in Y_{\tau[\mu\alpha.\,\tau/\alpha]}$}
    																																				 \end{cases} 
\end{aligned}
\end{equation*}

\subsection*{Proof of \Cref{th:l-is-cong}}
  We show by induction on $n$ that $\logrel^{n}$ is a congruence
  for all $n \in \mathds{N}$. For $n = 0$, $\logrel^{0} = \top$
  and thus the base case is immediate. For the induction step, we
  assume that $\logrel^{n}$ is a congruence.  By the definition of
  $\logrel^{n+1}$, for each operation $\mathsf{f} \in \Sigma$ of
  arity $\tau_{1},\tau_{2}\,\dots,\tau_{n}$ and pairs
  $t_{i},s_{i} \c \tau_{i}$ such that
  $\logrel^{n+1}_{\tau_{1}}(t_{1},s_{1}) \land \dots \land
  \logrel^{n+1}_{\tau_{n}}(t_{n},s_{n})$, we have to show the
  following:
  \begin{enumerate}[label=(\roman*)]
  \item \label{cong:enum1}
    $\logrel^{n}_{\tau_{\mathsf{f}}}(\mathsf{f}(t_{1},\dots,t_{n}),
    \mathsf{f}(s_{1},\dots,s_{n}))$.
  \item \label{cong:enum2}
    If $\mathsf{f}(t_{1},\dots,t_{n}) \to t$, then
    $\exists s.\,\mathsf{f}(s_{1},\dots,s_{n}) \To s$ with
    $\logrel^{n}_{\tau_{\mathsf{f}}}(t,s)$.
  \item \label{cong:enum3} We have four subcases based on the type of
    $\tau_{\mathsf{f}}$.
    \begin{enumerate}[label=(\alph*)]
    \item \emph{Subcase $\tau_{\mathsf{f}} = \tau_{1} \boxplus \tau_{2}$.} We
      have that if
      $\mathsf{f}(t_{1},\dots,t_{n}) \xto{\boxplus_{1}}
      t$ then $\exists s.\,\mathsf{f}(s_{1},\dots,s_{n})
      \xTo{\boxplus_{1}} s$ and $\logrel^{n}_{\tau_{1}}(t,s)$. If
      $\mathsf{f}(t_{1},\dots,t_{n}) \xto{\boxplus_{2}}
      t$ then $\exists s.\,\mathsf{f}(s_{1},\dots,s_{n})
      \xTo{\boxplus_{2}} s$ and $\logrel^{n}_{\tau_{2}}(t,s)$.
    \item \emph{Subcase $\tau_{\mathsf{f}} = \tau_{1} \boxtimes \tau_{2}$.} If
      $\mathsf{f}(t_{1},\dots,t_{n}) \xto{\boxtimes_{1}}
      t_{1}$ and $\mathsf{f}(t_{1},\dots,t_{n}) \xto{\boxtimes_{2}}
      t_{2}$ then $\exists s_{1},s_{2}.\,\mathsf{f}(s_{1},\dots,s_{n})
      \xTo{\boxtimes_{1}}
      s_{1}$ and $\mathsf{f}(s_{1},\dots,s_{n}) \xTo{\boxtimes_{2}}
      s_{2}$; moreover, $\logrel^{n}_{\tau_{1}}(t_{1},s_{1})$ and
      $\logrel^{n}_{\tau_{2}}(t_{2},s_{2})$.
    \item \emph{Subcase $\tau_{\mathsf{f}} = \mu \alpha.\tau$.} If
      $\mathsf{f}(t_{1},\dots,t_{n}) \xto{\mu} t$ then
      $\exists s.\,\mathsf{f}(s_{1},\dots,s_{n}) \xTo{\mu} s$ and
      $\logrel^{n}_{\tau[\mu\alpha.\tau/\alpha]}(t,s)$.
    \item \emph{Subcase $\tau_{\mathsf{f}} = \arty{\tau}{\tau'}$}.
      For all terms $e_{1},e_{2}
      \c \tau$ with $\logrel^{n}_{\tau}(e_{1},e_{2})$,
      if $\mathsf{f}(t_{1},\dots,t_{n}) \xto{e_{1}} t$, then
      $\exists s.\,\mathsf{f}(s_{1},\dots,s_{n}) \xTo{e_{2}} s$ and
      $\logrel^{n}_{\tau'}(t,s)$.
    \end{enumerate}
  \end{enumerate}
  We begin with \ref{cong:enum1}. Let $\mathsf{f} \in \Sigma$ be an operation.
  By the definition of $\logrel^{n+1}$, $\logrel^{n+1} \leq
  \logrel^{n}$ and thus
  \[
    \logrel^{n+1}_{\tau_{1}}(t_{1},s_{1}) \land \dots \implies
    \logrel^{n}_{\tau_{1}}(t_{1},s_{1}) \land \dots
  \]
  By the induction hypothesis we may conclude
  \[
    \logrel^{n}_{\tau_{\mathsf{f}}}(\mathsf{f}(t_{1},\dots,t_{n}),
    \mathsf{f}(s_{1},\dots,s_{n})),
  \]
  finishing the case of \ref{cong:enum1}. For \ref{cong:enum2} and
  \ref{cong:enum3} we proceed by case distinction on the operation
  $\mathsf{f} \in \Sigma$.
  \begin{enumerate}
  \item \emph{Case $\inl_{\tau_{1},\tau_{2}}$.} We assume
    $\logrel^{n+1}_{\tau_{1}}(t,s)$.
    \begin{enumerate}[label=(\roman*)]
      \setcounter{enumii}{1}
    \item This clause is void, as $\inl$ does not reduce.
    \item We have $\inl(t) \xto{\boxplus_{1}} t$, $\inl(s) \xto{\boxplus_{1}} s$
      and $\logrel^{n+1}_{\tau_{1}}(t,s)$, thus
      $\logrel^{n}_{\tau_{1}}(t,s)$.
    \end{enumerate}
  \item \emph{Case $\inr$.} Similar to $\inl$.
  \item \emph{Case of $\mathsf{case}_{\tau_{1},\tau_{2},\tau_{3}}$ expression.}
    We assume $\logrel^{n+1}_{\tau_{1} \boxplus \tau_{2}}(t_{1},s_{1})$,
    $\logrel^{n+1}_{\arty{\tau_{1}}{\tau_{3}}}(t_{2},s_{2})$ and
    $\logrel^{n+1}_{\arty{\tau_{2}}{\tau_{3}}}(t_{3},s_{3})$.
    \begin{enumerate}[label=(\roman*)]
      \setcounter{enumii}{1}
    \item First, $t_{1} \to t_{1}'$ and $\mathsf{case}(t_{1},t_{2},t_{3}) \to
      \mathsf{case}(t_{1}',t_{2},t_{3})$. Thus, by
      $\logrel^{n+1}_{\tau_{1} \boxplus \tau_{2}}(t_{1},s_{1})$,
      $\exists s_{1}'.\,s_{1} \To s_{1}'$ and
      $\logrel^{n}_{\arty{\tau_{1}}{\tau_{2}}}(t_{1}',s_{1}')$.
      Thus, by the semantics we can conclude that
      $\mathsf{case}(s_{1},s_{2},s_{3}) \To \mathsf{case}(s_{1}',s_{2},s_{3})$.
      It now suffices to show that
      $\logrel^{n}_{\tau_{3}}(\mathsf{case}(t_{1}',t_{2},t_{3}),
      \mathsf{case}(s_{1}',s_{2},s_{3}))$, which
      follows from the inductive hypothesis,
      namely that $\logrel^{n}$ is a $\Sigma$-congruence.

      Second, $t_{1} \xto{\boxplus_{1}} t_{1}'$ leading to
      $\mathsf{case}(t_{1},t_{2},t_{3}) \to t_{2}\app{}{}t_{1}'$. By
      $\logrel^{n+1}_{\tau_{1} \boxplus \tau_{2}}(t_{1},s_{1})$,
      $\exists s_{1}'.\,s_{1} \xTo{\boxplus_{1}} s_{1}'$ and
      $\logrel^{n}_{\arty{\tau_{1}}{\tau_{2}}}(t_{1}',s_{1}')$.
      The term $\mathsf{case}(s_{1},s_{2},s_{3})$ will gradually
      evaluate $s_{1}$ until it becomes a value, to which point it will choose
      the ``left'' path. In other words,  $\mathsf{case}(s_{1},s_{2},s_{3}) \To
      s_{2} \app{}{} s_{1}'$. Furthermore, the inductive hypothesis yields
      $\logrel^{n}_{\tau_{3}}(t_{1}' \app{}{} t_{2},s_{1}'\app{}{}s_{2})$.
      The final subclause for $t_{1} \xto{\boxplus_{2}} t_{1}'$ is done
      analogously.
    \item This clause is void.
    \end{enumerate}
  \item \emph{Case $\fst_{\tau_{1},\tau_{2}}$.} We assume
    $\logrel^{n+1}_{\tau_{1} \boxtimes \tau_{2}}(t_{1},s_{1})$.
    \begin{enumerate}[label=(\roman*)]
      \setcounter{enumii}{1}
    \item First, $t_{1} \to t_{1}' \implies \fst(t_{1}) \to \fst(t_{1}')$.
      Hence, by our assumption $\logrel^{n+1}_{\tau_{1} \boxtimes
        \tau_{2}}(t_{1},s_{1})$, $\exists s_{1}'.\,s_{1} \To
      s_{1}'$ and $\logrel^{n}_{\tau_{1} \boxtimes
        \tau_{2}}(t_{1}',s_{1}')$. This yields $\fst(s_{1}) \To \fst(s_{1}')$;
      furthermore, $\logrel_{\tau_{1} \boxtimes
        \tau_{2}}^{n}(\fst(t_{1}'),\fst(s_{1}'))$ by the inductive hypothesis.
    \item This clause is void.
    \end{enumerate}
  \item \emph{Case $\snd$.} This is done similarly to $\fst$.
  \item \emph{Case $\mathsf{pair}_{\tau_{1},\tau_{2}}$.} We assume
    $\logrel^{n+1}_{\tau_{1}}(t_{1},s_{1})$ and
    $\logrel^{n+1}_{\tau_{2}}(t_{2},s_{2})$.
    \begin{enumerate}[label=(\roman*)]
      \setcounter{enumii}{1}
    \item This clause is void.
    \item We see that $\mathsf{pair}_{\tau_{1},\tau_{2}}(t_{1},t_{2})
      \xto{\boxtimes_{1}} t_{1}$ and $\mathsf{pair}_{\tau_{1},\tau_{2}}(t_{1},t_{2})
      \xto{\boxtimes_{2}} t_{2}$. Conversely, $\mathsf{pair}_{\tau_{1},\tau_{2}}(s_{1},s_{2})
      \xto{\boxtimes_{1}} s_{1}$ and $\mathsf{pair}_{\tau_{1},\tau_{2}}(t_{1},t_{2})
      \xto{\boxtimes_{2}} s_{2}$, and our assumptions allow us to conclude this clause.
    \end{enumerate}
  \item \emph{Case of application.} We assume
    $\logrel^{n+1}_{\arty{\tau_{1}}{\tau_{2}}}(t_{1},s_{1})$ and
    $\logrel^{n+1}_{\tau_{1}}(t_{2},s_{2})$.
    \begin{enumerate}[label=(\roman*)]
      \setcounter{enumii}{1}
    \item By the semantics in \Cref{def:logrel}, $t_{1} \app{}{} t_{2} \to t$ in
      two ways. First,
      $t_{1} \to t_{1}'$ and $t = t_{1}' \app{}{} t_{2}$. By
      $\logrel^{n+1}_{\arty{\tau_{1}}{\tau_{2}}}(t_{1},s_{1})$, there exists
      $s_{1}'$ such that $s_{1} \To s_{1}'$ and
      $\logrel^{n}_{\arty{\tau_{1}}{\tau_{2}}}(t_{1}',s_{1}')$. Hence, the
      semantics dictate $s_{1} \app{}{} s_{2} \To s_{1}'\app{}{}s_{2}$. It now
      suffices to show that $\logrel^{n}_{\tau_{2}}(t_{1}' \app{}{}
      t_{2},s_{1}'\app{}{}s_{2})$, which follows from the inductive hypothesis,
      namely that $\logrel^{n}$ is a $\Sigma$-congruence.
      Second, $t_{1} \app{}{} t_{2} \to t$ by $t_{1} \xto{t_{2}} t$.
      By $\logrel^{n+1}_{\arty{\tau_{1}}{\tau_{2}}}(t_{1},s_{1})$, we can
      conclude that $s_{1} \xTo{s_{2}} s$.
      In addition, by virtue of
      $\logrel^{n+1}_{\arty{\tau_{1}}{\tau_{2}}}(t_{1},s_{1})$ and
      $\logrel^{n+1}_{\tau_{1}}(t_{2},s_{2})$ (thus
      $\logrel^{n}_{\tau_{1}}(t_{2},s_{2})$),
      $\logrel^{n}_{\tau_{2}}(t,s)$, finishing the clause.
    \item This clause is void, as application expressions do not perform
      labelled transitions.
    \end{enumerate}
  \item \emph{Case $I$.}
    \begin{enumerate}[label=(\roman*)]
      \setcounter{enumii}{1}
    \item This clause is void.
    \item Given $\logrel^{n}_{\tau_{2}}(e_{1},e_{2})$, and since $I \xto{e}
      e$, it suffices to show that $\logrel^{n}_{\tau_{2}}(e_{1},e_{2})$
      which is a tautology.
    \end{enumerate}
  \item \emph{Case $K_{\tau_{1},\tau_{2}}$.}
    \begin{enumerate}[label=(\roman*)]
      \setcounter{enumii}{1}
    \item This clause is void.
    \item Given $\logrel^{n}_{\tau_{1}}(e_{1},e_{2})$, and since $K \xto{e}
      K'(e)$, it suffices to show that
      $\logrel^{n}_{\arty{\tau_{2}}{\tau_{1}}}(K'(e_{1}),K'(e_{2}))$. The
      latter holds by the induction hypothesis.
    \end{enumerate}
  \item \emph{Case $K'_{\tau_{1},\tau_{2}}$.} We assume
    $\logrel^{n+1}_{\tau_{1}}(t_{1},s_{1})$.
    \begin{enumerate}[label=(\roman*)]
      \setcounter{enumii}{1}
    \item This clause is void.
    \item Given $\logrel^{n}_{\tau_{2}}(e_{1},e_{2})$ and the rule $K'(t)
      \xto{e} t$, it suffices to show
      that $\logrel^{n}_{\tau_{1}}(t_{1},s_{1})$, which holds by the
      assumption $\logrel_{\tau_{1}}^{n+1}(t_{1},s_{1})$.
    \end{enumerate}
  \item \emph{Case $S_{\tau_{1},\tau_{2},\tau_{3}}$.}
    \begin{enumerate}[label=(\roman*)]
      \setcounter{enumii}{1}
    \item This clause is void.
    \item Given
      $\logrel^{n}_{\arty{\tau_{1}}{\arty{\tau_{2}}{\tau_{3}}}}(e_{1},e_{2})$,
      and since $S \xto{e}
      S'(e)$, it suffices to show that
      $\logrel^{n}_{\arty{{(\arty{\tau_{1}}{\tau_{2})}}}{\arty{\tau_{1}}{\tau_{3}}}}
      (S'(e_{1}),S'(e_{2}))$.
      The latter holds by the induction hypothesis.
    \end{enumerate}
  \item \emph{Case $S'_{\tau_{1},\tau_{2},\tau_{3}}$.} We assume
    $\logrel^{n+1}_{\arty{\tau_{1}}{\arty{\tau_{2}}{\tau_{3}}}}(t_{1},s_{1})$.
    \begin{enumerate}[label=(\roman*)]
      \setcounter{enumii}{1}
    \item This clause is void.
    \item Given $\logrel^{n}_{\arty{\tau_{1}}{\tau_{2}}}(e_{1},e_{2})$ and
      since $S'(t) \xto{e} S''(t,e)$, it suffices to show
      $\logrel^{n}_{\arty{\tau_{1}}{\tau_{3}}}(S''(t_{1},e_{1}),S''(t_{2},e_{2}))$,
      which holds by the induction hypothesis.
    \end{enumerate}
  \item \emph{Case $S''_{\tau_{1},\tau_{2},\tau_{3}}$.} Assume
    $\logrel^{n+1}_{\arty{\tau_{1}}{\arty{\tau_{2}}{\tau_{3}}}}(t_{1},s_{1})$
    and $\logrel^{n+1}_{\arty{\tau_{1}}{\tau_{2}}}(t_{2},s_{2})$.
    \begin{enumerate}[label=(\roman*)]
      \setcounter{enumii}{1}
    \item This clause is void.
    \item Given $\logrel^{n}_{\tau_{1}}(e_{1},e_{2})$ and the rule $S'(t,s)
      \xto{e} (t \app{}{} e) \app{}{} (s \app{}{}e)$, it suffices to show
      $\logrel^{n}_{\tau_{3}}((t_{1} \app{}{} e_{1}) \app{}{} (s_{1} \app{}{}e_{1}),
      (t_{2} \app{}{} e_{2}) \app{}{} (s_{2} \app{}{}e_{2}))$, which holds by the
      induction hypothesis.
    \end{enumerate}
  \item \emph{Case $\mathsf{fold}_{\tau}$.} We assume
    $\logrel^{n+1}_{\tau[\mu\alpha.\tau/\alpha]}(t_{1},s_{1})$.
    \begin{enumerate}[label=(\roman*)]
      \setcounter{enumii}{1}
    \item This clause is void.
    \item As $\mathsf{fold}(t) \xto{\mu} t$, it suffices to show
      $\logrel^{n}_{\tau[\mu\alpha.\tau/\alpha]}(t_{1},s_{1})$,
      which holds by the assumption.
    \end{enumerate}
  \item \emph{Case $\mathsf{unfold}_{\tau}$.} We assume
    $\logrel^{n+1}_{\mu\alpha.\tau}(t_{1},s_{1})$.
    \begin{enumerate}[label=(\roman*)]
      \setcounter{enumii}{1}
    \item First, we have $t_{1} \to t_{1}'$, and thus $\mathsf{unfold}(t_{1}) \to
      \mathsf{unfold}(t_{1}')$. By
      $\logrel^{n+1}_{\mu\alpha.\tau}(t_{1},s_{1})$, $\exists s_{1}'.\,
      s_{1} \To s_{1}' \land \logrel^{n}_{\mu\alpha.\tau}(t_{1}',s_{1}')$.
      Hence, we conclude that  $\mathsf{unfold}(s_{1}) \To
      \mathsf{unfold}(s_{1}')$. It suffices to show that
      $\logrel^{n}_{\tau[\mu\alpha.\tau/\alpha]}(\mathsf{unfold}(t_{1}')
      ,\mathsf{unfold}(s_{1}'))$, which holds by the induction hypothesis.

      Second, we have $t_{1} \xto{\mu} t_{1}'$, and thus $\mathsf{unfold}(t_{1})
      \to t_{1}'$. By
      $\logrel^{n+1}_{\mu\alpha.\tau}(t_{1},s_{1})$, $\exists s_{1}'.\,
      s_{1} \xTo{\mu} s_{1}' \land \logrel^{n}_{\mu\alpha.\tau}(t_{1}',s_{1}')$.
      Hence, we conclude that  $\mathsf{unfold}(s_{1}) \To s_{1}'$ and we are
      done.
    \item This clause is void.
    \end{enumerate}
  \end{enumerate}

  We have just proved that $\logrel^{n}$ is a congruence for all $n \in
  \mathds{N}$. By the definition of $\logrel^{\omega}$
  as an intersection and the fact that congruences are closed under arbitrary
  intersections, we conclude that $\logrel^{\omega}$ is a congruence.

\section{Details for Section 4}

\subsection*{Relations in Categories}
We collect a few auxiliary results about relations in categories that we shall need in subsequent proofs. Throughout this section we work under the global \autoref{assumptions} on the category $\C$.

\begin{notation} Recall that given relations $R,S\monoto X\times X$ the composite relation $R\cdot S\monoto X\times X$ is
constructed in two steps:
\begin{enumerate}
\item Form the pullback of $\outr_{R}$ and $\outl_S$:
\[
\begin{tikzcd}[column sep=0.65em]
  & &
  R\smc S %
  \pullbackangle{-90}
  \ar{dl}[swap]{\ol{\outl}_{R\smc S}}
  \ar{dr}{\ol{\outr}_{R\smc S}}
  &  & \\
& R \ar{dl}[swap]{\outl_R} \ar{dr}[description]{\outr_R} & & S \ar{dl}[description]{\outl_{S}} \ar{dr}{\outr_{S}}  & \\
X & & X & & X 
\end{tikzcd}
\]
\item The relation $R\cdot S$ is given by the (strong epi, mono)-factorization
\[ \begin{tikzcd}[column sep=4em]
R\smc S \ar[shiftarr = {yshift=15}]{rr}{\langle \outl_R\cdot \ol{\outl}_{R\smc S}, \outr_{S}\cdot \ol{\outr}_{R\smc S} \rangle} \ar[two heads]{r}{e_{R\cdot S}} & R\cdot S \ar[tail]{r}{\langle \outl_{R\cdot S}, \outr_{R\cdot S} \rangle} & X\times X
\end{tikzcd}
\]
\end{enumerate}
\end{notation}

\begin{lemma}\label{lem:distributive}
If $\C$ is infinitary extensive, composition of relations distributes over joins: For every relation $S\monoto X\times X$ and every non-empty family $\R$ of relations over $X$,
\[ (\bigvee_{R\in \R} R)\cdot S = \bigvee_{R\in \R} R\cdot S \qquad\text{and}\qquad S\cdot (\bigvee_{R\in \R} R) = \bigvee_{R\in \R} S\cdot R.  \]
\end{lemma}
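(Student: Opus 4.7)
The plan is to prove the first identity $(\bigvee_{R\in \R} R)\cdot S = \bigvee_{R\in \R} R\cdot S$; the second identity follows by a completely symmetric argument. One direction, $\bigvee_{R\in \R} R\cdot S \leq (\bigvee_{R\in \R} R)\cdot S$, is immediate from monotonicity of relation composition, since $R \leq \bigvee_{R\in \R} R$ for each $R$ and the join is the least upper bound.

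For the reverse inequality, I would exploit the (strong epi, mono)-factorization $\coprod_{R\in\R} R \overset{q}{\twoheadrightarrow} \bigvee_{R\in\R} R \monoto X\times X$ defining the join. Form the pullback $P$ of $\outr_{\bigvee_{R\in\R} R}$ and $\outl_S$, so that $(\bigvee_{R\in\R} R)\cdot S$ is, by definition, the image of the induced morphism $P\to X\times X$. Pulling $q$ back along the pullback projection $P\to \bigvee_{R\in\R} R$ yields a strong epimorphism $q'\colon P'\twoheadrightarrow P$ by Assumption~\ref{assumptions}(5); by pullback pasting, $P'$ is simultaneously the pullback of $\outr_{\coprod_{R\in\R} R}$ and $\outl_S$. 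Infinitary extensivity now gives the crucial decomposition $P' \cong \coprod_{R\in\R}(R\times_X S)$, since pullback distributes over coproducts in an infinitary extensive category.

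Under this identification, the canonical morphism $P'\to X\times X$ becomes the copairing of the morphisms $R\times_X S\to X\times X$ whose images are the composite relations $R\cdot S$; it therefore factors as $\coprod_{R\in\R}(R\times_X S) \twoheadrightarrow \coprod_{R\in\R}(R\cdot S) \to X\times X$, where the first arrow is a coproduct of strong epimorphisms (hence itself a strong epimorphism) and the image of the second arrow is precisely $\bigvee_{R\in\R} R\cdot S$. Since precomposition with a strong epimorphism preserves images, the image of $P\to X\times X$, namely $(\bigvee_{R\in\R} R)\cdot S$, coincides with the image of $P'\to X\times X$, namely $\bigvee_{R\in\R} R\cdot S$. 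I expect the main obstacle to be the diagrammatic bookkeeping required to align the two factorizations via the strong epimorphism $q'$ and the extensivity isomorphism; once the picture is drawn correctly, the argument reduces to a routine factorization chase.
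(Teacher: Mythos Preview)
Your argument is correct and is the natural categorical proof. The paper itself does not give a self-contained argument: it simply cites~\cite[Lem.~A.5]{utgms23_arxiv} for the binary case (under finitary extensivity) and asserts that the infinite case is completely analogous. Your write-up is therefore more detailed than what the paper provides, and the strategy you outline---pull the strong epi $q\colon\coprod_{R}R\twoheadrightarrow\bigvee_{R}R$ back through the defining pullback, use universality of coproducts from infinitary extensivity to decompose the result as $\coprod_{R}(R\smc S)$, and then compare images---is exactly the expected one.

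Two small remarks on your bookkeeping. First, your appeal to ``a coproduct of strong epimorphisms is a strong epimorphism'' is valid but not among the paper's explicit \Cref{assumptions}; it follows because $(\text{strong epi},\text{mono})$ is an orthogonal factorization system and left classes of such systems are closed under colimits in the arrow category. You could equally well bypass this step by observing directly that the image of a copairing $[f_R]\colon\coprod_R A_R\to X\times X$ is $\bigvee_R\mathrm{im}(f_R)$, which immediately gives $\mathrm{im}(P'\to X\times X)=\bigvee_R R\cdot S$. Second, you should note that the hypothesis of infinitary extensivity supplies the small coproducts you use (the global \Cref{assumptions} only guarantee finite ones), and that joins in $\RelCat[X]{\C}$ are indeed computed as images of the induced map out of the coproduct; both points are routine but worth making explicit.
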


\begin{proof}
See \cite[Lem.~A.5]{utgms23_arxiv} for the case of binary joins, under the assumption of finitary extensivity (more generally, local distributivity). The proof for infinite joins is completely analogous. 
\end{proof}

Recall that a relation $R$ is \emph{reflexive} if $\Delta\leq R$, and \emph{transitive} if $R\cdot R\leq R$.
Since intersections of reflexive (transitive) relations are again reflexive (transitive), there exists for every relation $R\monoto X\times X$ a least reflexive and transitive relation $R^\star \monoto X\times X$ such that $R\leq R^\star$.

\begin{lemma}\label{lem:transhull}
Suppose that $\C$ is infinitary extensive.
\begin{enumerate}
\item\label{lem:transhull-1} For every relation $R\monoto X\times X$, we have $R^\star = \bigvee_{n\geq 0} R^n$, where $R^0=\Delta$ and $R^n = R\cdot \cdots \cdot R$ ($n$ factors) for $n\geq 1$.
\item\label{lem:transhull-2}  For every non-empty family $\R$ of relations over $X\in \C$, we have $(\bigvee_{R\in \R} R)^\star = \bigvee_{n\geq 0} \bigvee_{R_1,\ldots,R_n\in \R} R_1\cdot \cdots \cdot R_n$.
\end{enumerate}
\end{lemma}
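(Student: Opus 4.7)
\medskip

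\noindent\textbf{Proof plan.} For part \ref{lem:transhull-1}, write $S\coloneqq\bigvee_{n\geq 0} R^n$ and verify the two inclusions of $R^{\star}=S$ separately. The inclusion $S\leq R^{\star}$ follows by induction on $n$: one has $R^0=\Delta\leq R^{\star}$ by reflexivity, and $R^{n+1}=R^n\cdot R\leq R^{\star}\cdot R^{\star}\leq R^{\star}$ by the inductive hypothesis, monotonicity of composition, and transitivity of $R^{\star}$; taking the join preserves the inequality. For the converse, it suffices to show that $S$ is a reflexive, transitive relation containing $R$, since $R^{\star}$ is the least such. Reflexivity is immediate as $\Delta=R^0\leq S$, and $R=R^1\leq S$. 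For transitivity, I invoke \autoref{lem:distributive} twice together with associativity of composition of relations (a standard consequence of the pullback-based definition):
\[
S\cdot S \;=\; \bigl(\bigvee_{n\geq 0} R^n\bigr)\cdot \bigl(\bigvee_{m\geq 0} R^m\bigr) \;=\; \bigvee_{n,m\geq 0} R^n\cdot R^m \;=\; \bigvee_{n,m\geq 0} R^{n+m} \;\leq\; S.
\]

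For part \ref{lem:transhull-2}, set $T\coloneqq \bigvee_{R\in \R} R$ and apply part \ref{lem:transhull-1} to obtain $T^{\star}=\bigvee_{n\geq 0} T^n$. It then suffices to identify each power $T^n$ with $\bigvee_{R_1,\ldots,R_n\in \R} R_1\cdot\cdots\cdot R_n$. This goes by induction on $n$: the case $n=0$ is trivial (the empty join convention yielding $\Delta$), and for the inductive step one computes
\[
T^{n+1} \;=\; T^n\cdot T \;=\; \Bigl(\bigvee_{R_1,\ldots,R_n\in \R} R_1\cdot\cdots\cdot R_n\Bigr)\cdot \Bigl(\bigvee_{R\in \R} R\Bigr) \;=\; \bigvee_{R_1,\ldots,R_{n+1}\in \R} R_1\cdot\cdots\cdot R_{n+1},
\]
using \autoref{lem:distributive} on both sides to turn the product of two joins into a join of products.

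Most of the work is bookkeeping; the only genuinely non-formal ingredient is the repeated use of \autoref{lem:distributive}, which is why the assumption of infinitary extensivity on $\C$ is needed. I do not anticipate any serious obstacle, provided associativity of relation composition is taken for granted (or briefly argued via the universal property of pullbacks).
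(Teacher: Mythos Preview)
Your proposal is correct and follows essentially the same argument as the paper: both directions of part~\ref{lem:transhull-1} are handled exactly as you describe (showing $S$ is reflexive, transitive, and contains $R$ via \autoref{lem:distributive}, and showing each $R^n\leq R^\star$ by induction), and part~\ref{lem:transhull-2} is obtained by applying part~\ref{lem:transhull-1} and then expanding $T^n$ via distributivity. The paper's version is slightly terser---it does not spell out the induction on $n$ for $T^n$ but simply appeals to \autoref{lem:distributive}---and it does not explicitly flag associativity of relation composition, but the substance is identical.
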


\begin{proof}
\begin{enumerate}
\item The relation $S=\bigvee_{n\geq 0} R^n$ contains $R$ because $R=R^1\leq S$, it is reflexive because $\Delta=R^0\leq S$, and transitive because
\[ SS = (\bigvee_{m\geq 0} R^m)(\bigvee_{n\geq 0} R^n) = \bigvee_{m,n\geq 0} R^{m+n} = \bigvee_{n\geq 0} R^n = S; \]
the first step uses \autoref{lem:distributive}.
Therefore $R^\star\leq S$. Conversely, we have $R^n\leq R^\star$ for every $n\geq 0$. Indeed, for $n=0$ this follows from $R$ being reflexive, for $n=1$ use that $R\leq R^\star$, and for $n\geq 2$ use that $R^\star$ is transitive. Hence $S\leq R^\star$. 
\item We compute
\[ (\bigvee_{R\in \R} R)^\star = \bigvee_{n\geq 0} (\bigvee_{R\in \R} R)^n = \bigvee_{n\geq 0} \bigvee_{R_1,\ldots,R_n\in \R} R_1\cdots R_n \]
where the first step uses part \ref{lem:transhull-1} and the second one follows from \autoref{lem:distributive}. \qedhere
\end{enumerate}
\end{proof}

\begin{lemma}\label{lem:fimg-pres-joins-comp}
Let $f\colon X\to Y$ in $\C$.
\begin{enumerate}
\item\label{lem:fimg-pres-joins-comp-1} The map $f_\star\colon \RelCat[X]{\C}\to \RelCat[Y]{\C}$ preserves joins.
\item\label{lem:fimg-pres-joins-comp-2} For all relations $R,S\monoto X\times X$ one has $f_\star[R\cdot S]\leq f_\star [R]\cdot f_\star[S]$.
\end{enumerate}
\end{lemma}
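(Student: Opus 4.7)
For part (1), the plan is to invoke the adjunction $f_\star\dashv f^\star$ at the level of relation posets, set up in \autoref{sec:relations}: since $f_\star=(f\times f)_\star$ is a left adjoint between complete lattices, it preserves every join that exists in $\RelCat[X]{\C}$. No further work is needed.

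For part (2), the strategy is to exhibit a mediating morphism between the two objects whose images define $f_\star[R\cdot S]$ and $f_\star[R]\cdot f_\star[S]$, respectively, and then to conclude via monotonicity of images under precomposition. First, I would name the strong epis $e_R\colon R\twoheadrightarrow f_\star[R]$ and $e_S\colon S\twoheadrightarrow f_\star[S]$ arising from the (strong epi, mono)-factorizations that define the respective direct images; by construction, they satisfy $\outl_{f_\star[R]}\cdot e_R = f\cdot \outl_R$ and $\outr_{f_\star[R]}\cdot e_R = f\cdot \outr_R$, and analogously for $S$.

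Next, using the pullback equation $\outr_R\cdot \ol{\outl}_{R\smc S} = \outl_S\cdot \ol{\outr}_{R\smc S}$, I would verify $\outr_{f_\star[R]}\cdot e_R\cdot \ol{\outl}_{R\smc S} = \outl_{f_\star[S]}\cdot e_S\cdot \ol{\outr}_{R\smc S}$. The universal property of the pullback $f_\star[R]\smc f_\star[S]$ then yields a unique mediator $u\colon R\smc S \to f_\star[R]\smc f_\star[S]$ satisfying $\ol{\outl}\cdot u = e_R\cdot \ol{\outl}_{R\smc S}$ and $\ol{\outr}\cdot u = e_S\cdot \ol{\outr}_{R\smc S}$.

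Finally, unpacking the definitions shows that the morphism $\langle f\cdot \outl_R\cdot \ol{\outl}_{R\smc S},\, f\cdot \outr_S\cdot \ol{\outr}_{R\smc S}\rangle\colon R\smc S\to Y\times Y$, whose (strong epi, mono)-image is $f_\star[R\cdot S]$, factors through $u$ followed by the morphism $\langle \outl_{f_\star[R]}\cdot \ol{\outl},\, \outr_{f_\star[S]}\cdot \ol{\outr}\rangle$, whose image is $f_\star[R]\cdot f_\star[S]$. Since images of factorizable morphisms are ordered by the corresponding subobjects, we conclude $f_\star[R\cdot S]\leq f_\star[R]\cdot f_\star[S]$. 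I expect no genuine conceptual obstacle: the entire argument is diagram chasing within the (strong epi, mono)-factorization system guaranteed by \autoref{assumptions}, and the main effort will lie in keeping the bookkeeping of pullback squares and factorizations tidy.
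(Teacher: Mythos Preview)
Your proposal is correct. Part~(1) is exactly what the paper does: invoke the adjunction $f_\star\dashv f^\star$. For part~(2), the paper simply cites an external reference (\cite[Lem.~A.7]{utgms23_arxiv}) rather than giving a proof, whereas you supply the direct diagram-chasing argument; your approach is essentially the standard one and mirrors the structure of the paper's own proof of \autoref{lem:barSigma-resp-comp}, so nothing is lost or gained beyond self-containedness.
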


\begin{proof}
Item~\ref{lem:fimg-pres-joins-comp-1} holds because $f_\star$ is a left adjoint (with right adjoint $f^\star$). For item~\ref{lem:fimg-pres-joins-comp-2}, see \cite[Lem.~A.7]{utgms23_arxiv}.
\end{proof}

\begin{notation}
  Recall that the canonical relation lifting $\ol{\Sigma}$ of an endofunctor $\Sigma\colon \C\to \C$ maps a relation
  $R\monoto X\times X$ to the relation
  $\ol{\Sigma}\seq \Sigma X \times \Sigma X$ obtained via the (strong
  epi, mono)-factorization
  \begin{equation}\label{eq:fac}
    \begin{tikzcd}[column sep=4em]
      \Sigma R
      \ar[shiftarr = {yshift=15}]{rr}{\langle \Sigma \outl_R, \Sigma \outr_R \rangle }
      \ar[two heads]{r}{e_{\ol{\Sigma} R}}
      &
      \ol{\Sigma} R \ar[tail]{r}{\langle \outl_{\ol{\Sigma}R}, \outr_{\ol{\Sigma}R} \rangle}
      &
      \Sigma X\times \Sigma X
    \end{tikzcd}
  \end{equation}
\end{notation}

\begin{lemma}\label{lem:barSigma-resp-comp}
Given an endofunctor $\Sigma\colon \C\to\C$ preserving strong epimorphisms and relations $R,S\monoto X\times X$, one has $\ol{\Sigma}(R\cdot S)\leq \ol{\Sigma} R \cdot \ol{\Sigma} S$.
\end{lemma}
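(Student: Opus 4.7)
The plan is to unfold both sides as images of canonical maps out of pullbacks, use the assumption that $\Sigma$ preserves strong epimorphisms to get a strong epi onto $\ol{\Sigma}(R\cdot S)$ from $\Sigma(R\smc S)$, construct the required mediating morphism using the pullback universal property, and conclude by the essentially unique factorization of a morphism through its image.

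More concretely, first I would observe that since $e_{R\cdot S}\colon R\smc S\twoheadrightarrow R\cdot S$ is strongly epic and $\Sigma$ preserves strong epimorphisms, the composite $\Sigma(R\smc S)\twoheadrightarrow \Sigma(R\cdot S)\twoheadrightarrow \ol{\Sigma}(R\cdot S)$ is strongly epic, so $\ol{\Sigma}(R\cdot S)$ is the image of
\[ \langle \Sigma\outl_R\cdot \Sigma\ol{\outl}_{R\smc S},\,\Sigma\outr_S\cdot \Sigma\ol{\outr}_{R\smc S}\rangle\colon \Sigma(R\smc S)\to \Sigma X\times \Sigma X. \]
Next I would build a morphism $m\colon \Sigma(R\smc S)\to \ol{\Sigma}R\smc \ol{\Sigma}S$. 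The morphism $e_{\ol{\Sigma}R}\cdot \Sigma\ol{\outl}_{R\smc S}\colon \Sigma(R\smc S)\to \ol{\Sigma}R$ composed with $\outr_{\ol{\Sigma}R}$ yields $\Sigma(\outr_R\cdot \ol{\outl}_{R\smc S})$, and similarly $e_{\ol{\Sigma}S}\cdot \Sigma\ol{\outr}_{R\smc S}\colon \Sigma(R\smc S)\to \ol{\Sigma}S$ composed with $\outl_{\ol{\Sigma}S}$ yields $\Sigma(\outl_S\cdot \ol{\outr}_{R\smc S})$. Applying $\Sigma$ to the defining pullback equation $\outr_R\cdot \ol{\outl}_{R\smc S}=\outl_S\cdot \ol{\outr}_{R\smc S}$ shows that these two composites agree, so the pullback universal property of $\ol{\Sigma}R\smc \ol{\Sigma}S$ supplies the desired $m$.

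Finally, postcomposing $m$ with the strong epi $e_{\ol{\Sigma}R\cdot \ol{\Sigma}S}\colon \ol{\Sigma}R\smc \ol{\Sigma}S\twoheadrightarrow \ol{\Sigma}R\cdot \ol{\Sigma}S$ and then with $\langle \outl_{\ol{\Sigma}R\cdot \ol{\Sigma}S},\outr_{\ol{\Sigma}R\cdot \ol{\Sigma}S}\rangle\colon \ol{\Sigma}R\cdot \ol{\Sigma}S\monoto \Sigma X\times \Sigma X$, a short diagram chase using the definitions of $\outl_{\ol{\Sigma}R}$, $\outr_{\ol{\Sigma}S}$ and of composition of relations shows that the resulting map $\Sigma(R\smc S)\to \Sigma X\times \Sigma X$ is exactly the canonical map displayed above. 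Hence that canonical map factors through the subobject $\ol{\Sigma}R\cdot \ol{\Sigma}S\monoto \Sigma X\times \Sigma X$, and since $\ol{\Sigma}(R\cdot S)$ is its image, we obtain the inclusion $\ol{\Sigma}(R\cdot S)\leq \ol{\Sigma}R\cdot \ol{\Sigma}S$. The only delicate point is the diagram chase in the last step; the preservation of strong epimorphisms by $\Sigma$ is used exactly once, in the very first step to ensure $\ol{\Sigma}(R\cdot S)$ admits the more convenient presentation as an image out of $\Sigma(R\smc S)$.
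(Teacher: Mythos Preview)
Your proposal is correct and follows essentially the same approach as the paper: construct the mediating morphism $\Sigma(R\smc S)\to \ol{\Sigma}R\smc \ol{\Sigma}S$ via the pullback universal property, verify that the canonical map $\Sigma(R\smc S)\to \Sigma X\times \Sigma X$ factors through $\ol{\Sigma}R\cdot \ol{\Sigma}S$, and use preservation of strong epis by $\Sigma$ to conclude that $\ol{\Sigma}(R\cdot S)$, as the image of this canonical map, is contained in $\ol{\Sigma}R\cdot \ol{\Sigma}S$. The paper phrases the final step as an explicit diagonal fill-in (strong epi against mono), which is exactly your image-factorization argument in slightly different clothing.
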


\begin{proof}
Consider the pullbacks used in the definition of $R\cdot S$ and $\ol{\Sigma} R \cdot \ol{\Sigma} S$:
\begin{equation}\label{eq:comp}
\begin{tikzcd}[column sep=1em]
  & &
  R\smc S %
  \pullbackangle{-90}
  \ar{dl}[swap]{\ol{\outl}_{R\smc S}}
  \ar{dr}{\ol{\outr}_{R\smc S}}
  &  & \\
& R \ar{dl}[swap]{\outl_R} \ar{dr}[description]{\outr_R} & & S \ar{dl}[description]{\outl_{S}} \ar{dr}{\outr_{S}}  & \\
X & & X & & X 
\end{tikzcd}
\qquad\qquad
\begin{tikzcd}[column sep=0.65em]
  & &
  \ol{\Sigma} R\smc \ol{\Sigma} S %
  \pullbackangle{-90}
  \ar{dl}[swap]{\ol{\outl}_{\ol{\Sigma} R\smc \ol{\Sigma} S}}
  \ar{dr}{\ol{\outr}_{\ol{\Sigma} R\smc \ol{\Sigma} S}}
  &  & \\
& \ol{\Sigma} R \ar{dl}[swap]{\outl_{\ol{\Sigma} R}} \ar{dr}[description]{\outr_{\ol{\Sigma} R}} & & \ol{\Sigma} S \ar{dl}[description]{\outl_{\ol{\Sigma} S}} \ar{dr}{\outr_{\ol{\Sigma} S}}  & \\
\Sigma X & & \Sigma X & & \Sigma X 
\end{tikzcd}
\end{equation}

Now consider the following diagram:
\begin{equation}\label{eq:f}
\begin{tikzcd}[column sep = 40]
  \Sigma R
  \ar[two heads]{d}[swap]{e_{\ol{\Sigma}R}}
  \ar[shiftarr = {xshift = -20}]{dd}[swap]{\Sigma \outr_R}
  &
  \Sigma (R\smc S)
  \ar{r}{\Sigma \ol{\outr}_{R\smc S}}
  \ar{l}[swap]{\Sigma \ol{\outl}_{R\smc S}}
  \ar[dashed]{d}{f}
  &
  \Sigma S \ar[two heads]{d}{e_{\ol{\Sigma}S}}
  \ar[shiftarr = {xshift = 20}]{dd}{\Sigma \outl_S}
  \\
  \ol{\Sigma} R
  \ar{d}[swap]{\outr_{\ol\Sigma R}}
  &
  \ol{\Sigma} R \smc \ol{\Sigma} S
  \ar{l}[swap]{\ol{\outl}_{\ol{\Sigma} R;\ol{\Sigma} S}}
  \ar{r}{\ol{\outr}_{\ol{\Sigma} R;\ol{\Sigma} S}}
  &
  \ol{\Sigma} S
  \ar{d}{\outl_{\ol\Sigma S}}
  \\
  \Sigma X
  \ar[equals]{rr}
  &
  &
  \Sigma X
\end{tikzcd}
\end{equation}
Its outside is the commutative square in~\eqref{eq:comp} on the left
under the functor $\Sigma$. The left-hand part is the right-hand
component of~\eqref{eq:fac}; similarly, the right-hand part is the
left-hand component of~\eqref{eq:fac} instantiated for $S$ in lieu of
$R$. Thus, by the universal property of the pullback
in~\eqref{eq:comp} on the right, there exists a
unique $f\colon \Sigma(R\smc S)\to \ol{\Sigma} R\smc \ol{\Sigma} S$
such that the upper two squares commute.

Then the outside of the following diagram commutes, as we explain below.
\[
  \begin{tikzcd}[column sep=80, row sep = 40]
    {\Sigma}(R\smc S)
    \ar{r}{\id}
    \ar[two heads]{d}[swap]{\Sigma e_{R\cdot S}}
    \ar{rdd}[description]{\langle
      \Sigma(\outl_R \cdot \ol{\outl}_{R\smc S}),
      \Sigma(\outr_S \cdot \ol{\outr}_{R\smc S})
      \rangle}
    &
    {\Sigma}(R\smc S)
    \ar{r}{f}
    \ar{d}[swap]{\langle
      e_{\ol\Sigma R} \cdot \Sigma\ol\outl_{R\smc S},
      e_{\ol\Sigma S} \cdot \Sigma\ol\outr_{R\smc S},
      \rangle}
    &
    \ol{\Sigma} R \smc \ol{\Sigma} S
    \ar{dd}{e_{\ol{\Sigma}R \cdot \ol{\Sigma}S}}
    \ar{ldd}[description]{\langle
      \outl_{\ol\Sigma R} \cdot \ol\outl_{\ol\Sigma R \smc \ol\Sigma S},
      \outr_{\ol\Sigma S} \cdot \ol\outr_{\ol\Sigma R\smc \ol\Sigma S}
      \rangle}
    \ar{ld}[swap]{\langle
      \ol\outl_{\ol\Sigma R \smc \ol\Sigma S},
      \ol\outr_{\ol\Sigma R\smc \ol\Sigma S}
      \rangle}
    \\
    {\Sigma}(R\cdot S)
    \ar[two heads]{d}[swap]{e_{\ol{\Sigma}(R\cdot S)}}
    \ar{rd}[swap]{\langle \Sigma \outl_{R \cdot S}, \Sigma\outr_{R \cdot S}\rangle}
    &
    \ol\Sigma R \times \ol\Sigma S
    \ar{d}[description]{\outl_{\ol\Sigma R}\times \outr_{\ol\Sigma S} }
    &
    \\
    \ol{\Sigma}(R\cdot S)
    \ar{r}{\langle \outl_{\ol{\Sigma}(R\cdot S)}, \outr_{\ol{\Sigma}(R\cdot S)} \rangle}
    &
    \Sigma X \times \Sigma X
    &
    \ol{\Sigma} R \cdot \ol{\Sigma} S
    \ar{l}[swap]{\langle
      \outl_{\ol{\Sigma}R\cdot \ol{\Sigma} S},
      \outr_{\ol{\Sigma}R\cdot \ol{\Sigma} S}
      \rangle}
\end{tikzcd}
\]
The lower left-hand triangle commutes by~\eqref{eq:fac}, and the one
above it by the definition of $R \cdot S \monoto X \times X$ as the
image of the pair in the span given in~\eqref{eq:comp} on the
left. For the triangle above that (the upper left-hand one overall) consider the product components
separately: the left-hand one is $\Sigma\ol\outl_{R\smc S}$
precomposed with the left-hand component of~\eqref{eq:fac}; for the
right-hand component one has, similarly, $\Sigma\ol\outr_{R \smc S}$
precomposed with the right-hand component of~\eqref{eq:fac}
instantiated for $S$ in lieu of $R$. The upper right-hand triangle
commutes by using the two top squares in~\eqref{eq:f}, and the one below it is obvious. Finally,
the lower right-hand triangle commutes by the definition of $\ol\Sigma
R \cdot \ol\Sigma S \monoto \Sigma X \times \Sigma X$ as the image of
the outside span in~\eqref{eq:comp} on the right. 

Now note that $\Sigma e_{R\cdot S}$ at the left-hand edge above is a strong
epimorphism because $\Sigma$ preserves strong epimorphisms by
assumption. Hence we obtain a diagonal fill-in witnessing that
$\ol{\Sigma}(R\cdot S)\leq \ol{\Sigma} R \cdot \ol{\Sigma} S$, as
shown in the diagram below:
\[
\begin{tikzcd}[column sep=5em, row sep=3em, baseline = (B.base)]
{\Sigma}(R\smc S) \ar{rr}{f} \ar[two heads]{d}[swap]{\Sigma e_{R\cdot
    S}}
&&
\ol{\Sigma} R \smc \ol{\Sigma} S \ar{d}{e_{\ol{\Sigma}R \cdot
    \ol{\Sigma}S}}
\\
{\Sigma}(R\cdot S) \ar[two heads]{d}[swap]{e_{\ol{\Sigma}(R\cdot S)}}
&&
\ol{\Sigma} R \cdot \ol{\Sigma} S
\ar[>->]{d}{\langle \outl_{\ol{\Sigma}R\cdot \ol{\Sigma} S}, \outr_{\ol{\Sigma}R\cdot \ol{\Sigma} S} \rangle }\\
\ol{\Sigma}(R\cdot S) \ar[dashed]{urr} \ar{rr}{\langle
  \outl_{\ol{\Sigma}(R\cdot S)}, \outr_{\ol{\Sigma}(R\cdot S)}
  \rangle}  && |[alias = B]|\Sigma X \times \Sigma X 
\end{tikzcd}
\qedhere
\]
\end{proof}

\begin{lemma}\label{lem:cong-comp}
Suppose that $\Sigma\colon \C\to \C$ preserves strong epimorphisms. Then for every $\Sigma$-algebra $(A,a)$ the diagonal $\Delta\monoto A\times A$ is a congruence, and the composite $R\cdot S$ of congruences is a congruence. 
\end{lemma}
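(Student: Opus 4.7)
The plan is to prove both statements directly from the definition of congruence, namely $\fimg{a}{\ol{\Sigma}R}\leq R$, using the auxiliary lemmas already established.

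For the diagonal, I would first observe that $\ol{\Sigma}\Delta_A \leq \Delta_{\Sigma A}$: the canonical lifting $\ol{\Sigma}\Delta_A$ is the image of the map $\langle \Sigma\id_A,\Sigma\id_A\rangle = \langle \id_{\Sigma A}, \id_{\Sigma A}\rangle \colon \Sigma A \to \Sigma A\times \Sigma A$, which already factors through $\Delta_{\Sigma A}\monoto \Sigma A\times \Sigma A$. Hence $\fimg{a}{\ol{\Sigma}\Delta_A}\leq \fimg{a}{\Delta_{\Sigma A}}$, and the latter is the image of $\langle a,a\rangle = (a\times a)\cdot \langle \id,\id\rangle\colon \Sigma A\to A\times A$, which factors through $\Delta_A$ via $a$. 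Thus $\fimg{a}{\ol{\Sigma}\Delta_A}\leq \Delta_A$, which is precisely the congruence condition.

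For congruences $R,S\monoto A\times A$ on $(A,a)$, I would chain the three inequalities
\[
\fimg{a}{\ol{\Sigma}(R\cdot S)} \;\leq\; \fimg{a}{\ol{\Sigma}R\cdot \ol{\Sigma}S} \;\leq\; \fimg{a}{\ol{\Sigma}R}\cdot \fimg{a}{\ol{\Sigma}S} \;\leq\; R\cdot S.
\]
The first step uses \autoref{lem:barSigma-resp-comp} (here the assumption that $\Sigma$ preserves strong epimorphisms enters), together with monotonicity of $\fimg{a}{\argument}$. The second step is \autoref{lem:fimg-pres-joins-comp}\ref{lem:fimg-pres-joins-comp-2} applied to $f=a$. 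The third step uses monotonicity of relation composition together with the hypothesis that $R$ and $S$ are congruences, i.e.\ $\fimg{a}{\ol{\Sigma}R}\leq R$ and $\fimg{a}{\ol{\Sigma}S}\leq S$.

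There is no real obstacle here since all the hard work has been done in the preceding lemmas; the only subtlety is remembering that $\fimg{a}{\argument}$ abbreviates $(a\times a)_\star[\argument]$, so that \autoref{lem:fimg-pres-joins-comp}\ref{lem:fimg-pres-joins-comp-2} applies verbatim. The diagonal case is essentially bookkeeping about images of diagonals.
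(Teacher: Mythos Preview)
Your proof is correct and essentially identical to the paper's: the paper proves the diagonal case by the one-liner $a_\star\ol{\Sigma}\Delta = a_\star \Delta \leq \Delta$ (which you unpack in more detail), and for composites it chains exactly the same three inequalities using \autoref{lem:barSigma-resp-comp} and \autoref{lem:fimg-pres-joins-comp}\ref{lem:fimg-pres-joins-comp-2}.
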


\begin{proof}
The diagonal is a congruence because
\[ a_\star\ol{\Sigma}\Delta = a_\star \Delta \leq \Delta. \]
Given congruences $R,S\monoto A\times A$, the composite $R\cdot S$ is a congruence since
\[ a_\star\ol{\Sigma}(R\cdot S) \leq a_\star(\ol{\Sigma}R \cdot\ol{\Sigma} S) \leq (a_\star\ol{\Sigma}R) \cdot (a_\star \ol{\Sigma} S) \leq R\cdot S, \]
where the first step uses \autoref{lem:barSigma-resp-comp} and the second step uses \autoref{lem:fimg-pres-joins-comp}.\ref{lem:fimg-pres-joins-comp-2}.
\end{proof}

\begin{lemma}\label{lem:barSigma-pres-directed-joins}
Suppose that $\C$ is infinitary extensive and has smooth monomorphisms, and that $\Sigma\colon \C\to \C$ preserves monomorphisms and directed colimits. Then for every $X\in \C$ the map $\ol{\Sigma}\colon \RelCat[X]{\C} \to \RelCat[\Sigma X]{\C}$ preserves directed joins.
\end{lemma}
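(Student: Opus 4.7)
Let $(R^i\monoto X\times X)_{i\in I}$ be a directed family of relations, with join $R = \bigvee_i R^i$. We wish to show $\ol{\Sigma}R = \bigvee_i \ol{\Sigma}R^i$. The inequality $\bigvee_i \ol{\Sigma}R^i \leq \ol{\Sigma}R$ is routine: since $\ol{\Sigma}$ is monotone (applying $\Sigma$ to the inclusion $R^i\monoto R$ exhibits $\langle \Sigma\outl_{R^i},\Sigma\outr_{R^i}\rangle$ as a precomposition of $\langle\Sigma\outl_R,\Sigma\outr_R\rangle$, so the image of the former is bounded by that of the latter), one has $\ol{\Sigma}R^i\leq \ol{\Sigma}R$ for every $i$, whence the join is bounded by $\ol{\Sigma}R$.

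The substantive direction is $\ol{\Sigma}R \leq \bigvee_i \ol{\Sigma}R^i$. I would first use the smooth monomorphism hypothesis (\autoref{rem:cat-properties}\ref{rem:cat-properties-3}) twice: on the domain side to identify $R$ with the colimit of the directed diagram $(R^i)_{i\in I}$ in $\C$ (with connecting morphisms given by the subobject inclusions), and on the codomain side to identify $\bigvee_j \ol{\Sigma}R^j$ with the colimit of $(\ol{\Sigma}R^j)_{j\in I}$ in $\C$. Then, invoking preservation of directed colimits by $\Sigma$, we obtain $\Sigma R = \colim_i \Sigma R^i$ with colimit injections $\Sigma R^i \to \Sigma R$ induced by the subobject inclusions $R^i \monoto R$.

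Next, for each $i\in I$ compose the strong epi $e_{\ol{\Sigma}R^i}\colon \Sigma R^i \twoheadrightarrow \ol{\Sigma} R^i$ with the inclusion $\ol{\Sigma} R^i \monoto \bigvee_j \ol{\Sigma} R^j$ into the directed join to get a morphism $h_i\colon \Sigma R^i \to \bigvee_j \ol{\Sigma} R^j$. I would verify that the family $(h_i)_{i\in I}$ is a cocone over $(\Sigma R^i)_i$: for $i\leq k$ in $I$, both $h_k\cdot \Sigma(R^i\monoto R^k)$ and $h_i$ equal $\langle \Sigma \outl_{R^i}, \Sigma\outr_{R^i}\rangle$ after postcomposition with the mono $\langle \outl,\outr\rangle\colon \bigvee_j \ol{\Sigma} R^j\monoto \Sigma X\times \Sigma X$, and hence agree by cancellativity of monomorphisms.

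Finally, the universal property of the colimit $\Sigma R = \colim_i \Sigma R^i$ yields a unique morphism $h\colon \Sigma R \to \bigvee_j \ol{\Sigma} R^j$ mediating this cocone. Composing with the mono into $\Sigma X\times \Sigma X$ and using cancellativity once more shows that $h$ factors $\langle \Sigma\outl_R, \Sigma\outr_R\rangle$, i.e.\ this morphism factors through $\bigvee_j\ol{\Sigma}R^j$. Since $\ol{\Sigma}R$ is the image (the smallest subobject of $\Sigma X\times \Sigma X$ through which $\langle \Sigma\outl_R, \Sigma\outr_R\rangle$ factors), we conclude $\ol{\Sigma}R\leq \bigvee_j \ol{\Sigma}R^j$, completing the proof. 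The main delicate point is the book-keeping around Step~3: one must carefully track that the cocone compatibility really is forced by postcomposition with a mono, and that the identifications of directed joins with directed colimits via smooth monomorphisms are applied on both sides consistently.
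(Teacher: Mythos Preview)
Your proof is correct and follows essentially the same approach as the paper: identify the directed join with a directed colimit via smoothness, use that $\Sigma$ preserves directed colimits, and conclude. The paper's proof is very terse (``it follows that\ldots''), whereas you spell out the factorization through $\bigvee_j \ol{\Sigma}R^j$ explicitly via a cocone argument; note that your second invocation of smoothness (on the codomain side) is not actually used in the remainder of your argument, and the word ``cancellativity'' in the final step should really be ``universal property of the colimit.''
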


\begin{proof}
Let $\R$ be a directed set of relations on $X$. By smoothness, its join $\bigvee_{R\in \R} R$ in $\RelCat[X]{\C}$ coincides with the colimit of the corresponding diagram in $\C$. Since $\Sigma$ preserves that colimit, it follows that $\ol{\Sigma}(\bigvee_{R\in \R} R) = \bigvee_{R\in \R} \ol{\Sigma} R$, as required.
\end{proof}

\subsection*{Proof of \autoref{prop:contextual-preorder}}
This is a consequence of the following observations:
\begin{enumerate}
\item $\Delta$ is an $O$-adequate relation, and composites of
  $O$-adequate relations are $O$-adequate, since $O$ is a preorder
  whence transitive.%
  \smnote{This also uses that relation composition is monotone, which
    we never stated, but we should; it should follow from \Cref{lem:distributive}.}
\item $\Delta$ is a congruence, and composites of congruences are congruences (\autoref{lem:cong-comp}).
\end{enumerate}
Since $\cprd$ is the greatest $O$-adequate relation, it follows from (1) and (2) that both $\Delta$ and $\cprd\cdot \cprd$ are contained in $\cprd$, hence $\cprd$ is a preorder. 

\subsection*{Proof of \autoref{lem:union-transhull-cong}}
We need to prove $a_\star\ol{\Sigma}( \bigvee_{R\in \R} R )^\star \leq (\bigvee_{R\in \R} R)^\star$, which follows from the computation
\begin{align*}
a_\star\ol{\Sigma}( \bigvee_{R\in \R} R^\star ) &= a_\star\ol{\Sigma} \big( \bigvee_{n\geq 0} \bigvee_{R_1,\ldots,R_n\in \R} R_1\cdot \cdots \cdot R_n \big) & \text{\autoref{lem:transhull}\ref{lem:transhull-2}} \\
&= a_\star \big( \bigvee_{n\geq 0} \bigvee_{R_1,\ldots,R_n\in \R} \ol{\Sigma}(R_1\cdot \cdots \cdot R_n) \big) & \text{see below} \\
&= \bigvee_{n\geq 0} \bigvee_{R_1,\ldots,R_n\in \R} a_\star(\ol{\Sigma}(R_1\cdot \cdots \cdot R_n)) & \text{\autoref{lem:fimg-pres-joins-comp}\ref{lem:fimg-pres-joins-comp-1}} \\
&\leq \bigvee_{n\geq 0} \bigvee_{R_1,\ldots,R_n\in \R} a_\star(\ol{\Sigma} R_1\cdot \cdots \cdot \ol{\Sigma}R_n) & \text{\autoref{lem:barSigma-resp-comp}} \\
&\leq \bigvee_{n\geq 0} \bigvee_{R_1,\ldots,R_n\in \R} (a_\star\ol{\Sigma} R_1)\cdot \cdots \cdot (a_\star\ol{\Sigma}R_n) & \text{\autoref{lem:fimg-pres-joins-comp}\ref{lem:fimg-pres-joins-comp-2}} \\
&\leq \bigvee_{n\geq 0} \bigvee_{R_1,\ldots,R_n\in \R} R_1\cdot \cdots \cdot R_n & \text{$R\in \R$ congruence} \\
&= (\bigvee_{R\in \R} R)^\star & \text{\autoref{lem:transhull}\ref{lem:transhull-2}} 
\end{align*}
In the second step, we use that the join $\bigvee_{n\geq 0}
\bigvee_{R_1,\ldots,R_n\in \R} R_1\cdot \cdots \cdot R_n$ is directed:
given $R_1,\ldots,R_n, R_1',\ldots,R_m'\in \R$, the relations
$R_1\cdot\cdots \cdot R_n$ and $R_1'\cdot\cdots\cdot R'_m$ have the
upper bound $R_1\cdot\cdots\cdot R_n \cdot R_1'\cdot\cdots\cdot R_m'$
because all relations in $\R$ are reflexive. Hence, by \autoref{lem:barSigma-pres-directed-joins}, the functor $\ol{\Sigma}$ preserves the join.

\section{Details for Section 5}\label{sec:fpc-extended}
We provide a more detailed account on how to apply the higher-order abstract
GSOS framework to (nondeterministic) \fpc,
which is a typed, call-by-name
$\lambda$-calculus with sums, products, conditionals, recursive types and choice. The type
expressions are given by the grammar
\begin{equation*}
  \tau_1,\tau_2,\tau,\ldots \Coloneqq \alpha \mid\tau_1\boxplus\tau_2\mid\tau_1\boxtimes\tau_2\mid \arty{\tau_1}{\tau_2}\mid \mu\alpha.\,\tau,
\end{equation*}
where $\alpha$ ranges over a fixed countably infinite set of {type variables}.
We write $\Ty$ the set of closed type expressions modulo $\alpha$-equivalence.
The terms of \fpc are constructed under the following typing rules:
\begin{gather*}
  \inference{\Gamma \vdash t \c \arty{\tau_{1}}{\tau_{2}} & \Gamma \vdash s \c
    \tau_{1}}{\Gamma \vdash t \app{}{} s \c \tau_{2}}
  \qquad
  \inference{\Gamma, x \c \tau_{1} \vdash t \c \tau_{2}}
  {\Gamma \vdash \mathsf{lam}\,x\c\tau_{1}.\, t\c \arty{\tau_{1}}{\tau_{2}}}
  \qquad
  \inference{\Gamma \vdash t \c \tau[\mu\alpha.\tau/\alpha]}{\Gamma \vdash
    \mathsf{fold}(t) \c \mu\alpha.\tau}
  \qquad
  \inference{\Gamma \vdash t \c \mu\alpha.\tau}{\Gamma \vdash
    \mathsf{unfold}(t) \c \tau[\mu\alpha.\tau/\alpha]}
  \\[1ex]
  \inference{\Gamma \vdash t \c \tau_{1} & \Gamma \vdash s \c \tau_{2}}
  {\Gamma \vdash \mathsf{pair}(t,s) \c \tau_{1} \boxtimes \tau_{2}}
  \qquad
  \inference{\Gamma \vdash t \c \tau_{1} \boxtimes \tau_{2}}
  {\Gamma \vdash \mathsf{fst}(t) \c \tau_{1}}
  \qquad
  \inference{\Gamma \vdash t \c \tau_{1} \boxtimes \tau_{2}}
  {\Gamma \vdash \mathsf{snd}(t) \c \tau_{2}}
  \qquad
  \inference{\Gamma \vdash t \c \tau & \Gamma \vdash s \c \tau}{\Gamma \vdash t
    \oplus s \c \tau}
  \\[1ex]
  \inference{\Gamma \vdash t \c \tau_{1}}{\Gamma \vdash \inl(t) \c \tau_{1}
    \boxplus \tau_{2}}
  \qquad
  \inference{\Gamma \vdash t \c \tau_{2}}{\Gamma \vdash \inr(t) \c \tau_{1}
    \boxplus \tau_{2}}
  \qquad
  \inference{\Gamma \vdash t \c \tau_{1} \boxplus \tau_{2}
    & \Gamma \vdash s \c \arty{\tau_{1}}{\tau_{3}}
    & \Gamma \vdash r \c \arty{\tau_{2}}{\tau_{3}}}
  {\Gamma \vdash \mathsf{case}(t,s,r) \c \tau_{3}}
\end{gather*}
The operational semantics work in the expected way:
\begin{gather*}
  \inference{t\to t'}{t \app{\tau_{1}}{\tau_{2}} s\to t'
    \app{\tau_{1}}{\tau_{2}} s}
  \qquad
  \inference{}{(\mathsf{lam}\,x\c\tau.\,t) \app{}{} s \to t[x/s]} \qquad
  \inference{t \to t'}{\mathsf{unfold}(t) \to \mathsf{unfold}(t')} \qquad
  \inference{}{\mathsf{unfold}(\mathsf{fold}(t)) \to t}
  \qquad
  \\[1ex]
  \inference{t \to t'}{\mathsf{fst}(t) \to \mathsf{fst}(t')}
  \qquad
  \inference{t \to t'}{\mathsf{snd}(t) \to \mathsf{snd}(t')}
  \qquad
  \inference{}{\mathsf{fst}(\mathsf{pair}(t,s)) \to t}
  \qquad
  \inference{}{\mathsf{snd}(\mathsf{pair}(t,s)) \to s}
  \\[1ex]
  \inference{}{t \oplus s \to t}
  \qquad
  \inference{}{t \oplus s \to s}
  \qquad
  \inference{t\to t'}{\mathsf{case}(t,s,r)\to
    \mathsf{case}(t',s,r)} \qquad
  \inference{}{\mathsf{case}(\inl(t),s,r)\to
    s\app{}{}t} \qquad
  \inference{}{\mathsf{case}(\inr(t),s,r)\to
    r\app{}{}t}
\end{gather*}

\paragraph{Categorical modelling}
Implementing \textbf{FPC} in the style of higher-order abstract
GSOS follows ideas from earlier work~\cite{gmstu23, UrbatTsampasEtAl23, gmstu24}, this
time applied to a typed setting with
recursive types. Let $\fset/{\Tyl}$ be the
slice category of the category of finite cardinals $\fset$ over $\Ty$. The
objects of $\fset/{\Tyl}$ are typed cartesian contexts, i.e. vectors of types
$\Gamma \c n \to \Tyl$; morphisms are type-respecting renamings:
\[
  \fset/{\Tyl}(\Gamma \c n \to \Tyl, \Delta \c m \to
  \Tyl) = \{r \c n \to m \mid \Delta(r(n)) = \Gamma(n)\}
\]
We write $|\Gamma|$ for the domain of a context $\Gamma$.
Each type $\tau \in \Tyl$ induces the single-variable context $\check \tau\c
1 \to \Tyl$; coproducts in~$\fset/{\Tyl}$ are formed by copairing:
\[
  (\Gamma_{1} \c |\Gamma_{1}| \to \Tyl) + (\Gamma_{2} \c |\Gamma_{1}| \to \Tyl)
  = [\Gamma_{1}, \Gamma_{2}] \c |\Gamma_{1}| + |\Gamma_{2}| \to \Tyl.
\]
The fundamental operation of \emph{context extension} $(-
+ \check\tau) \c \fset/{\Tyl} \to \fset/{\Tyl}$ extends a variable context with
a new variable $x \c \tau$, i.e. $\Gamma \xmapsto{(-
  + \check\tau)} \Gamma, x \c \tau$ in a type-theoretic notation.

Our higher-order GSOS law for \textbf{FPC} lives in the category
$(\Set^{\fset/{\Ty}})^{\Ty}$ of type-indexed covariant presheaves over
$\fset/{\Tyl}$. Its objects are families of sets
indexed by contexts $\Gamma \in \fset/{\Tyl}$ and types $\tau
\in \Ty$ that respect context renamings. Two fundamental examples of objects in
$(\Set^{\fset/{\Ty}})^{\Ty}$ are the presheaf of \emph{variables} $V$,
\begin{equation*}
  V_{\tau}(\Gamma)=\{x\in |\Gamma| \mid \Gamma(x)=\tau\} \quad \text{and} \quad
  V_{\tau}(\Gamma)(r) = r,
\end{equation*}
and
the family $\Lambda$ of
well-typed, $\alpha$-equivalent terms in \textbf{FPC}, with
\begin{equation*}
  t \in \Lambda_{\tau}(\Gamma) \iff \Gamma \vdash t \c \tau.
\end{equation*}
The syntax functor $\Sigma$ of our higher-order GSOS law is the endofunctor
corresponding to the $\Ty$-sorted \emph{binding signature} of \textbf{FPC}:
\begin{equation}
  \label{eq:sigmalamapp}
    \begin{aligned}
    & \Sigma_{\tau} X =
      V_{\tau} + X_{\tau} \times X_{\tau}
      + \Sigma^{1}_{\tau}X + \Sigma^{2}_{\tau}X + \Sigma^{3}_{\tau}X + \Sigma^{4}_{\tau}X, \\
    & \Sigma^{1}_{\tau} X =
      \coprod_{\tau' \in \Tyl}(X_{\arty{\tau'}{\tau}} \times X_{\tau'})
      + X_{\tau' \boxtimes \tau} + X_{\tau \boxtimes \tau'},\\
    & \Sigma^{2}_{\tau_{1} \boxplus \tau_{2}} X= X_{\tau_{1}} + X_{\tau_{2}},
      \qquad
      \Sigma^{2}_{\tau_{1} \boxtimes \tau_{2}} X = X_{\tau_{1}} \times X_{\tau_{2}},
    \\
    & \Sigma^{2}_{\mu\alpha.\,\tau} X= X_{\tau[\mu\alpha.\,\tau/\alpha]},
      \qquad
   \Sigma^{2}_{\arty{\tau_{1}}{\tau_{2}}}X= X_{\tau_{2}} \comp (- + \check\tau_{1}),
    \\
    & \Sigma^{3}_{\tau} X = \coprod_{\tau_{1} \in \Tyl}
      \coprod_{\tau_{2} \in \Tyl}X_{\tau_{1} \boxplus \tau_{2}} \times X_{\arty{\tau_{1}}{\tau}}
      \times X_{\arty{\tau_{2}}{\tau}},
    \\
    & \Sigma^{4}_{\tau} X =
      \coprod_{\sigma \c \tau = \sigma[\mu \alpha.\sigma/\alpha]}X_{\mu\alpha.\sigma}.
  \end{aligned}
\end{equation}
The family $\Lambda$ of $\alpha$-equivalent \textbf{FPC}-terms is isomorphic to the
initial algebra $\mS$ of $\Sigma$.
Let $\Pow_{\star}$ be the pointwise
powerset functor $\Pow_{\star}(X) = \Pow \comp X$. The
behaviour bifunctor $B$ is given by
\begin{equation}
  \label{eq:bapp}
  \begin{aligned}
    B(X,Y)
    &\;= \llangle X,Y \rrangle \times \Pow_{\star}(Y + D(X,Y)),\\
    D_{\arty{\tau_{1}}{\tau_{2}}}(X,Y)
    &\;= Y_{\tau_{2}}^{X_{\tau_{1}}},
    &\hspace{-5.5em}
      D_{\tau_1\boxplus\tau_2}(X,Y)
    &\;= Y_{\tau_1}+Y_{\tau_2}, \\
    D_{\mu\alpha.\,\tau}(X,Y)
    &\;= Y_{\tau[\mu\alpha.\tau/\alpha]},
    &\hspace{-5.5em}
      D_{\tau_1\boxtimes\tau_2}(X,Y)
    &\;= Y_{\tau_1} \times Y_{\tau_2}.
  \end{aligned}
\end{equation}
where $Y_{\tau_{2}}^{X_{\tau_{1}}}$ is the exponential in $\Set^{\fset/{\Ty}}$
and the bifunctor $\llangle \argument,\argument \rrangle$ is
\[
  \llangle X,Y \rrangle_{\tau}(\Gamma) = \Set^{\fset/{\Tyl}}\Bigl(\prod_{x \in
    |\Gamma|}X_{\Gamma(x)}, Y_{\tau}\Bigr).
\]
The bifunctor $\llangle \argument,\argument \rrangle$ models
\emph{simultaneous substitution}. For example, there is a morphism $\gamma_{0}
\c \Lambda \to \llangle \Lambda,\Lambda \rrangle$ mapping terms of \textbf{FPC}
to their substitution structure: given $t\in \Lambda_\tau(\Gamma)$, the natural transformation $(\gamma_0)_\tau(t)\colon \prod_{x \in
  |\Gamma|}\Lambda_{\Gamma(x)}\to \Lambda_{\tau}$ is given at component $\Delta\in \fset/\Ty$ by
\[ \vec{u}\in \prod_{x \in
    |\Gamma|}\Lambda_{\Gamma(x)}(\Delta) \quad\mapsto \quad t[\vec{u}]\in \Lambda_\tau(\Delta),\] i.e.\ the simultaneous substitution of $\vec{u}$ for the variables of $t$. 

\paragraph*{Higher-order GSOS law.} The $V$-pointed pointed higher-order GSOS law 
\begin{equation}
  \label{eq:lawapp}
  \rho_{\nu \c V \to X,Y} \c \Sigma(X \times B(X,Y)) \to
  B(X,\Sigma^{\star}(X + Y))
\end{equation}
for \fpc is the pairing of the two components
\[
  \begin{aligned}
  & \rho^{1}_{\nu \c V \to X,Y} \c \Sigma(X \times \llangle X,Y \rrangle
  \times \Pow_{\star}(Y +D(X,Y))) \to
    \llangle X, \Sigma^{\star}(X + Y) \rrangle \\
  & \rho^{2}_{\nu \c V \to X,Y} \c \Sigma(X \times \llangle X,Y \rrangle
    \times \Pow_{\star}(Y +D(X,Y))) \to
    \Pow_{\star}(\Sigma^{\star}(X + Y) + D(X,\Sigma^{\star}(X + Y))). \\
  \end{aligned}
\]
Component $\rho^{1}$ is produced by the canonical \emph{$V$-pointed
  strength}~\cite[p. 6]{DBLP:conf/csl/FioreH10} (see also
\cite{DBLP:conf/lics/Fiore08}) of endofunctor $\Sigma \c
(\Set^{\fset/{\Ty}})^{\Ty} \to (\Set^{\fset/{\Ty}})^{\Ty}$, and makes use of the
adjunction $\argument \otimes Y \dashv \llangle Y, \argument \rrangle$, where
$\argument \otimes \argument$ is the so-called \emph{substitution tensor}, given by
\begin{equation*}
  (X \otimes_{\tau} Y) (\Gamma) = \int^{\Delta \in \Set^{\fset/{\Tyl}}} X_{\tau}(\Delta)
  \times \prod_{i \in |\Delta|}Y_{\Delta(i)}(\Gamma).
\end{equation*}
The unit of of the tensor is the presheaf of variables $V$. A $V$-pointed
strength for an endofunctor $F$ is given by a family of maps
\[
  \mathrm{str}^{F}_{X,\nu \c V \to Y} \c F X \otimes Y
  \to F (X \otimes Y),
\]
natural in X and $Y/V$. Let $\mathrm{str}_{X,\nu \c V \to Y} \c \Sigma X \otimes
Y \to \Sigma (X \otimes Y)$ be the canonical $V$-pointed strength of $\Sigma$.
The component $\rho^{1}$ is given in terms of $\mathrm{str}$ as the adjoint
transpose of the morphism
\[
  \begin{tikzcd}
  \Sigma(X \times \llangle X,Y \rrangle
  \times \Pow_{\star}(Y +D(X,Y))) \otimes X
  \ar{rr}{\Sigma(\outr_{2})\otimes \id}
  & & \Sigma(\llangle X,Y \rrangle) \otimes X
  \ar{rr}{\mathrm{str}_{\llangle X,Y \rrangle,X}}
  & & \Sigma(\llangle X,Y \rrangle \otimes X)
  \ar{rr}{\Sigma(\mathrm{eval_{X,Y}})}
  & & \Sigma X
  \ar{r}{\theta \comp \Sigma \inl}
  & \Sigma^{\star}(X+Y),
  \end{tikzcd}
\]
where $\theta_{X} \c \Sigma \to \Sigma^{\star}$ is the embedding of $\Sigma$ to
its free monad. Note that that applying strength $\mathrm{str}_{\llangle X,Y
  \rrangle,X}$ is correct, as $X$ is $V$-pointed. Our setup ensures that in the canonical operational model
\begin{equation*}
  \langle \gamma_{0},\gamma \rangle \c \Lambda \to \llangle \Lambda,\Lambda \rrangle \times
  \Pow_{\star}(Y + D(\Lambda,\Lambda)),
\end{equation*}
map $\gamma_{0}$ yields the substitution structure of terms in $\Lambda$ (see
e.g. \cite[Prop. 5.9]{gmstu23}).
Family $\rho^{2}$ is responsible for the computational behaviour of terms and
is given below. We keep the injection maps $\inl,\inr$ on the codomain implicit,
and annotate terms by types to improve readability.

\begin{equation*}
  \begin{aligned}
    &\rho^{2}_{\nu \c V \to X,Y} \c
    &\hspace{0em} \Sigma(X \times \llangle X,Y \rrangle
      \times \Pow_{\star}(Y +D(X,Y)))
    & \to \Pow_{\star}(\Sigma^{\star}(X + Y) + D(X,\Sigma^{\star}(X + Y))) \\*
    &\rho^{2}_{\nu \c V \to X,Y}\makebox[0pt][l]{$(tr) = \texttt{case}~tr~\texttt{of}$} \\
    &\quad\makebox[0em][l]{$\mathsf{var}~x : \tau$}
    && \mapsto~ \varnothing \\
    &\quad\makebox[0em][l]{$\mathsf{lam}_{\tau_{1},\tau_{2}} x\c
      \tau_{1}.\,(t ,(v \in \llangle X,Y\rrangle_{\tau_{2}}(\Gamma + \check\tau_{1})),N)$}
    && \mapsto~ \{\lambda (e \in X_{\tau_{1}}(\Gamma)). v(x_{1},\dots,x_{\Gamma(n)},e)\} \\
    &\quad\makebox[0em][l]{$t \oplus_{\tau} s$}
    && \mapsto~ \{t,s\} \\
    &\quad\makebox[0em][l]{$\inl_{\tau_1,\tau_2}(t,v,N)$}
    && \mapsto~ \{t\} \\
    &\quad\makebox[0em][l]{$\inr_{\tau_1,\tau_2}(t,v,N)$}
    && \mapsto~ \{t\} \\
    &\quad\makebox[5em][l]{$\mathsf{case}_{\tau_1,\tau_2,\tau_3}((t,v,N),(s,u,U),(r,w,W))$} 		
    && \mapsto~ \Biggl(t_{f} = \begin{cases}
                  \makebox[10em][l]{$\mathsf{case}_{\tau_1,\tau_2,\tau_3}(f,s,r)$} \text{\quad if $f \in Y_{\tau_1\boxplus\tau_2}$} \\
                  \makebox[10em][l]{$\mathsf{app}_{\tau_{1},\tau_{3}}(s,t')$} \text{\quad if $f=\inl_{\tau_{1},\tau_{2}}(t')$} \\
                  \makebox[10em][l]{$\mathsf{app}_{\tau_{2},\tau_{3}}(r,t')$} \text{\quad if $f=\inr_{\tau_{1},\tau_{2}}(t')$}
                               \end{cases} \Biggr)_{f \in N} \\
    &\quad\makebox[0em][l]{$\fst_{\tau_1,\tau_2}(t,v,N)$}
    && \mapsto~ \Biggl(\begin{cases}
      \makebox[10em][l]{$\fst_{\tau_1,\tau_2}(f)$} \text{\quad if $f\in Y_{\tau_1\boxtimes\tau_2}$} \\
      \makebox[10em][l]{$\outl(f)$} \text{\quad if $f\in Y_{\tau_1}\times Y_{\tau_2}$}
    \end{cases}\Biggr)_{f \in N} \\
    &\quad\makebox[0em][l]{$\snd_{\tau_1,\tau_2}(t,v,N)$}
    && \mapsto~\Biggl(\begin{cases}
      \makebox[10em][l]{$\snd_{\tau_1,\tau_2}(f)$} \text{\quad if $f\in Y_{\tau_1\boxtimes\tau_2}$} \\
      \makebox[10em][l]{$\outr_{2}(f)$} \text{\quad if $f\in Y_{\tau_1}\times Y_{\tau_2}$}
    \end{cases}\Biggr)_{f \in N} \\
    &\quad\makebox[0em][l]{$\mathsf{pair}_{\tau_1,\tau_2}((t,v,N),(s,u,U))$} 
    && \mapsto~ \{(t,s)\}\\
    &\quad\makebox[0em][l]{$\mathsf{app}_{\tau_1,\tau_2}((t,v,N),(s,u,U))$}
    && \mapsto~ \Biggl(t_{f} = \begin{cases}
      \makebox[10em][l]{$f(s)$} \text{\quad if $f \in Y_{\tau_{2}}^{X_{\tau_{1}}}$} \\
      \makebox[10em][l]{$\mathsf{app}_{\tau_1,\tau_2}(f, s)$} \text{\quad if $f \in Y_{\arty{\tau_{1}}{\tau_{2}}}$}
    \end{cases}\Biggr)_{f \in N} \\
    &\quad\makebox[0em][l]{$\mathsf{fold}_\tau(t,v,N)$}
    && \mapsto~ \{t \in X_{\tau[\mu\alpha.\tau/\alpha]}\} \\
    &\quad\makebox[0em][l]{$\mathsf{unfold}_\tau(t,v,N)$}
    && \mapsto~ \Biggl(t_{f} = \begin{cases}
                  \makebox[10em][l]{$\mathsf{unfold}_\tau(f)$} \text{\quad if $f \in Y_{\mu \alpha.\,\tau}$} \\
                  \makebox[10em][l]{$f$} \text{\quad if $f \in Y_{\tau[\mu\alpha.\,\tau/\alpha]}$}
                               \end{cases}\Biggr)_{f \in N}
\end{aligned}
\end{equation*}

The operational semantics of \fpc involves rules such as
\[
  \inference{}{\mathsf{unfold}(\mathsf{fold}(t)) \to t},
  \qquad
  \inference{}{\mathsf{case}(\inl(t),s,r)\to s\app{}{}t}
  \text{\quad and \quad}
  \inference{}{\mathsf{case}(\inr(t),s,r)\to
    r\app{}{}t}
\]
which, strictly speaking, are not GSOS: the rules seemingly ``pattern match''
the shape of subterms to decide the next transition. In such instances, the
respective subterms (e.g. $\mathsf{fold}(t)$, $\inl(t)$ and $\inr(t)$) are
understood as \emph{values}. The way such rules are modelled is by identifying
these terms as values in the behaviour functor, by adding the respective
constructor. For example, with $B(X,Y) = \llangle X,Y \rrangle \times
\Pow_{\star}(Y + D(X,Y))$ and $D_{\tau_1\boxplus\tau_2}(X,Y) =
Y_{\tau_1}+Y_{\tau_2}$, we identify the injections $\inl$ and $\inr$ as value
constructors. In addition, the behaviour of $\lambda$-abstractions is that of a
function on terms of the suitable type.

\paragraph*{Operational model.} The higher-order GSOS law $\rho$ for \fpc induces the operational model
\begin{equation*}
  \langle \gamma_{0},\gamma \rangle \c \Lambda \to \llangle \Lambda,\Lambda \rrangle \times
  \Pow_{\star}(Y + D(\Lambda,\Lambda)).
\end{equation*}
The component $\gamma_0$ is the substitution map described earlier, and the component $\gamma$ models the call-by-name
semantics of \textbf{FPC}. For every $\Gamma \vdash t \c \tau$, we have that
$\gamma_\tau(\Gamma)(t)$ is the smallest subset of
$B(\Lambda,\Lambda)_\tau(\Gamma)$ satisfying the following:
\begin{flalign*}
  \gamma_\tau(\Gamma)(t)\ni\;&t' \text{\qquad if $t \to t'$, for $\Gamma \vdash t'\c\tau$,}\\*
    \gamma_{\arty{\tau_{1}}{\tau_{2}}}(\Gamma)(t) \ni\;& \lambda e.\,s[e/x]
   \text{\qquad if $t = \mathsf{lam}\,x\c\tau_{1}.s$, for $\Gamma, \tau_{1} \vdash s
                                                         \c \tau_{2} $,}\\*
  \gamma_{\tau_1\boxplus\tau_2}(\Gamma)(t)\ni\;&t'
                                                 \text{\qquad if $t = \inl(t')$,} \\*
  \gamma_{\tau_1\boxplus\tau_2}(\Gamma)(t)\ni\;&t'
                                                 \text{\qquad if $t =
                                                 \inr(t')$,} \\*
  \gamma_{\tau_1\boxtimes\tau_2}(\Gamma)(t)\ni\;&(t_{1},s_{1})
                                                  \text{\qquad if $t = \mathsf{pair}(t_{1},s_{1})$,} \\*
    \gamma_{\mu \alpha.\,\tau}(\Gamma)(t) \ni\;& t' \text{\qquad if $t = \mathsf{fold}(t')$}
\end{flalign*}
The weak operational model associated to \textbf{FPC} is $\langle
\gamma_{0},\wt{\gamma} \rangle \c \Lambda \to
\llangle \Lambda,\Lambda \rrangle \times \Pow_{\star}(Y + D(\Lambda,\Lambda))$,
where $\wt{\gamma}$ models $\To$. Alternatively, we could consider a
transition system in the style of $\myarr{}$ from \Cref{not:myarr} to later
obtain a slightly more expressive logical relation.

\paragraph{A free logical relation for \textbf{FPC}}
Having set-up our higher-order GSOS law, we move on to the logical relation.
First we have to specify our relation lifting $\ol{B}$ on the behaviour functor $B$
\eqref{eq:bapp}. We let
\begin{equation}
  \label{eq:liftingapp}
  \ol{B}(Q,R) = \ol{\llangle Q,R \rrangle} \times \ol{\Pow_{\star}}(R + \ol{D}(Q,R)),
\end{equation}
where $Q,R \mapsto \ol{\llangle Q,R \rrangle}$ is the canonical lifting for the
bifunctor $\llangle \argument, \argument \rrangle$, $\ol{\Pow_{\star}}$ is the left-to-right
Egli-Milner relation lifting and $\ol{D}$ the canonical relation lifting of $D$.
Applying \Cref{def:stepindexed} on the two coalgebras
\[
  \langle \gamma_{0},\gamma \rangle ,
  \langle \gamma_{0},\wt{\gamma} \rangle \c \Lambda \to
  \llangle \Lambda,\Lambda \rrangle \times \Pow_{\star}(Y + D(\Lambda,\Lambda))
\]
yields the following step-indexed logical relation. Note that in this
nondeterministic setting \Cref{def:stepindexed} does not terminate at $\omega$ steps.

\begin{definition} The logical relation $\logrel$ for \textbf{FPC} is the family
  of relations $(\logrel^\alpha \monoto \Lambda\times \Lambda)_{\alpha\leq 2^{\omega}}$ defined inductively by
\begin{align*}
 \logrel^{0}_{\tau}(\Gamma) =&\; \top_{\tau}(\Gamma)  = \{(t,s) \mid \Gamma \vdash t,s \c \tau\}&\\
  \kern.5em\logrel^{\alpha+1}_{\tau} =&\; \logrel^{\alpha}_{\tau}
     \cap \mathcal{S}_{\tau}(\logrel^{\alpha},\logrel^{\alpha})
      \cap \mathcal{E}_{\tau}(\logrel^{\alpha})
      \cap \mathcal{V}_{\tau}(\logrel^{\alpha},\logrel^{\alpha})\\
     \logrel^{\alpha}_\tau(\Gamma) =&\; \bigcap_{\beta < \alpha} \logrel^{\alpha}_\tau(\Gamma)
     \quad \text{for limit ordinals $\alpha$},
\end{align*}
where $\mathcal{S}$, $\mathcal{E}$,
$\mathcal{V}$ are the maps on $\Rel_{\Lambda}$ given by
  \begin{align*}
    \mathcal{S}_{\tau}(\Gamma)(Q,R)
      & = \{(t,s) \mid \text{for all $\Delta$ and $Q_{\Gamma(x)}(\Delta)(u_x,v_x)$ ($x\in \under{\Gamma}$)}, \text{ one has $R_{\tau}(\Delta)(t[\vec{u}],s[\vec{v}])$} \}, \\
    \mathcal{E}_{\tau}(\Gamma)(R)
      & = \{(t,s) \mid \text{if $t \to t'$ then $\exists s'.\,s \To s'
      \land R_{\tau}(\Gamma)(t',s')$}\}, \\
    \mathcal{V}_{\tau_{1} \boxplus \tau_{2}}(\Gamma)(Q,R)
      & = \{(t,s)
        \mid \text{if $t = \inl_{\tau_{1},\tau_{2}}(t')$ then }
        \exists s'.\,s \To \inl_{\tau_{1},\tau_{2}}(s')
        \land R_{\tau_{1}}(\Gamma)(t',s')\} \;{\cup} \\ & \phantom{=}\;\; \{(t,s)
        \mid \text{if $t = \inr_{\tau_{1},\tau_{2}}(t')$ then }
        \exists s'.\,s \To \inr_{\tau_{1},\tau_{2}}(s')
        \land R_{\tau_{2}}(\Gamma)(t',s')\}, \\
    \mathcal{V}_{\tau_{1} \boxtimes \tau_{2}}(\Gamma)(Q,R)
      & = \{(t,s) \mid
        \text{if $t = \mathsf{pair}_{\tau_{1},\tau_{2}}(t_{1},t_{2})$ then }
        \exists s_{1},s_{2}.\,s \To \mathsf{pair}_{\tau_{1},\tau_{2}}(s_{1},s_{2})
        \land R_{\tau_{1}}(\Gamma)(t_{1},s_{1}) \land R_{\tau_{2}}(\Gamma)(t_{2},s_{2})\}, \\
    \mathcal{V}_{\mu\alpha.\tau}(\Gamma)(Q,R)
      & = \{(t,s) \mid
        \text{if $t = \mathsf{fold}_{\tau}(t')$ then }
        \exists s'.\,s \To \mathsf{fold}_{\tau}(s')
        \land R_{\tau[\mu\alpha.\tau/\alpha]}(\Gamma)(t',s')\}, \\
    \mathcal{V}_{\arty{\tau_{1}}{\tau_{2}}}(\Gamma)(Q,R)
      & = \{(t,s) \mid
        \text{for all $Q_{\tau_{1}}(\Gamma)(e,e'),$} \text{ if $t=\lambda x.t'$ then $\exists s'.\,s \To \lambda x.s'\,\wedge\,R_{\tau_2}(\Gamma)(t'[e/x], s'[e'/x])$} \}.
  \end{align*}
\end{definition}

Note that termination after $2^\omega$ steps is guaranteed since $|\Rel_\Lambda|\leq 2^\omega$.
\begin{remark}
  Applying \Cref{def:stepindexed} in \textbf{FPC} produces logical relations
  that are closed under arbitrary substitutions by definition, whereas logical
  relations in the literature are often defined on closed terms, then extended
  to open terms in a subsequent step. The two constructions can be shown to be
  equivalent (see e.g. \cite[E.10]{utgms23_arxiv} for a similar argument).
\end{remark}

To prove congruence and reflexivity of $\logrel^{2^{\omega}}$, we first instantiate
the data to \Cref{asm-cong}, in a manner that is largely similar to
\Cref{ex:mutcl-asm}. In detail, we pick
\begin{enumerate}
\item the signature functor $\Sigma$ \eqref{eq:sigmalamapp},
\item the behaviour functor $B(X,Y) = \llangle X,Y \rrangle \times \Pow_{\star}(Y +
  D(X,Y))$~\eqref{eq:bapp} and its lifting $\overline{B}$~\eqref{eq:liftingapp},
\item the higher-order GSOS law $\rho = \langle \rho^{1},\rho^{2} \rangle$~\eqref{eq:lawapp}.
\end{enumerate}
Let us check that the higher-order GSOS law $\rho$ lifts to a
$\Delta_{V}$-pointed higher-order GSOS law
\[
  \ol{\rho}_{\nu \c \Delta_{V} \to Q,R} \c \ol{\Sigma}(Q \times \ol{B}(Q,R)) \to
 \ol{B}(X,\ol{\Sigma}^{\star}(Q + R)).
\]
It suffices to show that the following diagram commutes for each $Q \in
  \Delta_{V}/\RelCat{(\Set^{\fset/{\Ty}})^{\Ty}}$ and $R \in
  \RelCat{(\Set^{\fset/{\Ty}})^{\Ty}}$:
  \begin{equation*}
    \begin{tikzcd}[column sep=5em]
      \ol{\Sigma}(Q \times \ol{B}(Q,R)) \ar[d,tail] \ar[r,dashed]
      & \ol{B}(Q,\ol{\Sigma}^{\star}(Q + R))
      \ar[d,tail]\\
      \Sigma(X \times B(X,Y))^{2} \ar[r,"\rho \times \rho"]
      & B(X,\Sigma^{\star}(X + Y))^{2}
    \end{tikzcd}
  \end{equation*}
The above diagram decomposes into the following two diagrams:
\begin{equation}
  \label{eq:liftingdiag2}
  \begin{tikzcd}[column sep=5em]
    \ol{\Sigma}(Q \times \ol{B}(R,Q)) \ar[d,tail] \ar[r,dashed]
    & \ol{\llangle Q, \ol{\Sigma}^{\star}(Q + R) \rrangle}
    \ar[d,tail]\\
    \Sigma(X \times B(X,Y))^{2} \ar[r,"\rho^{1}\,{\times}\,\rho^{1}"]
    & \llangle X, \Sigma^{\star}(X + Y) \rrangle^{2}
  \end{tikzcd}
  \qquad
  \begin{tikzcd}[column sep=5em]
    \ol{\Sigma}(Q \times \ol{B}(Q,R)) \ar[d,tail] \ar[r,dashed]
    & \ol{\Pow_{\star}}(\ol{\Sigma}^{\star}(Q + R) + \ol{D}(R,\ol{\Sigma}^{\star}(Q + R)))
    \ar[d,tail]\\
    \Sigma(X \times B(X,Y))^{2} \ar[r,"\rho^{2}\,\times\,\rho^{2}"]
    & \Pow_{\star}(\Sigma^{\star}(X + Y) + D(X,\Sigma^{\star}(X + Y)))^{2}
  \end{tikzcd}
\end{equation}
The left diagram commutes by definition of $\rho^{1}$, as it stems from the
$V$-pointed strength $\mathrm{str}$ of $\Sigma$, which admits a canonical
relation lifting. It remains to consider commutativity of the diagram on the right. The
left-to-right Egli-Milner relation lifting $\ol{\Pow_{\star}}$ works as follows:
Given a relation $R \seq X \times X$ then
\[
  (V,U) \in \ol{\Pow_{\star}}_{\tau}(R)(\Gamma) \iff
  \forall t \in V \subseteq X_{\tau}(\Gamma).\,\exists s \in U \subseteq
  X_{\tau}(\Gamma) \text{\quad with \quad} R_{\tau}(\Gamma)(t,s).
\]
With the above in mind, we interpret commutativity of the right diagram in
\eqref{eq:liftingdiag2} as a kind of monotonicity condition on the operational
rules of \textbf{FPC}, applying on each $\mathsf{f} \in \Sigma$. In particular:
\begin{enumerate}
\item Case of variables. Let $x \c \tau \in V_{\tau}(\Gamma)$. The condition
  says that for all $f_{1} \in \rho^{2}(x) = \varnothing$, there exists $f_{2} \in
  \rho^{2}(x) = \varnothing$ such that $\ol{\Sigma}^{\star}(Q + R) +
  \ol{D}(Q,\ol{\Sigma}^{\star}(Q + R))_{\tau}(\Gamma)(f_{1},f_{2})$. This is
  trivially true.
\item Case $\mathsf{app}_{\tau_{1},\tau_{2}}$. Assume terms $t_{1},t_{2}
  \in X_{\arty{\tau_{1}}{\tau_{2}}}(\Gamma)$ and $s_{1},s_{2} \in
  X_{\tau_{1}}(\Gamma)$, with
  $Q_{\arty{\tau_{1}}{\tau_{2}}}(\Gamma)(t_{1},t_{2})$ and
  $Q_{\tau_{1}}(\Gamma)(s_{1},s_{2})$, and behaviours
  $V_{1},V_{2} \in \Pow_{\star}(Y + D(X,Y))$, such that for each $f_{1} \in
  V_{1}$ there exists $f_{2} \in V_{2}$ with $(R +
  \ol{D}(Q,R))_{\tau}(\Gamma)(f_{1},f_{2})$. For each $i \in \{1,2\}$, the
  operational rules maps each term
  $\mathsf{app}_{\tau_{1},\tau_{2}}(t_{i},s_{i}) = t_{i} \app{}{} s_{i}$ to the set
  \[
    Z_{i} = \Biggl(t_{f_{i}} =
    \begin{cases}
      \makebox[10em][l]{$f_{i}(s_{i})$}
      \text{\quad if $f_{i} \in Y_{\tau_{2}}^{X_{\tau_{1}}}$} \\
      \makebox[10em][l]{$\mathsf{app}_{\tau_1,\tau_2}(f_{i}, s_{i})$}
      \text{\quad if $f_{i} \in Y_{\arty{\tau_{1}}{\tau_{2}}}$}
    \end{cases}\Biggr)_{f_{i} \in V_{i}}
  \]
  The lifting condition asserts that for each $t_{f_{1}} \in Z_{1}$ there exists
  a related $t_{f_{2}} \in Z_{2}$, which is true. Intuitively, the
  condition holds because for each derivation of $t_{1} \app{}{} s_{1}$, there
  is always a derivation for $t_{2} \app{}{} s_{2}$ where the conclusions are
  related. For instance:
  \[
    \inference{t_{1}\to f_{1}}{t_{1} \app{\tau_{1}}{\tau_{2}} s_{1}\to f_{1}
      \app{\tau_{1}}{\tau_{2}} s_{1}}
    \text{\quad and \quad}
    \inference{t_{2}\to f_{2}}{t_{2} \app{\tau_{1}}{\tau_{2}} s_{2}
      \to f_{2} \app{\tau_{1}}{\tau_{2}} s_{2}}
    \text{\quad with \quad}
    \ol{\Sigma}^{\star}(R + Q)_{\tau_{2}}(\Gamma)(f_{1}\app{\tau_{1}}{\tau_{2}} s_{1},
    f_{2} \app{\tau_{1}}{\tau_{2}} s_{2}).
  \]
\item Case $\mathsf{lam}_{\tau_{1},\tau_{2}}$. Assume terms $t_{1},t_{2} \in
  X_{\tau_{2}}(\Gamma + \check\tau_{1})$ with $Q_{\tau_{2}}(\Gamma +
  \check\tau_{1})(t_{1},t_{2})$ and substitution structures $v_{1},v_{2} \in
  \llangle X,Y \rrangle_{\tau_{2}}(\Gamma)$ such that, given substitutions
  $u_{1},u_{2}$ whose components are pointwise in
  $Q$, then $R_{\tau_{2}}(\Delta)(v_{1}[u_{1}],v_{2}[u_{2}])$. The condition
  says that for all $e_{1},e_{2} \in X_{\tau_{1}}(\Gamma)$ with
  $Q_{\tau_{1}}(\Gamma)(e_{1},e_{2})$,
  $R_{\tau_{2}}(\Gamma)(v_{1}[x_{1},\dots,x_{|\Gamma|},e_{1}],v_{2}[x_{1},\dots,x_{|\Gamma|},e_{2}])$,
  which is true because $v_{1},v_{2}$ and $e_{1},e_{2}$ are appropriately
  related and $Q$ is pointed, hence contains the variables $x_{i}$.
\end{enumerate}
The rest of the operations work similarly to either the case of applications and
$\lambda$-abstractions.

\begin{theorem}
  Relation $\logrel^{2^{\omega}}$ is a (reflexive) congruence.
\end{theorem}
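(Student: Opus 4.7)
The plan is to invoke the generic congruence result \autoref{cor:main} and the Fundamental Property \autoref{cor:fundamental-property}, applied to the instantiation of the data already laid out above (signature functor $\Sigma$ of \fpc, bifunctor $B$ with its canonical lifting $\overline{B}$, higher-order GSOS law $\rho$ with its lifting $\overline{\rho}$, and the pointwise powerset functor equipped with its left-to-right Egli--Milner lifting, whose order is pointwise inclusion). The preceding discussion has already verified \autoref{assumptions} and \autoref{asm-cong}\ref{asm-cong-1}-\ref{asm-cong-3}; the commutativity analysis of the two squares in \eqref{eq:liftingdiag2} discharges \autoref{asm-cong}\ref{asm-cong-4}. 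It remains only to check the lax bialgebra hypothesis and to identify $\logrel^{2^{\omega}}$ with the generic relation $\square^{\nu}\top$ from \autoref{def:stepindexed}.

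First I would verify that $(\Lambda,\iota,\langle \gamma_0,\widetilde{\gamma}\rangle)$ forms a lax $\rho$-bialgebra. This is the language-specific core of the argument and amounts, exactly as in \autoref{ex:lax-bialgebra}, to observing that each operational rule of \fpc remains sound when its conclusion $\to$ is replaced by the reflexive-transitive closure $\To$. This is routine case analysis over the rule set: for each congruence-style rule, soundness follows by iterating the corresponding strong rule finitely often along a reduction sequence; for each value-producing or binding rule (e.g.\ $\beta$-reduction, $\mathsf{unfold}\,\mathsf{fold}$, $\mathsf{case}$-on-injection, $\mathsf{fst}/\mathsf{snd}$-on-pair, $\oplus$), the weak version follows trivially from the strong one because $\to \,\seq\, \To$. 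The substitution component is compatible with $\widetilde{\gamma}$ by the same analysis as for $\gamma$, since $\gamma_0$ records only the substitution structure of terms, which is invariant under reduction.

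Once laxness is established, \autoref{cor:main} applies and yields that $\square^{\alpha}\top = \square^{\overline{B},\langle\gamma_0,\gamma\rangle,\langle\gamma_0,\widetilde{\gamma}\rangle,\alpha}\top$ is a $\Sigma$-congruence on $\Lambda$ for every ordinal $\alpha$. Unfolding \autoref{def:stepindexed} with the chosen relation lifting $\overline{B}$ recovers precisely the clauses defining $\logrel^{\alpha}$: the factor $\overline{\llangle Q,R\rrangle}$ produces the substitution closure condition $\mathcal{S}$, the $\overline{\Pow}_{\star}$ wrapper produces the evolution condition $\mathcal{E}$, and the summands of $\overline{D}$ at each type shape yield the type-directed value clauses $\mathcal{V}_{\tau_1\boxplus\tau_2}$, $\mathcal{V}_{\tau_1\boxtimes\tau_2}$, $\mathcal{V}_{\mu\alpha.\tau}$ and $\mathcal{V}_{\arty{\tau_1}{\tau_2}}$. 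Hence $\square^{\alpha}\top = \logrel^{\alpha}$ at each stage, and the stabilization ordinal $\nu$ of \autoref{rem:logrel}\ref{rem:logrel-4} is bounded by $|\RelCat[\Lambda]{\C}| \leq 2^{\omega}$, so $\logrel^{2^{\omega}} = \square^{\nu}\top$ is a congruence.

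Reflexivity is then immediate from \autoref{cor:fundamental-property}: since $\Lambda = \mu\Sigma$ is an initial algebra, every congruence on it is reflexive by \autoref{prop:cong-refl}. The main obstacle in this plan is the lax bialgebra check --- not because any individual rule is hard, but because the nondeterministic rules for $\oplus$ together with the substitution/$\beta$-reduction interaction require care to ensure that the lax inequality in \autoref{D:lax-bialg} points in the correct direction under the left-to-right Egli--Milner lifting; the up-closedness of $\overline{\Pow}_{\star}\overline{B}(R,S)$ w.r.t.\ pointwise inclusion (\autoref{ex:up-closed}\ref{ex:up-closed-2}) is exactly what makes this direction coherent, and it is what the asymmetric Egli--Milner choice buys us.
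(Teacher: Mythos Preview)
Your proposal is correct and follows essentially the same approach as the paper: verify that $(\Lambda,\iota,\langle\gamma_0,\widetilde{\gamma}\rangle)$ is a lax $\rho$-bialgebra by checking that the \fpc rules remain sound for weak transitions, and then apply \autoref{cor:main} and \autoref{cor:fundamental-property}. The paper's own proof is a two-line invocation of exactly these results; your version simply spells out more of the routine verifications (the rule-by-rule laxness check, the identification $\square^{\alpha}\top=\logrel^{\alpha}$, and the bound on the stabilization ordinal) that the paper leaves implicit.
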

\begin{proof}
  All rules of \textbf{FPC} are sound for the weak transition system $\To$,
  hence $(\rho, \iota, \langle \gamma_{0},\wt{\gamma} \rangle)$ forms a lax
  $\rho$-bialgebra. As such, applying \Cref{cor:main} and \Cref{cor:fundamental-property}
  gives the desired result.
\end{proof}

Following other examples in the literature of nondeterministic higher-order
languages (see for instance
\cite{DBLP:conf/csl/BirkedalST12,DBLP:journals/corr/BirkedalBS13,DBLP:journals/pacmpl/AguirreB23}),
one suitable choice of observable relation $O \monoto \Lambda \times \Lambda$
for the contextual preorder is \emph{may-termination}. To model
may-termination, we define $O$ as
$O_\tau(\Gamma) \{ (\Gamma \vdash (s,t) \c s{\Downarrow}\,\To\,
t{\Downarrow}  \}$. The category $(\Set^{\fset/{\Ty}})^{\Ty}$ and the signature
endofunctor~$\Sigma$ of \eqref{eq:sigmalamapp} satisfy the assumptions of
\Cref{prop:existence-of-contextual-preorder}, thus the contextual preorder
$\cprd$ exists. It can given in more explicit term using contexts, completely analogous to \eqref{eq:ctxprox}.

\begin{corollary}
  The logical relation $\logrel$ is sound for the contextual preorder.
\end{corollary}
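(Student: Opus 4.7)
The plan is to invoke the generic soundness theorem, Corollary~\ref{cor:soundness}. Most of its hypotheses are already in place from the preceding material: the syntax functor $\Sigma$ of~\eqref{eq:sigmalamapp}, the behaviour bifunctor $B$ of~\eqref{eq:bapp}, its relation lifting $\overline{B}$ of~\eqref{eq:liftingapp}, and the $V$-pointed higher-order GSOS law $\rho=\langle\rho^1,\rho^2\rangle$ have been shown to fit Assumptions~\ref{asm-cong}, with the required lift $\overline{\rho}$ arising from the monotonicity check performed in the two diagrams of~\eqref{eq:liftingdiag2}; the weak operational model $(\Lambda,\iota,\langle\gamma_0,\widetilde{\gamma}\rangle)$ is a lax $\rho$-bialgebra because every small-step rule of \fpc remains sound with $\to$ replaced by $\To$; and the existence of $\lesssim_O$ for the may-termination preorder follows from Theorem~\ref{prop:existence-of-contextual-preorder} applied to the presheaf category $(\Set^{\fset/\Ty})^{\Ty}$ and the polynomial functor~$\Sigma$.

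With these ingredients in hand, Corollary~\ref{cor:main} automatically gives that $\logrel=\square^{\nu}\top$ is a congruence on~$\Lambda$, and by Corollary~\ref{cor:soundness} the proof reduces entirely to establishing \emph{$O$-adequacy}, namely that
\[
  \logrel_\tau(\Gamma)(t,s) \quad\Longrightarrow\quad \bigl(t{\Downarrow} \implies s{\Downarrow}\bigr).
\]

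For adequacy, the plan is a short finite induction, mirroring the argument given for \stscr in the proof of Corollary~\ref{cor:sound}. If $t\Downarrow$, then there is a finite reduction $t\to t_1\to\cdots\to t_n=v$ reaching a term $v$ that exhibits a value-form transition under $\gamma$ (a $\lambda$-abstraction, an $\inl$/$\inr$, a $\mathsf{pair}$, or a $\mathsf{fold}$). Since $\logrel\leq\square^{n+1}\top$, one iteratively unfolds the $\mathcal{E}$-clause $n$ times to produce a weak reduction $s\To s_n$ with $\logrel^{1}_{\tau}(\Gamma)(v,s_n)$, and a final application of the matching $\mathcal{V}$-clause yields $s_n\To s_{n+1}$ with a value shape of the same kind, witnessing $s\Downarrow$.

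The only conceptual obstacle is that $\nu$ is in general transfinite (bounded by $2^\omega$ due to nondeterminism, as noted before the statement), so one needs to be careful that adequacy is not spoiled by the gap between finite approximants and the fixed point. This is not an issue because $\logrel$ sits below \emph{every} finite step-index $\square^{n+1}\top$ by construction, so the induction only ever invokes finitely many unfoldings of the defining map. Once adequacy at termination is in place, Corollary~\ref{cor:soundness} delivers $\logrel\leq\lesssim_O$, completing the proof.
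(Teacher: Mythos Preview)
Your proposal is correct and follows the same route as the paper's proof: invoke Corollary~\ref{cor:soundness}, for which congruence is already established and only $O$-adequacy remains to be checked. The paper dispatches adequacy in one line (``easily seen''), whereas you spell out the finite step-index argument explicitly, in the style of Corollary~\ref{cor:sound}; this is exactly the intended justification.
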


\begin{proof}
  The logical relation $\logrel$ is easily seen to an $O$-adequate congruence. Thus, by
  \Cref{cor:soundness}, it is contained in $\cprd$.
\end{proof}
The \emph{ground contextual preorder}, which restricts contexts to the $\booltype$ type, is handled analogously, cf.\ \autoref{ex:dbtilde}.

\takeout{
\begin{proof}[Proof of \Cref{th:mainweak}]
  In the following we omit superscript $\overline{B},\gamma, \widetilde{\gamma}$
  from $\invp^{\overline{B},\gamma,\widetilde{\gamma},\alpha} \top$. We prove
  that $\ol\Sigma(\invp^\alpha\top)$ is a congruence, i.e.
  $\fimg{\iota}{\ol\Sigma(\invp^\alpha\top)} \leq  \invp^\alpha\top$, by
  transfinite induction on $\alpha$.
  \begin{enumerate}
  \item We assume $\alpha$ is a limit ordinal.
  \item Assuming, $\fimg{\iota}{\ol\Sigma(\invp^\alpha\top)} \leq
    \invp^\alpha\top$, we prove $\fimg{\iota}{\ol\Sigma(\invp^{\alpha + 1}\top)}
    \leq  \invp^{\alpha + 1}\top$, which reduces to the following subgoals by
    the definition of $\invp$.
    \begin{align}
      \fimg{\iota}{\ol\Sigma(\invp^{\alpha+1}\top)}\leq
      &\; \invp^{\alpha}\top\label{eq2:abox1}\\
      \fimg{\iota}{\ol\Sigma(\invp^{\alpha+1}\top)}\leq
      &\; \iimg{(\gamma \times \widetilde{\gamma})}{\overline{B}(\invp^{\alpha+1}\top,\invp^{\alpha}\top)}\label{eq2:abox2}
    \end{align}
    By the definition of $\invp$ and induction hypothesis,
    \begin{align*}
      \fimg{\iota}{\ol\Sigma(\invp^{\alpha+1}\top)}\leq \fimg{\iota}{\ol\Sigma(\invp^{\alpha}\top)}\leq\invp^{\alpha}\top,
    \end{align*}
    which yields~\eqref{eq2:abox1}. For \eqref{eq2:abox2}, we have the following diagram:
    \[
      \begin{tikzcd}
        \mS
        \ar{dddd}{\gamma}
        & \Sigma(\mS)
        \ar[swap]{l}{\iota}
        \ar{d}[swap]{\Sigma(\id,\gamma)} & & \overline{\Sigma}(\invp^{\alpha+1}\top)
        \ar[swap,  tail]{ll}{\outl_{\overline{\Sigma}(\invp^{\alpha+1}\top)}}
        \arrow[d,dashed]
        \ar[tail]{rr}{\outr_{\overline{\Sigma}(\invp^{\alpha+1}\top)}}
        & & \Sigma(\mS) \ar[swap]{d}{\Sigma(\id,\widetilde{\gamma})}
        \ar{r}{\iota}
        \ar[phantom]{ddddr}[description, pos=.4]{\dleq{45}}
        & \mS \ar{dddd}[swap]{\widetilde{\gamma}} \\
        &
        \Sigma(\mS \times B(\mS,\mS))
        \ar{d}[swap]{\rho_{\mS,\mS}}
        & & \ar[swap,tail]{ll}{\outl_{Q}}
        \overline{\Sigma}(\invp^{\alpha+1}\top \times
        \overline{B}(\invp^{\alpha+1}\top, \invp^{\alpha}\top))
        \arrow[d]
        \ar[tail]{rr}{\outr_{Q}}
        & & \Sigma(\mS \times B(\mS,\mS)) \ar{d}[swap]{\rho_{\mS,\mS}}
        & \\
        &
        B(\mS, \Sigma^{\star}(\mS + \mS))
        \ar[equal]{d}
        & & \ar[swap,tail]{ll}{\outl_{Q'}}
        \overline{B}(\invp^{\alpha + 1}\top,
        \overline{\Sigma}^{\star}(\invp^{\alpha + 1}\top + \invp^{\alpha}\top))
        \ar[tail]{rr}{\outr_{Q'}}
        \ar[dashed]{d}
        & & B(\mS, \Sigma^{\star}(\mS + \mS))
        \ar[equal]{d}
        &
        \\
        &
        B(\mS, \Sigma^{\star}(\mS + \mS))
        \ar{dl}{B(\mS,\hat\iota \comp \nabla)}
        & &
        \ar[swap,tail]{ll}{\outl}
        \overline{B}(\invp^{\alpha + 1}\top,
        \overline{\Sigma}^{\star}(\invp^{\alpha}\top + \invp^{\alpha}\top))
        \ar[dashed]{d}
        \ar[tail]{rr}{\outr}
        & &
        B(\mS, \Sigma^{\star}(\mS + \mS))
        \ar[swap]{dr}{B(\mS,\hat\iota \comp \nabla)}
        \\
        B(\mS,\mS)
        & & &
        \ar[tail,swap]{lll}{\outl}
        \overline{B}(\invp^{\alpha + 1}\top, \invp^{\alpha}\top)
        \ar[tail]{rrr}{\outr}
        & & &
        B(\mS,\mS)
      \end{tikzcd}
    \]
  \end{enumerate}
\end{proof}
}
\end{document}